\newtheorem{theorem}{Theorem}%
\newtheorem{lemma}{Lemma}%
\newtheorem{definition}{Definition}
\newtheorem{corollary}{Corollary}%
\let\DefaultPar\paragraph
\renewcommand{\paragraph}[1]{\DefaultPar{\textbf{#1}.}}
\DeclareMathAlphabet{\mathpzc}{OT1}{pzc}{m}{it}
\newcommand{\SRef}[1]{\S\ref{#1}}
\newcommand{\AutoAdjust}[3]{\mathchoice{ \left #1 #2  \right #3}{#1 #2 #3}{#1 #2 #3}{#1 #2 #3} }
\newcommand{\Abs}[1]{{\AutoAdjust{\lvert}{#1}{\rvert}}}
\newcommand{\InBrackets}[1]{\AutoAdjust{[}{#1}{]}}
\newcommand{\RangeMN}[2]{\AutoAdjust{\{}{{#1},\ldots,{#2}}{\}}}
\newcommand{\RangeAB}[2]{\AutoAdjust{[}{{#1},{#2}}{]}}
\newcommand{\RangeN}[1]{\InBrackets{#1}}
\newcommand{\Floor}[1]{\AutoAdjust{\lfloor}{{#1}}{\rfloor}}
\newcommand{\Ex}[2][]{\operatorname{\mathbf E}_{#1}\InBrackets{#2}}
\newcommand{\Prx}[2][]{\operatorname{\mathbf{Pr}}_{#1}\InBrackets{#2}}
\newcommand{\PosReals}{\mathds{R}_+}
\newcommand{\PosIntsZ}{\mathbb{N}_{0}}
\newcommand{\Lag}{L}
\newcommand{\sdot}{\;\;} 
\newcommand{\wdot}{\cdot} 
\newcommand{\idot}{\cdot} 
\newcommand{\K}{k}
\newcommand{\N}{n}
\newcommand{\XBox}{x}
\newcommand{\XBoxRV}{{\mathit{X}}}
\newcommand{\OpenProb}{s}
\newcommand{\OpenRV}{{\mathit{S}}}
\newcommand{\Indw}{\ell}
\newcommand{\BrokenRV}{{\mathit{W}}}
\newcommand{\BrokenCDFSign}{{{F}}} %
\newcommand{\BrokenCDFSignW}{\BrokenCDFSign_{\scriptscriptstyle \BrokenRV}} %
\newcommand{\BrokenCDF}[2]{\BrokenCDFSign_{\scriptscriptstyle \BrokenRV_{#1}}(#2)}
\newcommand{\BrokenSel}[2]{{G}_{{#1}}(#2)}
\newcommand{\Threshold}{\theta}
\newcommand{\Distance}{d}
\newcommand{\RV}{{\mathit{V}}}
\newcommand{\CDFRV}{F}
\newcommand{\PDFRV}{f}
\newcommand{\NumRV}{n}
\newcommand{\Stopping}{\tau}
\newcommand{\UVal}{u}
\newcommand{\VVal}{v}
\newcommand{\Mat}{\mathcal{M}}
\newcommand{\IndepSets}{\mathcal{I}}
\newcommand{\IndepSet}{I}
\newcommand{\Rank}[2][]{r^{#1}_{#2}}
\newcommand{\Weight}{w}
\newcommand{\NumAgents}{n}
\newcommand{\NumItems}{m}
\newcommand{\NumUnits}{k}
\newcommand{\MaxDemandFrac}{k}
\newcommand{\BinCount}{c}
\newcommand{\List}{L}
\newcommand{\XAAlloc}{x}
\newcommand{\XAAllocC}{\overline{x}}
\newcommand{\XAAllocA}{\hat{x}}
\newcommand{\XPAlloc}{{\mathit X}}
\newcommand{\Pay}{{\mathit P}}
\newcommand{\XPAllocs}{{\mathit X}}
\newcommand{\Pays}{{\mathit P}}
\newcommand{\Obj}{\operatorname{\textsc{Obj}}}
\newcommand{\PayFactor}{c}
\newcommand{\Valuation}{v}
\newcommand{\Rev}{R}
\newcommand{\ItemSubset}{S}
\newcommand{\Induced}[2]{{#1[#2]}}
\newcommand{\Mech}{{M}}
\newcommand{\Cons}[2]{{#1\langle {#2} \rangle}}
\newcommand{\InducedMech}[2]{[[{#2}]]_{#1}}
\newcommand{\Type}{t}
\newcommand{\Types}{{\mathbf{\Type}}}
\newcommand{\TypeSpace}{T}
\newcommand{\TypeSpaces}{{\mathbf{\TypeSpace}}}
\newcommand{\Dist}{{\mathcal D}}
\newcommand{\Dists}{{\mathbf{{D}}}}
\newcommand{\CDF}{F}
\newcommand{\IP}{IPBR}
\newcommand{\Budget}{B}
\newcommand{\Val}{v}
\newcommand{\CDFB}{\CDF^\Budget}
\newcommand{\Price}{p}
\newcommand{\SRev}{\mathfrak{R}}
\newcommand{\YRev}{\mathit{Y}}
\newcommand{\ZRev}{\mathit{Z}}
\newcommand{\SObj}{u}
\newcommand{\SRevCC}{\widehat{\mathfrak{R}}}
\newcommand{\SRevSC}{\textsc{Rev}}
\newcommand{\TailRev}{r}
\newcommand{\PDF}{f}
\newcommand{\PriceDist}{\mathcal{P}}
\newcommand{\AllocDist}{\mathcal{Q}}
\newcommand{\myref}[2]{\hyperref[#2]{$#1$\ref*{#2}}}
\newcommand{\Mechs}{\mathbb{M}}
\newcommand{\About}[1]{{\textbf{#1.}}}
\newtheorem{remark}{Remark}%
\title{Bayesian Combinatorial Auctions:  Expanding Single Buyer Mechanisms to Many Buyers
\thanks{A preliminary version of this paper was published in the proceedings of the 52nd Annual IEEE Symposium
on Foundations of Computer Science (FOCS 2011).}}
\author{Saeed Alaei
\thanks{Department of Computer Science, University of Maryland, College Park, MD
20742, email: \texttt{saeed@cs.umd.edu}. Part of this work was done when the author was visiting Microsoft
Research, New England. This work was partially supported by the NSF grant CCF-0728839}  }
\date{$ $Date: 2012-06-08 03:38:29 -0400 (Fri, 08 Jun 2012) $ $\protect\footnote{$ $Id: bca.tex 201 2012-06-08 07:38:29Z saeed $ $}}
\begin{document}
\maketitle

\if 0


For Bayesian combinatorial auctions, we present a general framework for approximately reducing the mechanism
design problem for multiple buyers to single buyer sub-problems. Our framework can be applied to any setting
which roughly satisfies the following assumptions: (i) buyers' types must be distributed independently (not
necessarily identically), (ii) objective function must be linearly separable over the buyers, and (iii)
except for the supply constraints, there should be no other inter-buyer constraints. Our framework is general
in the sense that it makes no explicit assumption about buyers' valuations, type distributions, and single
buyer constraints (e.g., budget, incentive compatibility, etc).

We present two generic multi buyer mechanisms which use single buyer mechanisms as black boxes; if an {lower
case alpha}-approximate single buyer mechanism can be constructed for each buyer, and if no buyer requires
more than 1/k of all units of each item, then our generic multi buyer mechanisms are {lower case
gamma}<sub>k</sub>{multiply}{lower case alpha}-approximation of the optimal multi buyer mechanism, where
{lower case gamma}<sub>k</sub> is a constant which is at least 1-1/{square root}(k+3). Observe that {lower
case gamma}<sub>k</sub> is at least 1/2 (for k=1) and approaches 1 as k goes to infinity. As a byproduct of
our construction, we present a generalization of prophet inequalities. Furthermore, as applications of our
framework, we present multi buyer mechanisms with improved approximation factor for several settings from the
literature.


For Bayesian combinatorial auctions, we present a general framework for approximately reducing the mechanism
design problem for multiple buyers to single buyer sub-problems. Our framework can be applied to any setting
which roughly satisfies the following assumptions: (i) buyers' types must be distributed independently (not
necessarily identically), (ii) objective function must be linearly separable over the buyers, and (iii)
except for the supply constraints, there should be no other inter-buyer constraints. Our framework is general
in the sense that it makes no explicit assumption about buyers' valuations, type distributions, and single
buyer constraints (e.g., budget, incentive compatibility, etc).

We present two generic multi buyer mechanisms which use single buyer mechanisms as black boxes; if an
$\alpha$-approximate single buyer mechanism can be constructed for each buyer, and if no buyer requires more
than $\frac{1}{k}$ of all units of each item, then our generic multi buyer mechanisms are
$\gamma_k\alpha$-approximation of the optimal multi buyer mechanism, where $\gamma_k$ is a constant which is
at least $1-\frac{1}{\sqrt{k+3}}$. Observe that $\gamma_k$ is at least $\frac{1}{2}$ (for $k=1$) and
approaches $1$ as $k \to \infty$. As a byproduct of our construction, we present a generalization of prophet
inequalities. Furthermore, as applications of our framework, we present multi buyer mechanisms with improved
approximation factor for several settings from the literature.


\fi

\begin{abstract}
For Bayesian combinatorial auctions, we present a general framework for approximately reducing the mechanism
design problem for multiple buyers to single buyer sub-problems. Our framework can be applied to any setting
which roughly satisfies the following assumptions: (i) buyers' types must be distributed independently (not
necessarily identically), (ii) objective function must be linearly separable over the buyers, and (iii)
except for the supply constraints, there should be no other inter-buyer constraints. Our framework is general
in the sense that it makes no explicit assumption about buyers' valuations, type distributions, and single
buyer constraints (e.g., budget, incentive compatibility, etc).

We present two generic multi buyer mechanisms which use single buyer mechanisms as black boxes; if an
$\alpha$-approximate single buyer mechanism can be constructed for each buyer, and if no buyer requires more
than $\frac{1}{\MaxDemandFrac}$ of all units of each item, then our generic multi buyer mechanisms are
$\gamma_\MaxDemandFrac\alpha$-approximation of the optimal multi buyer mechanism, where
$\gamma_\MaxDemandFrac$ is a constant which is at least $1-\frac{1}{\sqrt{\MaxDemandFrac+3}}$. Observe that
$\gamma_\MaxDemandFrac$ is at least $\frac{1}{2}$ (for $\MaxDemandFrac=1$) and approaches $1$ as
$\MaxDemandFrac \to \infty$. As a byproduct of our construction, we present a generalization of prophet
inequalities. Furthermore, as applications of our framework, we present multi buyer mechanisms with improved
approximation factor for several settings from the literature.
\end{abstract}

%

\section{Introduction}
\label{sec:intro}%

The main challenge of stochastic optimization arises from the fact that all instances in the support of the
distribution are relevant for the objective and this support is exponentially big in the size of problem.
This paper aims to address this challenge by providing a general decomposition technique for assignment
problems on independently distributed inputs where the objective is linearly separable over the inputs. The
main challenge faced by such a decomposition approach is that the feasibility constraint of an assignment
problem introduces correlation in the outcome of the optimal solution. In mechanism design problems, such
constraints are typically the supply constraints. For example, when buyers are independent, a revenue
maximizing seller with unlimited supply can decompose the problem over the buyers and optimize for each buyer
independently. However, in the presence of supply constraints, a direct decomposition is not possible. Our
decomposition technique can be roughly described as the following: (i) Construct a mechanism that satisfies
the supply constraints only in expectation (ex-ante); the optimization problem for constructing such a
mechanism can be fully decomposed over the set of buyers. (ii) Convert the mechanism from the previous step
to another mechanism that satisfies the supply constraint at every instance.

%
%
%

We restrict our discussion to Bayesian combinatorial auctions. We are interested in mechanisms that allocate
a set of heterogenous items with limited supply to a set of buyers in order to maximize the expected value of
a certain objective function which is linearly separable over the buyers (e.g., welfare, revenue, etc). The
buyers' types are assumed to be distributed independently according to publicly known priors. We defer the
formal statement of our assumptions to \SRef{sec:prelim}.

The following are the main challenges in designing mechanisms for multiple buyers.
\begin{enumerate}[(I)]
\item \label{dif:n}
The decisions made by the mechanism for different buyers should be coordinated because of supply constraints.

\item \label{dif:1}
The decisions made by the mechanism for each buyer should be optimal (or approximately optimal).
\end{enumerate}
Making coordinated optimal decisions for multiple buyers is challenging as it requires optimizing over the joint type space of
all buyers, the size of which grows exponentially in the number of buyers. The second challenge is usually due to incentive
compatibility (IC) constraints, specially in multi-dimensional settings where these constraints cannot be encoded compactly. In
this paper, we mostly address the first challenge by providing a framework for approximately decomposing the mechanism design
problem for multiple buyers to sub-problems dealing with each buyer individually.


Our framework can be summarized as follows. We start by relaxing the supply constraints, i.e., we consider
the mechanisms for which only the ex-ante expected number of allocated units of each item is no more than the
supply of that item. Note that \emph{``ex-ante''} means that the expectation taken over all possible inputs
(i.e., all possible types of the buyers). We show that the optimal mechanism for the relaxed problem can be
constructed by independently running $n$ single buyer mechanisms, where each single buyer mechanism is
subject to an ex-ante probabilistic supply constraint. In particular, we show that if one can construct an
$\alpha$-approximate mechanism for each single buyer problem, then running these mechanisms simultaneously
and independently yields an $\alpha$-approximate mechanism for the relaxed multiple buyer problem. We then
present two methods for converting the mechanism for the relaxed problem to a mechanism for the original
problem while losing a small constant factor in the approximation. We present two generic multi buyer
mechanisms that use the single buyer mechanisms from the previous step as blackboxes~\footnote{Note that the
single buyer mechanisms can be different for different buyers, e.g., to accommodate different classes of
buyers.}. In the first mechanism, we serve buyers sequentially by running, for each buyer, the corresponding
single buyer mechanism from the previous step. However, we sometimes randomly preclude some of the items from
the early buyers in order to ensure that late buyers get the same chance of being offered with those items;
we ensure that the ex-ante expected probability of preclusion is equalized over all buyers, regardless of the
order in which they are served (i.e., we simultaneously minimize the preclusion probability for all buyers).
In the second mechanism, we run all of the single buyer mechanisms simultaneously and then modify the
outcomes by deallocating some units of the over-allocated items at random while adjusting the payments
respectively; we ensure that the ex-ante probability of deallocation is equalized among all units of each
item and therefore simultaneously minimized for all buyers.

We also introduce a toy problem, \emph{the magician's problem}, in \SRef{sec:magician}, along with a near
optimal solution for it, which is used as the main ingredient of our multi buyer mechanisms. As a byproduct,
we present improved generalized prophet inequalities for maximizing the sum of multiple choices.

As applications of our framework, in \SRef{sec:single}, we present mechanisms with improved approximation
factor for several settings from the literature. For each setting we present a single buyer mechanism that
satisfies the requirements of our framework, and can be plugged in one of our generic multi buyer mechanisms.

\subsection{Related Work}
\label{sec:related}%

In single dimensional settings, the related works form the CS literature are mostly focused on approximating the VCG mechanism
for welfare maximization and/or approximating the Myerson's mechanism~\cite{M81} for revenue maximization (e.g.,
\cite{BR89,BLP06,BH08,HR09,DRY10,CEGMM10,Y11}). Most of them consider mechanisms that have simple implementation and are
computationally efficient. For welfare maximization in single dimensional settings, \cite{HL10} gives a blackbox reduction from
mechanism design to algorithmic design.

In multidimensional setting, for welfare maximization, \cite{HKM11} presents a blackbox reduction from
mechanism design to algorithm design which subsumes the earlier work of \cite{HL10}. For revenue
maximization, \cite{CHMS10} presents several sequential posted pricing mechanisms for various settings with
different types of matroid feasibility constraints. These mechanisms have simple implementation and
approximate the revenue of the optimal mechanism. For unit-demand buyers whose valuations' for the items are
distributed according to product distributions, \cite{CHMS10} present a sequential posted pricing mechanism
that obtains in expectation at least $\frac{1}{6.75}$-fraction of the revenue of the optimal posted pricing
mechanism. In \SRef{sec:unit}, we present an improved sequential posted pricing mechanism for this setting
with an approximation factor of $\frac{1}{2}\gamma_\NumUnits$ in which $\NumUnits$ is the number of units
available of each item, and $\gamma_\NumUnits$ is a constant which is at least
$1-\frac{1}{\sqrt{\NumUnits+3}}$. For combinatorial auctions with additive/correlated valuations with budget
and demand constraints, \cite{BGGM10} presents all-pay $\frac{1}{4}$-approximate BIC mechanisms for revenue
maximization and a similar mechanism for welfare maximization. In \autoref{sec:corr}, we present an improved
mechanism for this setting with an approximation factor of $\gamma_\NumUnits$. Note that $\gamma_k\NumUnits$
is at least $\frac{1}{2}$ and approaches $1$ as $\NumUnits\to \infty$. \cite{BGGM10} also presents sequential
posted pricing mechanisms for the same setting, obtaining $O(1)$ approximation factors. For a similar
setting, in \SRef{sec:budgetm}, we present an improved sequential posted pricing mechanism with an
approximation factor of $(1-\frac{1}{e})\gamma_\NumUnits$. Finally, \cite{CMM11} also considers various
settings with hard budget constraints.

Prophet inequalities have been extensively studied in the past (e.g. \cite{HK92}). Prior to this work, the
best known bound for the generalization to sum of $\NumUnits$ choices was $1-O(\frac{\sqrt{\ln
\NumUnits}}{\sqrt{\NumUnits}})$ by \cite{HKS07}. We improve this to $1-\frac{1}{\sqrt{\NumUnits+3}}$. Note
that the current bound is tight for $\NumUnits=1$, and is useful even for small values of $\NumUnits$.

\section{Preliminaries}
\label{sec:prelim}%
The framework of this paper is presented for combinatorial auctions, but it can be readily applied to Bayesian mechanism design
in other contexts. We begin by defining the model and some notation.

\paragraph{Model}
We consider the problem of selling $\NumItems$ indivisible heterogenous items to $\NumAgents$ buyers where
there are $\NumUnits_j$ units of each item $j \in \RangeN{\NumItems}$. All the relevant private information
of each buyer $i \in \RangeN{\NumAgents}$ is represented by her type $\Type_i \in \TypeSpace_i$ where
$\TypeSpace_i$ is the type space of buyer $i$. Let $\TypeSpaces = \TypeSpace_1 \times \cdots \times
\TypeSpace_n$ be the space of all type profiles. The buyers' type profile $\Types \in \TypeSpaces$ is
distributed according to a publicly known prior $\Dists$. We use $\XPAlloc_{ij}(\Types)$ and $\Pay_i(\Types)$
to denote the random variables\footnote{Note that these random variables are often correlated. Furthermore,
for a deterministic mechanism, these variables take deterministic values as a function of $\Types$.}
respectively for the allocation of item $j$ to buyer $i$ and the payment of buyer $i$, for type profile
$\Types$. For a mechanism $\Mech$, the random variables for allocations and payments are denoted respectively
by $\XPAlloc^\Mech_{ij}(\Types)$ and $\Pay^\Mech_i(\Types)$. We are interested in computing a mechanism that
(approximately) maximizes\footnote{All of the results of this paper can be applied to minimization problems
by simply maximizing the negation of the objective function.} the expected value of a given objective
function $\Obj(\Types,\XPAllocs,\Pays)$ where $\Types$, $\XPAllocs$, and $\Pays$ respectively represent the
types, the allocations, and the payments of all buyers. We are only interested in mechanisms which are within
a given space $\Mechs$ of feasible mechanisms. Formally, we aim to compute a mechanism $\Mech \in \Mechs$
that (approximately) maximizes $\Ex[\Types \sim
\Dists]{\Obj(\Types,\XPAlloc^\Mech(\Types),\Pay^\Mech(\Types))}$.

\paragraph{Assumptions}
We make the following assumptions.
\begin{enumerate}[{(A}1{)}]
\item \About{Independence}
The buyers' types must be distributed independently, i.e., $\Dists = \Dist_1 \times \cdots \times \Dist_n$ where $\Dist_i$ is the
distribution of types for buyer $i$. Note that for a buyer $i$ who has multidimensional types, $\Dist_i$ itself does not need to
be a product distribution.

\item \About{Linear Separability of Objective}
The objective function must be linearly separable over the buyers, i.e., $\Obj(\Type,\XPAllocs,\Pays)=\sum_i
\Obj_i(\Type_i,\XPAlloc_i, \Pay_i)$ where $\Type_i$, $\XPAlloc_i$, and $\Pay_i$ respectively represent the type, the allocations,
and the payment of buyer $i$.

\item \About{Single--Unit Demands}
No buyer should ever need more than \emph{one unit} of each item, i.e., $\XPAlloc_{ij}(\Types) \in \{0,1\}$
for all $\Types \in \TypeSpaces$. This assumption is not necessary and is only to simplify the exposition; it
can be removed as explained in \SRef{sec:multi_unit}.

\item \About{Incentive Compatibility}
\label{asm:ic}%
$\Mechs$ must be restricted to (Bayesian) incentive compatible mechanisms. By direct revelation principle this assumption is
without loss of generality\footnote{It is WLOG, given that we are only interested in mechanisms that have Bayes-Nash
equilibria.},

\item \About{Convexity}
\label{asm:convex}%
$\Mechs$ must be a convex space. In other words, every convex combination of every two mechanisms from
$\Mechs$ must itself be a mechanism in $\Mechs$. A convex combination of two mechanisms $\Mech,\Mech' \in
\Mechs$ is another mechanism $\Mech''$ which simply runs $\Mech$ with probability $\beta$ and runs $\Mech'$
with probability $1-\beta$, for some $\beta \in [0, 1]$. In particular, if $\Mechs$ is restricted to
deterministic mechanisms, it is not convex; however if $\Mechs$ also includes mechanisms that randomize over
deterministic mechanisms, then it is convex \footnote{For an example of a randomized non-convex space of
mechanisms, consider the space of mechanisms where the expected payment of every type must be either less
than $\$2$ or more than $\$4$.}.

\item \About{Decomposability}
\label{asm:decompose}%
The set of constraints specifying $\Mechs$ must be decomposable to supply constraints (i.e., $\sum_i
\XPAlloc_{ij} \le \NumUnits_j$, for each item $j$) and single buyer constraints(e.g., incentive
compatibility, budget, etc). We define this assumption formally as follows. For any mechanism $\Mech$, let
$\InducedMech{i}{\Mech}$ be the single buyer mechanism perceived by buyer $i$, by simulating\footnote{The
single buyer mechanism induced on buyer $i$ can be obtained by simulating all buyers other than $i$ by
drawing a random $\Types_{-i}$ from $\Dists_{-i}$ and running $\Mech$ on buyer $i$ and the $\NumAgents-1$
simulated buyers with types $\Types_{-i}$; note that this is a single-buyer mechanism because the simulated
buyers are just part of the mechanism.} the other buyers according to their respective distributions
$\Dists_{-i}$. Define $\Mechs_i = \{\InducedMech{i}{\Mech}|\Mech \in \Mechs\}$ to be the space of all
feasible single buyer mechanisms for buyer $i$. The decomposability assumption requires that for any
arbitrary mechanism $\Mech$ the following holds: if $\Mech$ satisfies the supply constraints and also
$\InducedMech{i}{\Mech} \in \Mechs_i$ (for all buyers $i$), then it must be that $\Mech \in \Mechs$.
\end{enumerate}

We shall clarify the last assumption by giving an example. Suppose $\Mechs$ is the space of all buyer
specific item pricing mechanisms, then $\Mechs$ satisfies the last assumption. On the other hand, if $\Mechs$
is the space of mechanisms that offer the same set of prices to every buyer, it does not satisfy the
decomposability assumption, because there is an implicit inter-buyer constraint that the same prices should
be offered to different buyers.

Throughout the rest of this paper, we often omit the range of the sums whenever the range is clear from the context (e.g.,
$\sum_i$ means $\sum_{i \in \RangeN{\NumAgents}}$, and $\sum_j$ means $\sum_{j \in \RangeN{\NumItems}}$).

\paragraph{Multi buyer problem}
Formally, the multi buyer problem is to find a mechanism $\Mech$ which is a solution to the following program.
\begin{alignat*}{3}
    \text{maximize}& \qquad& & \sum_i \Ex[\Types\sim \Dists]{\Obj_i(\Type_i,\XPAlloc^\Mech_i(\Types), \Pay^\Mech_i(\Types))} \tag{$OPT$} \label{P} \\
    \text{subject to}& & &\sum_i \XPAlloc^\Mech_{ij}(\Types) \le \NumUnits_j, &
        &\forall \Types \in \TypeSpaces,\forall j \in \RangeN{\NumItems} \\ 
    & & & \InducedMech{i}{\Mech} \in \Mechs_i, &
        & \forall i \in \RangeN{\NumAgents} 
\end{alignat*}
Observe that, in the absence of the first set of constraints, we could optimize the mechanism for each buyer
independently. This observation is the key to our multi to single buyer decomposition, which allows us to
approximately decompose/reduce the multi buyer problem to single buyer problems. A mechanism $\Mech$ is an
\emph{$\alpha$-approximation} of the optimal mechanism if it is a feasible mechanism for the above program
and obtains at least $\alpha$-fraction of the optimal objective value of the program.

\paragraph{Ex ante allocation rule}
For a multi-dimensional mechanism $\Mech$, the \emph{ex ante allocation rule} is a vector $\XAAlloc \in
\RangeAB{0}{1}^{\NumAgents \times \NumItems}$ in which $\XAAlloc_{ij}=\Ex[\Types \sim
\Dists]{\XPAlloc^\Mech_{ij}}$ is the expected probability of allocating a unit of item $j$ to buyer $i$,
where the expectation is taken over all possible type profiles. Note that for any feasible mechanism $\Mech$,
by linearity of expectation, the ex ante allocation rule satisfies $\sum_i \XAAlloc_{ij} \le \NumUnits_j$,
for every item $j$.

\paragraph{Single buyer problem}
The \emph{single buyer problem}, for buyer $i$, is to compute an optimal single buyer mechanism and its
expected objective value, subject to a given upper bound $\XAAllocC_i \in \RangeAB{0}{1}^\NumItems$ on the ex
ante allocation rule; in other words, the single buyer mechanism may not allocate a unit of item $j$ to buyer
$i$ with an expected probability of more than $\XAAllocC_i$, where the expectation is taken over $\Type_i
\sim \Dist_i$. Formally, the single buyer problem is to compute the optimal value of the following program
along with a corresponding solution (i.e., the optimal $\Mech_i$), for a given $\XAAllocC_i$.

\begin{alignat*}{3}
    \text{maximize}& \qquad& & \Ex[\Type_i\sim \Dist_i]{\Obj_i(\Type_i,\XPAlloc^{\Mech_i}_i(\Type_i), \Pay^{\Mech_i}_i(\Type_i))} \tag{$OPT_i$} \label{P.i} \\
    \text{subject to}& & &\Ex[\Type_i \sim \Dist_i]{\XPAlloc^{\Mech_i}_{ij}(\Type_i)} \le \XAAllocC_{ij}, &
        &\forall j \in \RangeN{\NumItems} \\ 
    & & & \Mech_i \in \Mechs_i, &
        &
\end{alignat*}
We typically denote an optimal single buyer mechanism for buyer $i$, subject to a given $\XAAllocC_i$, by
$\Cons{\Mech_i}{\XAAllocC_i}$, and denote its expected objective value (i.e., the optimal value of the above
program as a function of $\XAAllocC_i$) by $\Rev_i(\XAAllocC_i)$. Later, we prove that $\Rev_i(\XAAllocC_i)$,
which we refer to as the \emph{optimal benchmark} for buyer $i$, is a concave function of $\XAAllocC_i$. In
the case of approximation, we say that a single buyer mechanism $\Mech_i$ together with a concave benchmark
$\Rev_i$ provide an $\alpha$-approximation of the optimal single buyer mechanism/optimal benchmark, if the
expected objective value of $\Cons{\Mech_i}{\XAAllocC_i}$ is at least $\alpha\Rev_i(\XAAllocC_i)$ and if
$\Rev_i(\XAAllocC_i)$ is an upper bound on the optimal benchmark, for every $\XAAllocC_i$.

To make the exposition more concrete, consider the following single buyer problem as an example. Suppose
there is only one type of item (i.e., $\NumItems=1$) and the objective is to maximize the expected
revenue\footnote{The optimal multi buyer mechanism for this setting is given by \cite{M81}; yet we consider
this setting to keep the example simple and intuitive.}. Suppose buyer $i$'s valuation is drawn from a
regular distribution with CDF, $\CDF_{i}$. The optimal single buyer mechanism for $i$, subject to
$\XAAllocC_i \in \RangeAB{0}{1}$, is a deterministic mechanism which offers the item at some fixed price,
while ensuring that the probability of sale (i.e., the probability of buyer $i$'s valuation being above the
offered price) is no more than $\XAAllocC_{i}$. In particular, the optimal benchmark $\Rev_i(\XAAllocC_i)$ is
the optimal value of the following convex program as a function of $\XAAllocC_{i}$.
\begin{alignat*}{3}
    \text{maximize}& \qquad& & \XAAlloc_i\CDF_i^{-1}(1-\XAAlloc_i)  \\ 
    \text{subject to}& & & \XAAlloc_i \le \XAAllocC_i  \\
    & & & \XAAlloc_i \in \RangeAB{0}{1}
\end{alignat*}
Furthermore, the optimal single buyer mechanism offers the item at the price $\CDF_i^{-1}(1-\XAAlloc_i)$
where $\XAAlloc_i$ is the optimal assignment for the above convex program. Note that, for a regular
distribution, $\XAAlloc_i\CDF_i^{-1}(1-\XAAlloc_i)$ is concave in $\XAAlloc_i$, so the above program is a
convex program.

\section{Decomposition via Ex ante Allocation Rule}
\label{sec:decompose}%
In this section we present general methods for approximately decomposing/reducing the multi buyer problem to
single buyer problems. Recall that a single buyer problem is to compute the optimal single buyer mechanism
$\Cons{\Mech_i}{\XAAllocC_i}$ and its expected objective value $\Rev_i(\XAAllocC_i)$ (i.e., the optimal
benchmark), subject to an upper bound $\XAAllocC_i$ on the ex ante allocation rule. We present two methods
for constructing an approximately optimal multi buyer mechanism, using $\Mech_i$ and $\Rev_i$ as black box.
Furthermore, we show that if we can only compute an $\alpha$-approximation of the optimal single buyer
mechanism/optimal benchmark for each buyer $i$, then the factor $\alpha$ simply carries over to the
approximation factor of the final multi buyer mechanism.

\paragraph{Multi buyer benchmark}
We start by showing that the optimal value of the following convex program gives an upper bound on the
expected objective value of the optimal multi buyer mechanism.

\begin{alignat*}{3}
    \text{maximize}& \qquad& & \sum_i \Rev_i(\XAAllocC_i) \tag{$\overline{OPT}$} & \qquad & \label{P:OPT'} \\
    \text{subject to}& & &\sum_i \XAAllocC_{ij} \le \NumUnits_j, &
        &\forall j \in \RangeN{\NumItems} \\ 
    & & & \XAAllocC_{ij} \in \RangeAB{0}{1}, &
        &
\end{alignat*}

We first show that the above program is indeed a convex program.
\begin{theorem}
\label{thm:opt_concave}%
The optimal benchmarks $\Rev_i$ are always concave.
\end{theorem}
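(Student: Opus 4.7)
The plan is to prove concavity by the standard ``mixing'' argument made possible by Assumption (A\ref{asm:convex}) (convexity of $\Mechs_i$). Concretely, I would take two ex ante upper bounds $\XAAllocC_i^{(1)}, \XAAllocC_i^{(2)} \in \RangeAB{0}{1}^{\NumItems}$ together with optimal single buyer mechanisms $\Mech_i^{(1)} = \Cons{\Mech_i}{\XAAllocC_i^{(1)}}$ and $\Mech_i^{(2)} = \Cons{\Mech_i}{\XAAllocC_i^{(2)}}$ attaining $\Rev_i(\XAAllocC_i^{(1)})$ and $\Rev_i(\XAAllocC_i^{(2)})$ respectively, fix any $\beta \in [0,1]$, and set $\XAAllocC_i := \beta \XAAllocC_i^{(1)} + (1-\beta)\XAAllocC_i^{(2)}$. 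The goal is then to exhibit a mechanism $\Mech_i^{(\beta)} \in \Mechs_i$ that is feasible for the single buyer program with upper bound $\XAAllocC_i$ and whose expected objective is exactly the convex combination of the two benchmark values.

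The natural candidate is the convex combination mechanism: $\Mech_i^{(\beta)}$ runs $\Mech_i^{(1)}$ with probability $\beta$ and $\Mech_i^{(2)}$ with probability $1-\beta$ (independently of $\Type_i$). By Assumption (A\ref{asm:convex}), $\Mech_i^{(\beta)} \in \Mechs_i$. By linearity of expectation, the ex ante allocation satisfies
\[
\Ex[\Type_i \sim \Dist_i]{\XPAlloc^{\Mech_i^{(\beta)}}_{ij}(\Type_i)}
  = \beta \Ex[\Type_i \sim \Dist_i]{\XPAlloc^{\Mech_i^{(1)}}_{ij}(\Type_i)}
    + (1-\beta) \Ex[\Type_i \sim \Dist_i]{\XPAlloc^{\Mech_i^{(2)}}_{ij}(\Type_i)}
  \le \beta \XAAllocC_{ij}^{(1)} + (1-\beta) \XAAllocC_{ij}^{(2)} = \XAAllocC_{ij},
\]
so $\Mech_i^{(\beta)}$ is feasible for the single buyer program with upper bound $\XAAllocC_i$. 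Applying linearity of expectation once more to $\Obj_i(\Type_i,\XPAlloc_i,\Pay_i)$ (the objective is evaluated on the realized $(\XPAlloc_i,\Pay_i)$ of whichever mechanism was selected, and the selection is independent of $\Type_i$) gives that the expected objective of $\Mech_i^{(\beta)}$ equals $\beta \Rev_i(\XAAllocC_i^{(1)}) + (1-\beta) \Rev_i(\XAAllocC_i^{(2)})$. Since $\Rev_i(\XAAllocC_i)$ is the optimal value of the program at $\XAAllocC_i$, we conclude
\[
\Rev_i\!\left(\beta \XAAllocC_i^{(1)} + (1-\beta)\XAAllocC_i^{(2)}\right)
  \;\ge\; \beta \Rev_i(\XAAllocC_i^{(1)}) + (1-\beta) \Rev_i(\XAAllocC_i^{(2)}),
\]
which is exactly concavity.

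The argument has no genuine obstacle; its only subtlety is ensuring that each of the three invocations of ``linearity of expectation'' is legitimate. This is why the convexity of $\Mechs_i$ (to ensure $\Mech_i^{(\beta)}$ is admissible) and the linear separability of the objective over buyers (Assumption (A2), which guarantees that $\Obj_i$ depends only on buyer $i$'s realized allocation and payment) both enter the proof. If anything, the main thing to be careful about is to treat the coin flip that chooses between $\Mech_i^{(1)}$ and $\Mech_i^{(2)}$ as independent of $\Type_i$, so that the expectation factors cleanly; once this is observed, the conclusion is immediate.
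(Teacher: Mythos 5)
Your proof is correct and follows essentially the same mixing argument as the paper: form the convex combination mechanism, check feasibility of its ex ante allocation via linearity of expectation, and compare its objective to the optimum. The only minor point the paper makes that you omit is that convexity of $\Mechs_i$ is deduced from \emph{both} Assumption (A\ref{asm:convex}) and Assumption (A\ref{asm:decompose}) (convexity is stated for $\Mechs$, and decomposability is what lets it pass to the induced single-buyer space $\Mechs_i$).
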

\begin{proof}
We prove this for an arbitrary buyer $i$. Let $\Mech_i$ and $\Rev_i$ denote the optimal single buyer
mechanism and the optimal benchmark for buyer $i$. To show that $\Rev_i$ is concave, it is enough to show
that for any $\XAAllocC_i,\XAAllocC'_i \in \RangeAB{0}{1}^\NumItems$ and any $\beta \in \RangeAB{0}{1}$, the
following inequality holds.
\begin{align*}
    \Rev_i(\beta \XAAllocC_i+(1-\beta)\XAAllocC'_i) \ge \beta \Rev_i(\XAAllocC_i)+(1-\beta)\Rev_i(\XAAllocC'_i)
\end{align*}
Consider the single buyer mechanism $\Mech''$ that works as follows: $\Mech''$ runs $\Mech_i(\XAAllocC_i)$
with probability $\beta$ and runs $\Mech_i(\XAAllocC'_i)$ with probability $1-\beta$. Note that $\Mechs_i$ is
a convex space (this follows from \myref{A}{asm:convex} and \myref{A}{asm:decompose}), therefore $\Mech'' \in
\Mechs_i$. Observe that by linearly of expectation, the ex ante allocation rule of $\Mech''$ is no more than
$\beta \XAAllocC_i+(1-\beta)\XAAllocC'_i$ and the expected objective value of $\Mech''$ is exactly $\beta
\Rev_i(\XAAllocC_i)+(1-\beta)\Rev_i(\XAAllocC'_i)$. So the expected objective value of the optimal single
buyer mechanism, subject to $\beta \XAAllocC_i+(1-\beta)\XAAllocC'_i$, may only be higher. That implies
$\Rev_i(\beta \XAAllocC_i+(1-\beta)\XAAllocC'_i) \ge \beta \Rev_i(\XAAllocC_i)+(1-\beta)\Rev_i(\XAAllocC'_i)$
which proves the claim.
\end{proof}

\begin{theorem}
\label{thm:decomposition}%
The optimal value of the convex program \eqref{P:OPT'} is an upper bound on the expected objective value of
the optimal multi buyer mechanism.
\end{theorem}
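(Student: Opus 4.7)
The plan is to show that any feasible multi buyer mechanism $\Mech$ can be ``projected'' to a feasible point $\XAAllocC$ of the convex program $\eqref{P:OPT'}$ whose value dominates the expected objective of $\Mech$. The decomposability and linear separability assumptions are what make such a projection work cleanly.

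Concretely, I would proceed as follows. Start with an arbitrary feasible mechanism $\Mech$ for \eqref{P}, and define its ex ante allocation rule $\XAAllocC_{ij} := \Ex[\Types \sim \Dists]{\XPAlloc^\Mech_{ij}(\Types)}$. Because $\XPAlloc^\Mech_{ij}(\Types) \in \{0,1\}$ by \myref{A}{asm:ic}/single-unit demands, we have $\XAAllocC_{ij} \in \RangeAB{0}{1}$. Taking expectations of the pointwise supply constraint $\sum_i \XPAlloc^\Mech_{ij}(\Types) \le \NumUnits_j$ yields $\sum_i \XAAllocC_{ij} \le \NumUnits_j$ for every item $j$, so $\XAAllocC$ is feasible for \eqref{P:OPT'}.

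Next, for each buyer $i$, consider the induced single buyer mechanism $\InducedMech{i}{\Mech}$. By \myref{A}{asm:decompose}, this mechanism lies in $\Mechs_i$. By the very definition of the induced mechanism (simulating the other buyers according to $\Dists_{-i}$), the joint distribution over $(\Type_i, \XPAlloc^{\InducedMech{i}{\Mech}}_i, \Pay^{\InducedMech{i}{\Mech}}_i)$ matches the marginal distribution over $(\Type_i, \XPAlloc^\Mech_i, \Pay^\Mech_i)$ under $\Mech$ applied to $\Dists$. In particular the ex ante allocation rule of $\InducedMech{i}{\Mech}$ equals $\XAAllocC_i$, so $\InducedMech{i}{\Mech}$ is a feasible candidate for the single buyer program \eqref{P.i} with bound $\XAAllocC_i$, and its expected objective value is therefore at most $\Rev_i(\XAAllocC_i)$.

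Finally, invoking linear separability \myref{A}{asm:ic of objective assumption} and the distributional identification above,
\begin{align*}
    \Ex[\Types \sim \Dists]{\Obj(\Types,\XPAlloc^\Mech,\Pay^\Mech)}
    &= \sum_i \Ex[\Types \sim \Dists]{\Obj_i(\Type_i,\XPAlloc^\Mech_i,\Pay^\Mech_i)} \\
    &= \sum_i \Ex[\Type_i \sim \Dist_i]{\Obj_i(\Type_i,\XPAlloc^{\InducedMech{i}{\Mech}}_i,\Pay^{\InducedMech{i}{\Mech}}_i)}
    \le \sum_i \Rev_i(\XAAllocC_i),
\end{align*}
and the right-hand side is the value of $\XAAllocC$ in the feasible region of \eqref{P:OPT'}, which is at most its optimum. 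Taking $\Mech$ to be the optimal multi buyer mechanism gives the theorem.

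The only genuinely delicate step is the identification of the induced mechanism's outcome distribution with the marginal of $\Mech$'s outcome — this is where \myref{A}{asm:decompose} (decomposability) and independence of the $\Dist_i$ play their roles. Everything else (supply constraints carrying through expectations, separating the objective, bounding by $\Rev_i$) is routine once that identification is in hand.
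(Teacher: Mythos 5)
Your proof is correct and takes essentially the same route as the paper's: project the optimal mechanism onto its ex ante allocation rule, check feasibility for \eqref{P:OPT'} by taking expectations of the supply constraints, and bound buyer $i$'s contribution by $\Rev_i(\XAAllocC_i)$ via the induced single-buyer mechanism $\InducedMech{i}{\Mech}$. The only difference is cosmetic: you spell out the distributional identification between $\InducedMech{i}{\Mech}$ and the marginal of $\Mech$ more explicitly, which the paper leaves implicit.
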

\begin{proof}
Let $\Mech^*$ be an optimal multi buyer mechanism. Let $\XAAlloc^*$ denote the ex ante allocation rule
corresponding to $\Mech^*$, i.e., $\XAAlloc^*_{ij} = \Ex[\Types \sim \Dists]{\XPAlloc^{\Mech^*}_{ij}}$.
Observe that $\XAAlloc^*$ is a feasible assignment for the convex program and yields an objective value of
$\sum_i \Rev_i(\XAAlloc^*_i)$ which is upper bounded by the optimal value of the convex program. So to prove
the theorem it is enough to show that the contribution of each buyer $i$ to the expected objective value of
$\Mech^*$ is upper bounded by $\Rev_i(\XAAlloc^*_i)$. Consider $\Mech^*_i = \InducedMech{i}{\Mech^*}$, i.e,
the single buyer mechanism induced by $\Mech^*$ on buyer $i$. $\Mech^*_i$ can be obtained by simply running
$\Mech^*$ on buyer $i$ and simulating the other $\NumAgents-1$ buyers with random types $\Types_{-i} \sim
\Dists_{-i}$; Observe that $\Mech^*_i$ is a feasible single buyer mechanism subject to $\XAAlloc^*_i$ and
obtains the same expected objective value as $\Mech^*$ from buyer $i$, so the expected objective value of the
optimal single buyer mechanism subject to $\XAAlloc^*_i$ could only be higher.
\end{proof}

\paragraph{Constructing multi buyer mechanisms}
\autoref{thm:decomposition} suggests that by computing an optimal assignment of $\XAAllocC$ for the convex program~\eqref{P:OPT'}
and running the single buyer mechanism $\Cons{\Mech_i}{\XAAllocC_i}$ for each buyer $i$, one might obtain a reasonable multi
buyer mechanism; however such a multi buyer mechanism would only satisfy the supply constraints in expectation; in other words,
there is a good chance that some items are over allocated with a non-zero probability. We present two generic multi buyer
mechanisms for combining the single buyer mechanisms and resolving the conflicts in the allocations in such a way that would
ensure the supply constraints are met at every instance and not just in expectation. In both approaches we first solve the convex
program~\eqref{P:OPT'} to compute the optimal $\XAAllocC$. The high level idea of each mechanism is explained below.
\begin{enumerate}
\item \About{Pre-Rounding}
This mechanism serves the buyers sequentially (arbitrary order); for each buyer $i$, it selects a subset
$\ItemSubset_i$ of available items and runs the single buyer mechanism
$\Cons{\Mech_i}{\Induced{\XAAllocC_i}{\ItemSubset_i}}$, where $\Induced{\XAAllocC_i}{\ItemSubset_i}$ denotes
the vector resulting from $\XAAllocC_i$ by zeroing the entries corresponding to items not in $\ItemSubset_i$.
In particular, this mechanism sometimes precludes some of the available items from early buyers to make them
available to late buyers. We show that if there are at least $\NumUnits$ units of each item, then
$\ItemSubset_i$ includes item $j$ with probability at least $1-\frac{1}{\sqrt{\NumUnits+3}}$, for each buyer
$i$ and each item $j$.
\item \About{Post-Rounding}
This mechanism runs $\Cons{\Mech_i}{\XAAllocC_i}$ for all buyers $i$ simultaneously and independently. It then modifies the
outcomes by deallocating the over allocated items at random in such a way that the probability of deallocation observed by all
buyers are equal, and therefore minimized over all buyers. The payments are adjusted respectively. We show that if there are at
least $\NumUnits$ units of each item, every allocation is preserved with probability $1-\frac{1}{\sqrt{\NumUnits+3}}$ from the
perspective of the corresponding buyer.
\end{enumerate}

We will explain the above mechanisms in more detail in \SRef{sec:expansion} and present some technical assumptions that are
sufficient to ensure that they retain at least $1-\frac{1}{\sqrt{\NumUnits+3}}$ fraction of the expected objective value of each
$\Cons{\Mech_i}{\XAAllocC_i}$.

\paragraph{Approximately optimal single buyer mechanisms}
Throughout the above discussion, we assumed that we can compute the optimal single buyer mechanisms and the
corresponding optimal benchmarks. However, it is likely that we can only compute an approximation of them.
Suppose for each buyer $i$, $\Mech_i$ and $\Rev_i$, instead of being optimal, only provide an
$\alpha$-approximation of the optimal single buyer mechanism/optimal benchmark, and suppose $\Rev_i$ is
concave; then we can still use $\Mech_i$ and $\Rev_i$ in the above construction, but the final approximation
factor will be multiplied by $\alpha$.

\paragraph{Main result}
The following informal theorem summarizes the main result of this paper. The formal statement of this
result can be found in \autoref{thm:pre} and \autoref{thm:post}.

\begin{theorem}[Market Expansion]
\label{thm:expansion}%
If for each buyer $i \in \RangeN{\NumAgents}$, an $\alpha$-approximate single buyer mechanism $\Mech_i$ and a
corresponding concave benchmark $\Rev_i$ can be constructed in polynomial time, then, with some further
assumptions (explained later), a multi buyer mechanism $\Mech \in \Mechs$ can be constructed in polynomial
time by using $\Mech_i$ as building blocks, such that $\Mech$ is $\gamma_\NumUnits\alpha$-approximation of
the the optimal multi buyer mechanism in $\Mechs$, where $\NumUnits = \min_j \NumUnits_j$ and
$\gamma_\NumUnits$ is a constant which is at least $1-\frac{1}{\sqrt{\NumUnits+3}}$.
\end{theorem}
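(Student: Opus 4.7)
The plan is to assemble the three ingredients already developed in the excerpt. First I would solve the convex program~\eqref{P:OPT'} with the concave benchmarks $\Rev_i$ to obtain an optimal ex-ante allocation vector $\XAAllocC^*$; this is a polynomial-time computation because each $\Rev_i$ is concave by hypothesis (an analogue of \autoref{thm:opt_concave} for the approximate case), the supply constraints are linear, and a polynomial-time oracle for $\Rev_i$ is assumed. By \autoref{thm:decomposition} the value $\sum_i \Rev_i(\XAAllocC^*_i)$ upper bounds the optimum of~\eqref{P}, and since each $\Mech_i$ is $\alpha$-approximate, running $\Cons{\Mech_i}{\XAAllocC^*_i}$ independently for every buyer already collects expected objective at least $\alpha\sum_i \Rev_i(\XAAllocC^*_i)$, though this collection only respects the supply constraints ex-ante.

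Next I would invoke one of the two generic rounding conversions (pre- or post-rounding) sketched after \autoref{thm:decomposition} to promote this ex-ante feasible collection to a mechanism $\Mech$ that respects supply ex-post. In the pre-rounding variant, buyers are processed in arbitrary order and before running $\Mech_i$ we randomly restrict $\XAAllocC^*_i$ to the coordinates of a subset $\ItemSubset_i$ of items still available; in the post-rounding variant, all $\Cons{\Mech_i}{\XAAllocC^*_i}$ are run simultaneously and over-allocations are undone by a random deallocation rule with payments adjusted accordingly. The randomization is calibrated so that, for every buyer $i$ and item $j$, the probability that $i$'s attempted allocation of $j$ actually survives is at least $\gamma_\NumUnits = 1-\tfrac{1}{\sqrt{\NumUnits+3}}$, where $\NumUnits=\min_j \NumUnits_j$. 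Because the objective is linearly separable over buyers and, by linearity of expectation, each buyer's contribution scales at least linearly with its per-item survival probability, the total expected objective is at least $\gamma_\NumUnits \alpha \sum_i \Rev_i(\XAAllocC^*_i) \ge \gamma_\NumUnits\alpha \cdot OPT$.

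The hard part will be constructing the randomization so that the $\gamma_\NumUnits$ survival guarantee holds uniformly over all buyers and items at once. In both variants this reduces, item by item, to a one-dimensional online problem: a sequence of Bernoulli ``attempts'' with prescribed ex-ante probabilities summing to at most $\NumUnits_j$ arrives online, and one must commit to admit or skip each one subject to a deterministic cap of $\NumUnits_j$ admissions while guaranteeing that every attempt is admitted with probability at least $\gamma_\NumUnits$ times its target. This is precisely the magician's problem deferred to \SRef{sec:magician}, and establishing the tight bound $1-\tfrac{1}{\sqrt{\NumUnits+3}}$ for it, along with verifying the ``further assumptions'' under which the magician's strategy can be inserted into the reduction, is the technical core of the argument (and simultaneously yields the improved prophet inequality byproduct). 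To close the proof I would confirm that $\Mech \in \Mechs$: supply is enforced ex-post by construction; each $\InducedMech{i}{\Mech}$ is a convex combination of mechanisms of the form $\Cons{\Mech_i}{\cdot}$ and hence lies in $\Mechs_i$ by \myref{A}{asm:convex}; and \myref{A}{asm:decompose} then gives $\Mech \in \Mechs$, with incentive compatibility preserved because rounding acts only through the supply side while payments are adjusted truthfully per buyer.
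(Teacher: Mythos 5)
Your outline matches the paper's high-level architecture (solve \eqref{P:OPT'}, invoke \autoref{thm:decomposition}, then round via the magician), but the crucial step where the $\gamma_\NumUnits$ factor appears is justified incorrectly. You claim that ``because the objective is linearly separable over buyers and, by linearity of expectation, each buyer's contribution scales at least linearly with its per-item survival probability.'' Linear separability over buyers says nothing about how a single buyer's benchmark $\Rev_i$ behaves when the set of offered items is restricted to a \emph{random} subset $\ItemSubset_i$. Even if each item survives into $\ItemSubset_i$ independently with probability $\gamma$, linearity of expectation does not give $\Ex[\ItemSubset_i]{\Rev_i(\Induced{\XAAllocC_i}{\ItemSubset_i})} \ge \gamma\,\Rev_i(\XAAllocC_i)$; if the buyer's items were complementary (say $\Rev_i$ vanishes unless both items $1$ and $2$ survive), the expected benchmark drops by $\gamma^2$, not $\gamma$. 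The paper fills this hole with the budget-balanced cross-monotonic cost-sharing hypothesis (\autoref{def:cs}): one decomposes $\Rev_i(\Induced{\XAAllocC_i}{\ItemSubset_i}) = \sum_{j\in\ItemSubset_i}\xi_i(j,\Induced{\XAAllocC_i}{\ItemSubset_i})$ (budget balance), bounds each share from below by $\xi_i(j,\XAAllocC_i)$ (cross-monotonicity), and only \emph{then} is the per-item $\Prx{j\in\ItemSubset_i}\ge\gamma$ guarantee enough, via linearity of expectation over the cost shares, to recover a $\gamma$ fraction. That cost-sharing step is exactly what the theorem statement defers under ``further assumptions,'' and your sketch never supplies it.

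A secondary gap is in the incentive-compatibility claim for the post-rounding variant. You assert IC is ``preserved because rounding acts only through the supply side while payments are adjusted truthfully per buyer,'' but after $\Cons{\Mech_i}{\XAAllocC_i}$ has committed a tentative allocation $\XPAlloc'_i$, deallocating a random sub-bundle while scaling payment by $\gamma$ is only truthful if the buyer's value for the surviving bundle also scales by exactly $\gamma$ in expectation. That is where assumption \myref{A'}{asm:post:valuations} (matroid weighted-rank valuations) enters: it forces the value on $\XPAlloc'_i$ to be additive over items, so independent $\gamma$-survival of each item scales value linearly. Without it the utility of a report can change sign under rounding and BIC fails. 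Similarly you do not distinguish that pre-rounding yields DSIC while post-rounding only yields BIC (and hence requires $\Mechs$ not to demand anything stronger than BIC, assumption \myref{A'}{asm:post:bic}). These are not cosmetic caveats; they are the ``further assumptions'' whose role your proof has to make explicit rather than fold into a linearity-of-expectation appeal.
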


In order to explain the construction of the multi buyer mechanism, we shall first describe the magician's
problem and its solution, which is used in both pre-rounding and post-rounding for equalizing the expected
probabilities of preclusion/deallocation over all buyers.

\section{The Magician's Problem}
\label{sec:magician}%
In this section, we present an abstract online stochastic toy problem and a near-optimal solution for it which provides the main
ingredient for combining single buyer mechanisms to form multi buyer mechanisms; it is also used to prove a generalized prophet
inequality.

\begin{definition}[The Magician's Problem]
\label{def:magician}%
A magician is presented with a series of boxes one by one, in an online fashion. There is a prize hidden in
one of the boxes. The magician has $\K$ magic wands that can be used to open the boxes. If a wand is used on
box $i$, it opens, but with a probability of at most $\XBox_i$, which written on the box, the wand breaks.
The magician wishes to maximize the probability of obtaining the prize, but unfortunately the sequence of
boxes, the written probabilities, and the box in which the prize is hidden are arranged by a villain, and the
magician has no prior information about them (not even the number of the boxes). However, it is guaranteed
that $\sum_i \XBox_i \le \K$, and that the villain has to prepare the sequence of boxes in advance (i.e.,
cannot make any changes once the process has started).
\end{definition}

The magician could fail to open a box either because (a) he might choose to skip the box, or (b) he might run
out of wands before getting to the box. Note that once the magician fixes his strategy, the best strategy for
the villain is to put the prize in the box that has the lowest ex ante probability of being opened, based on
the magician's strategy. Therefore, in order for the magician to obtain the prize with a probability of at
least $\gamma$, he has to devise a strategy that guarantees an ex ante probability of at least $\gamma$ for
opening each box. Notice that allowing the prize to be split among multiple boxes does not affect the
problem. It is easy to show the following strategy ensures an ex ante probability of at least $\frac{1}{4}$
for opening each box: for each box randomize and use a wand with probability $\frac{1}{2}$. But can we do
better?
We present an algorithm parameterized by a probability $\gamma \in \RangeAB{0}{1}$ which guarantees a minimum
ex-ante probability of $\gamma$ for opening each box while trying to minimize the number of wands broken. In
\autoref{thm:magician}, we show that for $\gamma \le 1-\frac{1}{\sqrt{\K+3}}$ this algorithm never requires
more than $\K$ wands.

\begin{samepage}
\begin{definition}[$\gamma$-Conservative Magician] \
\label{def:gcm}%
The magician adaptively computes a sequence of thresholds $\Threshold_1, \Threshold_2,\ldots \in \PosIntsZ$
and makes a decision about each box as follows: let $\BrokenRV_{i}$ denote the number of wands broken prior
to seeing the $i^{th}$ box; the magician makes a decision about box $i$ by comparing $\BrokenRV_{i}$ against
$\Threshold_{i}$; if $\BrokenRV_{i} < \Threshold_{i}$, it opens the box; if $\BrokenRV_{i} > \Threshold_{i}$,
it does not open the box; and if $\BrokenRV_{i} = \Threshold_{i}$, it randomizes and opens the box with some
probability (to be defined). The magician chooses the smallest threshold $\Threshold_{i}$ for which
$\Prx{\BrokenRV_{i} \le \Threshold_{i}} \ge \gamma$ where the probability is computed ex ante (i.e., not
conditioned on past broken wands). Note that $\gamma$ is a parameter that is given. Let
$\BrokenCDF{i}{\Indw}=\Prx{\BrokenRV_{i} \le \Indw}$ denote the ex ante CDF of random variable
$\BrokenRV_{i}$, and let $\OpenRV_{i}$ be the indicator random variable which is $1$ if{f} the magician opens
the box $i$. Formally, the probability with which the magician should open box $i$ condition on
$\BrokenRV_{i}$ is computed as follows.
\begin{align}
     \Prx{\OpenRV_{i}=1|\BrokenRV_{i}} &=
        \begin{cases}
        1                                                               & \BrokenRV_{i} < \Threshold_{i} \\
        (\gamma-\BrokenCDF{i}{\Threshold_{i}-1})/(\BrokenCDF{i}{\Threshold_{i}}-\BrokenCDF{i}{\Threshold_{i}-1}) \quad           & \BrokenRV_{i}= \Threshold_{i} \\
        0                                                               & \BrokenRV_{i} > \Threshold_{i}
        \end{cases} \tag{$\OpenProb$} \label{eq:dp:s} \\
    \Threshold_{i} &=  \min \{\Indw | \BrokenCDF{i}{\Indw} \ge \gamma \}   \tag{$\Threshold$} \label{eq:dp:theta}
\end{align}
Observe that $\Threshold_i$ is in fact computed before seeing box $i$ itself.

Define $\OpenProb_{i}^{\Indw}=\Prx{\OpenRV_{i}=1|\BrokenRV_{i}=\Indw}$; the CDF of $\BrokenRV_{i+1}$ can be
computed from the CDF of $\BrokenRV_{i}$ and $\XBox_i$ as follows (assume $\XBox_i$ is the exact probability
of breaking a wand for box $i$).
\begin{align}
    \BrokenCDF{i+1}{\Indw} &=
        \begin{cases}
        \OpenProb_{i}^{\Indw}\XBox_{i}\BrokenCDF{i}{\Indw-1}+(1-\OpenProb_{i}^{\Indw}\XBox_{i})\BrokenCDF{i}{\Indw}     \quad            & i \ge 1,  \Indw \ge 0 \\
        1                                                                                               & i = 0,  \Indw\ge 0 \\
        0                                                                                               & \text{otherwise.}
        \end{cases} \tag{$\BrokenCDFSignW$} \label{eq:dp:cdf}
\end{align}
Furthermore, if each $\XBox_i$ is just an upper bound on the probability of breaking a wand on box $i$, then
the above definition of $\BrokenCDF{i}{\wdot}$ stochastically dominates the actual CDF of $\BrokenRV_{i}$,
and the magician opens each box with a probability of at least $\gamma$.
\end{definition}
\end{samepage}

In order to prove that a $\gamma$-conservative magician does not fail for a given choice of $\gamma$, we must
show that the thresholds $\Threshold_i$ are no more than $\K-1$. The following theorem states a condition on
$\gamma$ that is sufficient to guarantee that $\Threshold_i \le \K-1$ for all $i$.

\begin{theorem}[$\gamma$-Conservative Magician]
\label{thm:magician}%
For any $\gamma \le 1-\frac{1}{\sqrt{\K+3}}$, a $\gamma$-conservative magician with $\K$ wands opens each box
with an ex ante probability of at least $\gamma$. Furthermore, if $\XBox_i$ is the exact probability (not
just an upper bound) of breaking a wand on box $i$ for each $i$, then each box is opened with an ex ante
probability exactly $\gamma$\footnote{In particular the fact that the probability of the event of breaking a
wand for the $i^{th}$ box is exactly $\XBox_i$, conditioned on any sequence of prior events, implies that
these events are independent for different boxes.}
\end{theorem}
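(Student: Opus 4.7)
The plan is to reduce both conclusions to the single statement that the threshold $\Threshold_i$ chosen in \autoref{def:gcm} satisfies $\Threshold_i \le \K - 1$ for every $i$, and then prove that statement by induction on $i$. Once $\Threshold_i \le \K - 1$ is known, the ``exactly $\gamma$'' conclusion is immediate: the magician has at least one wand available whenever his rule prescribes opening a box (he opens only when $\BrokenRV_i \le \Threshold_i < \K$), and plugging the definition of $\OpenProb_i^{\Threshold_i}$ in \eqref{eq:dp:s} into the law of total probability gives
\[
\Prx{\OpenRV_i = 1} \;=\; \BrokenCDF{i}{\Threshold_i - 1} + \OpenProb_i^{\Threshold_i}\bigl(\BrokenCDF{i}{\Threshold_i} - \BrokenCDF{i}{\Threshold_i - 1}\bigr) \;=\; \gamma .
\]
When $\XBox_i$ is merely an upper bound on the true breaking probability, the actual CDF of $\BrokenRV_i$ stochastically dominates the one computed via \eqref{eq:dp:cdf}, so $\BrokenCDF{i}{\Threshold_i} \ge \gamma$ still holds under the true CDF and the opening probability is at least $\gamma$.

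For the induction itself, the base case is trivial since $\BrokenCDF{1}{\K - 1} = 1 \ge \gamma$. Assume inductively that $\Threshold_j \le \K - 1$ for every $j < i$. Each such box $j$ is then opened with ex ante probability exactly $\gamma$ by the paragraph above, so wands break with marginal probability $\XBox_j\gamma$ and $\Ex{\BrokenRV_i} = \gamma\sum_{j < i}\XBox_j \le \gamma\K$. Moreover, because $\OpenProb_j^{\Indw}$ is non-increasing in $\Indw$, a wand is less likely to break on box $j$ when $\BrokenRV_j$ is already large, which gives a negative-correlation property for the wand-breaking indicators. I would turn these facts into $\BrokenCDF{i}{\K - 1} \ge \gamma$, i.e.\ $\Prx{\BrokenRV_i = \K} \le 1 - \gamma$, by tracking a potential $\Phi_i = \Ex{\varphi(\BrokenRV_i)}$ for a convex nonnegative $\varphi$ vanishing at $\K$ (for instance $\varphi(\Indw) = (\K - \Indw)^2$ or a falling-factorial variant); the key identity driving the one-step drift analysis is
\[
\Ex{\OpenProb_i^{\BrokenRV_i}(\K - \BrokenRV_i)} \;=\; \gamma\,\Ex{\K - \BrokenRV_i \mid \OpenRV_i = 1} \;\ge\; \gamma(\K - \Threshold_i) \;\ge\; \gamma,
\]
which, after iteration and a Markov-style conversion of $\Phi_i$ into a tail bound on $\BrokenRV_i$, should yield an upper bound on $\Prx{\BrokenRV_i = \K}$.

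The main obstacle is recovering the \emph{exact} constant $1/\sqrt{\K + 3}$. A plain Chebyshev argument using $\operatorname{Var}(\BrokenRV_i) \le \gamma\K$ only delivers the weaker $1 - O(\sqrt{\log\K/\K})$ bound of HKS07, and the simple potentials above fall short as well. To match $1 - 1/\sqrt{\K + 3}$ I expect one must analyse a worst-case instance directly, most cleanly by taking a continuous-time limit in which all $\XBox_j \to 0$ while $\sum_j \XBox_j = \K$ is held fixed, solving the resulting ODE for the evolution of $\BrokenCDF{i}{\cdot}$ under the $\gamma$-conservative policy, and identifying the critical $\gamma$ at which the phase durations during which $\Threshold$ sits at each successive level $0, 1, \dots, \K - 1$ sum to exactly $\K$. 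Setting that sum equal to $\K$ and solving for $\gamma$ should then produce precisely $1 - 1/\sqrt{\K + 3}$.
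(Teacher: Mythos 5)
Your reduction of both conclusions to the single claim $\Threshold_i \le \K - 1$, and the way you then read off the exact-$\gamma$ and at-least-$\gamma$ statements from \eqref{eq:dp:s} together with stochastic dominance, is correct and is essentially Part~1 of the paper's own argument. The observation that the inductive hypothesis gives $\Ex{\BrokenRV_i} = \gamma\sum_{j<i}\XBox_j \le \gamma\K$ is also used in the paper. Where your proposal has a genuine gap is exactly where you flag it: you have no completed proof of $\Threshold_i \le \K - 1$ at the sharp threshold $\gamma = 1 - 1/\sqrt{\K+3}$. A Chebyshev or variance bound, as you note, only gives $1 - O(\sqrt{\log\K/\K})$, and the $(\K - \Indw)^2$ (or falling-factorial) potential does not obviously close the gap; the ODE/continuum limit is only gestured at. There is also a circularity in your displayed drift identity: the bound $\Ex{\K - \BrokenRV_i \mid \OpenRV_i = 1} \ge \K - \Threshold_i \ge 1$ invokes $\Threshold_i \le \K - 1$, which is exactly what the induction step is supposed to establish for box $i$; to make the drift argument non-circular you would need to phrase it in terms of $\Threshold_{i-1}$ and restructure accordingly.

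The paper proves the threshold bound by a different, structural route that your second-moment/potential machinery does not supply. It reinterprets the recursion \eqref{eq:dp:cdf} as a sand-displacement process and proves by induction on $i$ (Theorem~\ref{thm:sand}) the pointwise geometric decay
\begin{align*}
\BrokenCDF{i}{\Indw} < \gamma\,\BrokenCDF{i}{\Indw+1}, \qquad \forall\, \Indw \in \RangeN{\Threshold_i},
\end{align*}
a much stronger statement than any moment bound: it says the mass of $\BrokenRV_i$ concentrates geometrically against the current threshold. From it one gets $\sum_{\Indw=0}^{\Threshold_i}\BrokenCDF{i}{\Indw} < \frac{1-\gamma^{\Threshold_i+1}}{1-\gamma}$, i.e.\ the expected distance $\Threshold_i + 1 - \Ex{\BrokenRV_i}$ to the barrier is at most $\frac{1}{1-\gamma}$. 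Combined with $\Ex{\BrokenRV_i} \le \gamma\K$, this gives $\Threshold_i + 1 < \gamma\K + \frac{1-\gamma^{\Threshold_i+1}}{1-\gamma}$, and requiring this to fail at $\Threshold_i = \K$ is a quadratic in $\gamma$ whose root is (up to a small-$\K$ correction) precisely $1 - 1/\sqrt{\K+3}$. Without an analogue of this geometric-decay lemma --- or an actually worked-out continuum argument --- your proposal does not yet establish the stated constant.
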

\begin{proof}
See \SRef{sec:magician_analysis}.
\end{proof}

\begin{definition}[$\gamma_\K$]
\label{def:magician_param}%
We define $\gamma_\K$ to be the largest probability such that for any $\K' \ge \K$ and any instance of the
magician's problem with $\K'$ wands, the thresholds computed by a $\gamma_\K$-conservative magician are less
than $\K'$. In other words, $\gamma_\K$ is the optimal choice of $\gamma$ which works for all instances with
$\K' \ge \K$ wands. By \autoref{thm:magician}, we know that $\gamma_\K$ must be\footnote{Because for any $\K'
\ge \K$ obviously $1-\frac{1}{\sqrt{\K+3}} \le 1-\frac{1}{\sqrt{\K'+3}}$.} at least
$1-\frac{1}{\sqrt{\K+3}}$.
\end{definition}

Observe that $\gamma_\K$ is a non-decreasing function in $\K$ which is at least $\frac{1}{2}$ (when $\K=1$) and approaches $1$ as
$\K \to \infty$. The next theorem shows that the lower bound of $1-\frac{1}{\sqrt{\K+3}}$ on $\gamma_\K$ cannot be far from the
optimal.

\begin{theorem}[Hardness of Magician's Problem]
\label{thm:magician_hardness}%
For any $\epsilon > 0$, it is not possible to guarantee an ex ante probability of $1-\frac{\K^\K}{e^\K
\K!}+\epsilon$ for opening each box (i.e., no magician can guarantee it). Note that $1-\frac{\K^\K}{e^\K \K!}
\approx 1-\frac{1}{\sqrt{2\pi \K}}$ by Stirling's approximation.
\end{theorem}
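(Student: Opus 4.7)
The plan is to prove the claim by constructing an adversarial family of instances and analyzing the best achievable minimum ex-ante opening probability $\gamma$ in the limit. I would use the instance with $n$ identical boxes, each with $\XBox_i = \K/n$, which saturates $\sum_i \XBox_i = \K$. By a standard symmetrization argument---averaging any strategy over the uniform distribution on permutations of boxes, which leave the instance invariant---one may assume WLOG the strategy is symmetric, so that $\Prx{\OpenRV_i = 1} = \gamma$ for every box $i$.

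Second, I would pass to the $n \to \infty$ continuum limit. The magician's strategy is then encoded by a measurable $\sigma : [0,1] \times \{0, \ldots, \K\} \to [0,1]$ with $\sigma(t, \K) = 0$, and the wand-break count $\BrokenRV_t$ becomes a pure-birth Markov chain on $\{0, \ldots, \K\}$ with state-dependent jump rate $\K \sigma(t, \BrokenRV_t)$, absorbed at $\K$. The ex-ante opening-probability constraint translates to $p_t := \Ex{\sigma(t, \BrokenRV_t)\,\mathds{1}[\BrokenRV_t < \K]} \ge \gamma$ for every $t \in [0, 1]$; Bernoulli wand-breaks with parameter $\K/n$ converge to a Poisson-type process, which is how the Poisson quantity $\K^\K/(e^\K \K!)$ naturally enters.

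The key chain of inequalities is $\gamma \le p_1 \le \Prx{\BrokenRV_1 < \K} = 1 - \Prx{\BrokenRV_1 = \K}$, reducing the problem to showing $\Prx{\BrokenRV_1 = \K} \ge \K^\K/(e^\K \K!) - o_n(1)$ for every feasible strategy. Since $\Ex{\BrokenRV_1} = \K \int_0^1 p_t\,dt \ge \K\gamma$, when $\gamma$ is close to the claimed upper bound the Poisson approximation forces $\BrokenRV_1$'s distribution (truncated at $\K$) to place at least the Poisson-mode mass $\K^\K e^{-\K}/\K!$ at the absorbing state $\K$; the extremal case is the ``always open'' strategy, under which $\BrokenRV_1$ is exactly truncated Poisson$(\K)$ and the mass at $\K$ matches the Poisson mode value in the limit.

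The main obstacle is rigorously lower-bounding $\Prx{\BrokenRV_1 = \K}$ uniformly over all feasible strategies by the Poisson mode. A clean route is LP duality: introduce multipliers $\mu(t) \ge 0$ for the family of constraints $p_t \ge \gamma$, compute the Euler--Lagrange stationarity condition for the Lagrangian $\Prx{\BrokenRV_1 = \K} - \int_0^1 \mu(t)\,(p_t - \gamma)\,dt$, and verify that the minimizing strategy is the ``always open'' one, whose induced mass at $\K$ is asymptotically $\K^\K e^{-\K}/\K!$. Alternatively, an exchange/coupling argument can transform any feasible strategy into the greedy one without decreasing $\Prx{\BrokenRV_1 = \K}$. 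The $\epsilon$ slack in the theorem absorbs the $o_n(1)$ error from the discrete-to-continuum passage.
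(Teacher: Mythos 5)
You use the same adversarial family as the paper ($\N$ boxes, each carrying $\XBox_i=\K/\N$), but the argument you run on it has a genuine gap. Your chain $\gamma\le p_1\le 1-\Prx{\BrokenRV_1=\K}$ reduces the theorem to the claim that every strategy satisfying $p_t\ge\gamma$ for all $t$ must have $\Prx{\BrokenRV_1=\K}\ge\K^\K/(e^\K\K!)$; you flag this as the main obstacle and gesture at LP duality or an exchange argument, but you do not carry either through, and the ``extremal case'' you offer as a sanity check is wrong. Under the always-open strategy the absorbed count satisfies $\Prx{\BrokenRV_1=\K}=\Prx{\mathrm{Poisson}(\K)\ge\K}\approx\tfrac12$, which is the Poisson \emph{upper tail}, not the mode mass $\K^\K e^{-\K}/\K!\approx 1/\sqrt{2\pi\K}$. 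Moreover the always-open strategy is not even feasible at the target $\gamma$: it gives $p_1\approx\Prx{\mathrm{Poisson}(\K)<\K}\approx\tfrac12$, far below $1-1/\sqrt{2\pi\K}$. So it cannot be the extremal witness, and nothing in the proposal establishes a uniform lower bound on $\Prx{\BrokenRV_1=\K}$ over the set of feasible strategies; that is a genuine constrained-control problem that remains unsolved.

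The paper's proof sidesteps this entirely by a linear expected-prize comparison rather than a distributional one. Put a dollar in box $i$ independently with probability $\K/\N$ and break the wand exactly when an opened box contains a dollar. Because box $i$'s content is independent of everything the magician has observed before deciding on box $i$, any strategy that opens each box with probability at least $\gamma$ collects at least $\gamma\sum_i\Ex{\XBoxRV_i}=\gamma\K$ in expectation; but with only $\K$ wands the winnings never exceed $\min\bigl(\sum_i\XBoxRV_i,\K\bigr)$, whose expectation tends to $\K\bigl(1-\K^\K e^{-\K}/\K!\bigr)$ as $\N\to\infty$, since $\sum_i\XBoxRV_i\Rightarrow\mathrm{Poisson}(\K)$ and $\Ex{(\mathrm{Poisson}(\K)-\K)^+}=\K\cdot\K^\K e^{-\K}/\K!$. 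Taking $\N$ large enough yields the contradiction. This avoids any optimization over strategies: it never needs to know which strategy minimizes $\Prx{\BrokenRV_1=\K}$. If you want to salvage your route, drop the attempt to bound $\Prx{\BrokenRV_1=\K}$ directly and instead integrate the constraints $p_t\ge\gamma$ against the prize weights; that reduction recovers exactly the paper's argument.
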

\begin{proof}
See \SRef{proof:thm:magician_hardness}.
\end{proof}

\paragraph{Prophet Inequalities}
We prove a generalization of prophet inequalities by a direct reduction to the magician's problem.

\begin{definition}[$\K$-Choice Sum]
A sequence of $\NumRV$ non-negative random numbers $\RV_1, \ldots, \RV_\NumRV$ are drawn from arbitrary
distributions $\CDFRV_1,\ldots, \CDFRV_\NumRV$ one by one in an arbitrary order. A gambler observes the
process and may select $\K$ of the random numbers, with the goal of maximizing the sum of the selected ones;
a random number may only be selected at the time it is drawn, and it cannot be unselected later. The gambler
knows all the distributions in advance, and observes from which distribution the current number is drawn, but
not the order in which the future numbers are drawn. On the other hand, a prophet knows all the actual draws
in advance, so he chooses the $\K$ highest draws. We assume that the order in which the random numbers are
drawn is fixed in advance (i.e., may not change based on the decisions of the gambler).
\end{definition}

\cite{HKS07} proved that there is a strategy for the gambler that guarantees in expectation at least
$1-O(\frac{\sqrt{\ln \K}}{\sqrt{\K}})$ fraction of the payoff of the prophet, using a non-decreasing sequence
of $\K$ stopping rules (thresholds) \footnote{A gambler with stopping rules $\Stopping_1, \ldots,
\Stopping_k$ works as follows. Upon seeing $\RV_i$, he selects it iff $\RV_i \ge \Stopping_{j+1}$ where $j$
is the number of random draws selected so far.}. Next, we construct a gambler that obtains in expectation at
least $\gamma_\K$ fraction of the prophet's payoff, using a $\gamma_\K$-conservative magician as a black box.
Note that $\gamma_\K \ge 1-\frac{1}{\sqrt{\K+3}}$. This gambler uses only a single threshold. However, he may
skip some of the random variables at random.

\begin{theorem}[Prophet Inequalities -- $\K$-Choice Sum]
\label{thm:gambler}%
The following strategy ensures that the gambler obtains at least $\gamma_\K$ fraction of payoff of the
prophet in expectation. \footnote{To simplify the exposition we assume that the distributions do not have
point masses. The result holds with slight modifications if we allow point masses.}
\begin{samepage}
\begin{itemize}
\item
Find a threshold $\Stopping$ such that $\sum_i \Prx{\RV_i > \Stopping} = \K$ (e.g., by doing a binary search on $\Stopping$).
\item
Use a $\gamma_\K$-conservative magician with $\K$ wands. Upon seeing each $\RV_i$, create a box and write
$\XBox_i = \Prx{\RV_i > \Stopping}$ on it and present it to the magician. If the magician chooses to open the
box and also $\RV_i
> \Stopping$, then select $\RV_i$ and break the magician's wand, otherwise skip $\RV_i$.
\end{itemize}
\end{samepage}
\end{theorem}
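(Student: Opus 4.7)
The plan is to lower-bound the gambler's expected payoff using the magician's opening guarantee (\autoref{thm:magician}), upper-bound the prophet's expected payoff by a standard threshold argument, and check that these bounds differ by a factor of exactly $\gamma_\K$.

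First I would establish the prophet upper bound. Using $v \le \Stopping + (v-\Stopping)^+$ for every $v \ge 0$, the prophet's selected set $S$ of size at most $\K$ satisfies $\sum_{i \in S}\RV_i \le \K\Stopping + \sum_i (\RV_i - \Stopping)^+$. Taking expectations yields
\begin{equation*}
\Ex{\text{Prophet}} \le \K\Stopping + \sum_i \Ex{(\RV_i-\Stopping)^+}.
\end{equation*}
By construction of $\Stopping$, we have $\sum_i \XBox_i = \sum_i \Prx{\RV_i > \Stopping} = \K$, so the sequence of boxes forms a valid instance of the magician's problem.

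Next I would analyze the gambler's payoff. The key structural observation is that the magician's decision for box $i$ is a function of $\BrokenRV_i$ and independent internal coin flips, and $\BrokenRV_i$ is determined by the wand-breaking events of boxes $1,\ldots,i-1$, each of which depends only on $\RV_1,\ldots,\RV_{i-1}$ and the magician's prior randomization. Hence $\OpenRV_i$ is independent of $\RV_i$. Combined with \autoref{thm:magician} (applicable since $\gamma_\K \ge 1-\frac{1}{\sqrt{\K+3}}$) and the fact that $\XBox_i$ here is the exact breaking probability rather than merely an upper bound, each box is opened with ex-ante probability exactly $\gamma_\K$.

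The gambler selects $\RV_i$ precisely when $\OpenRV_i=1$ and $\RV_i > \Stopping$, so by the independence just noted,
\begin{equation*}
\Ex{\RV_i\cdot\OpenRV_i\cdot\mathbf{1}[\RV_i>\Stopping]} \;=\; \gamma_\K\bigl(\Stopping\,\Prx{\RV_i>\Stopping} + \Ex{(\RV_i-\Stopping)^+}\bigr).
\end{equation*}
Summing over $i$ and invoking $\sum_i \Prx{\RV_i>\Stopping}=\K$ produces a total gambler payoff of $\gamma_\K\bigl(\K\Stopping + \sum_i\Ex{(\RV_i-\Stopping)^+}\bigr) \ge \gamma_\K\,\Ex{\text{Prophet}}$, as claimed. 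The main subtlety to verify is the independence step, i.e., that $\OpenRV_i$ is measurable with respect to $\RV_1,\ldots,\RV_{i-1}$ together with external randomness and therefore factors out of the expectation. The footnote's no-atoms assumption only enters in guaranteeing an exact threshold $\Stopping$ with $\sum_i\Prx{\RV_i>\Stopping}=\K$; point masses at the boundary can be handled by randomized tie-breaking so that the same equality holds in expectation.
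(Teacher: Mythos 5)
Your proof is correct, but it follows a genuinely different route from the paper's for the prophet upper bound. The paper sets up a convex relaxation of the prophet's problem (with decision variables $\XBox_i$, a concave objective $\sum_i \UVal_i(\XBox_i)$, and the constraint $\sum_i \XBox_i \le \K$), then argues via the Lagrangian and KKT stationarity that the optimizer is exactly $\XBox_i = \Prx{\RV_i > \Stopping}$ for the unique $\Stopping$ satisfying $\sum_i \Prx{\RV_i > \Stopping} = \K$, and that the resulting optimal CP value $\sum_i \Ex{\RV_i \mid \RV_i > \Stopping}\Prx{\RV_i > \Stopping}$ is the bound. You instead use the elementary ``threshold decomposition'' $v \le \Stopping + (v-\Stopping)^+$ together with $|S| \le \K$ to get $\Ex{\text{Prophet}} \le \K\Stopping + \sum_i \Ex{(\RV_i-\Stopping)^+}$; plugging in $\sum_i \Prx{\RV_i>\Stopping}=\K$ makes this coincide numerically with the paper's CP value, since $\Stopping\Prx{\RV_i>\Stopping} + \Ex{(\RV_i-\Stopping)^+} = \Ex{\RV_i \cdot \mathbf{1}[\RV_i>\Stopping]}$. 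Your route avoids setting up the CP and verifying KKT/complementary-slackness; the paper's route, by contrast, explains \emph{why} this particular $\Stopping$ is the right choice (it is the optimizer of a natural relaxation), which is more illuminating for generalizations but heavier machinery. On the gambler side, both arguments are the same in substance, but you state explicitly the independence of $\OpenRV_i$ from $\RV_i$ that makes the expectation factor, whereas the paper leaves this implicit; making it explicit is the right thing to do, since the claim $\Ex{\RV_i \cdot \OpenRV_i \cdot \mathbf{1}[\RV_i>\Stopping]} = \gamma_\K \, \Ex{\RV_i \cdot \mathbf{1}[\RV_i>\Stopping]}$ depends precisely on $\OpenRV_i$ being measurable with respect to $(\RV_1,\ldots,\RV_{i-1})$ and external coins only.
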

\begin{proof}
First, we compute an upper bound on the expected payoff of the prophet. Let $\XBox_i$ be the ex ante
probability (i.e., before any random number is drawn) that the prophet chooses $\RV_i$ (i.e. the probability
that $\RV_i$ is among the $\K$ highest draws). Let $\UVal_i(\XBox_i)$ denote the maximum possible
contribution of the random variable $\RV_i$ to the expected payoff of the prophet if $\RV_i$ is selected with
an ex ante probability $\XBox_i$. Note that $\UVal_i(\XBox_i)$ is equal to the expected value of $\RV_i$
conditioned on being above the $1-\XBox_i$ quantile, multiplied by the probability of $\RV_i$ being above
that quantile. Assuming $\CDFRV_i(\wdot)$ and $\PDFRV_i(\wdot)$ denote the CDF and PDF of $\RV_i$, we can
write $\UVal_i(\XBox_i) = \int_{\CDFRV_i^{-1}(1-\XBox_i)}^\infty \VVal \PDFRV_i(\VVal) d\VVal$. By changing
the integration variable and applying the chain rule we get $\UVal_i(\XBox_i) = \int_0^{\XBox_i}
\CDFRV_i^{-1}(1-\XBox) d\XBox$. Observe that $\frac{d}{d\XBox_i} \UVal_i(\XBox_i) = \CDFRV_i^{-1}(1-\XBox_i)$
is a non-increasing function, so $\UVal_i(\XBox_i)$ is a concave function. Furthermore, $\sum_i \XBox_i \le
\K$ because the prophet cannot choose more than $\K$ random draws. So the optimal value of the following
convex program is an upper bound on the payoff of the prophet.

\begin{alignat*}{3}
    \text{maximize} &   &\qquad & \sum_i \UVal_i(\XBox_i) \\
    \text{subject to}&  &       & \sum_i  \XBox_i \le \K  \tag{$\Stopping$} \\
                    &   &       & \XBox_i \ge 0, & & \forall i \in \RangeN{\NumRV} \tag{$\mu_i$}
\end{alignat*}

Define the Lagrangian for the above convex program as
\[\Lag(\XBox,\Stopping,\mu)=-\sum_i \UVal_i(\XBox_i)+\Stopping\sdot\left(\sum_i \XBox_i-\K\right)-\sum_i \mu_i \XBox_i.\]
By KKT stationarity condition, at the optimal assignment, it must be $\frac{\partial}{\partial \XBox_i}
\Lag(q,\Stopping,\mu) = 0$. On the other hand, $\frac{\partial}{\partial \XBox_i}
\Lag(q,\Stopping,\mu)=-\CDFRV_i^{-1}(1-\XBox_i)+\Stopping-\mu_i$. Assuming that $\XBox_i
> 0$, by complementary slackness $\mu_i = 0$, which then implies that $\XBox_i=1-\CDFRV_i(\Stopping)$, so $\XBox_i=\Prx{\RV_i > \Stopping}$.
Furthermore, it is easy to show that the first constraint must be tight, which implies that $\sum_i \Prx{\RV_i > \Stopping}=\K$.
Observe that the contribution of each $\RV_i$ to the objective value of the convex program is exactly $\Ex{\RV_i|\RV_i
> \Stopping} \Prx{\RV_i > \Stopping}$. By using a $\gamma_\K$-conservative magician we can ensure that each box is opened with
probability at least $\gamma_\K$ which implies the contribution of each $\RV_i$ to the expected payoff of the gambler is
$\Ex{\RV_i|\RV_i > \Stopping} \Prx{\RV_i > \Stopping} \gamma_\K$ which proves that the expected payoff of the gambler is at least
$\gamma_\K$ fraction of optimal objective value of the convex program, which was itself and upper bound on the expected payoff of
the prophet.
\end{proof}

\section{Generic Multi Buyer Mechanisms}
\label{sec:expansion}%

In this section, we present a formal description of the two generic multi buyer mechanisms outlined toward
the end of \SRef{sec:decompose}. Throughout the rest of this section we assume that for each buyer $i \in
[\NumAgents]$ we can compute a single buyer mechanism $\Mech_i$ and a corresponding concave benchmark
$\Rev_i$, which together provide $\alpha$-approximation of the optimal single buyer mechanism/optimal
benchmark for buyer $i$. We show that the resulting multi buyer mechanism will be
$\gamma_\NumUnits\alpha$-approximation of the the optimal multi buyer mechanism in $\Mechs$, where $\NumUnits
= \min_j \NumUnits_j$ and $\gamma_\NumUnits$ is the optimal magician parameter which is at least
$1-\frac{1}{\sqrt{\NumUnits+3}}$ (\autoref{def:magician_param}) .

\subsection{Pre-Rounding}
This mechanism serves the buyers sequentially (arbitrary order); for each buyer $i$, it selects a subset
$\ItemSubset_i$ of available items and runs the single buyer mechanism
$\Cons{\Mech_i}{\Induced{\XAAllocC_i}{\ItemSubset_i}}$, where $\XAAllocC_i$ is an optimal assignment for the
benchmark convex program \eqref{P:OPT'}, and  $\Induced{\XAAllocC_i}{\ItemSubset_i}$ denotes the vector
resulting from $\XAAllocC_i$ by zeroing the entries corresponding to items not in $\ItemSubset_i$. In
particular, this mechanism sometimes precludes some items from early buyers to make them available to late
buyers. For each item, the mechanism tries to minimize the probability of preclusion for each buyer by
equalizing it for all buyers. Note that, for any given pair of buyer and item, we only care about the
probability of preclusion in expectation, where the expectation is taken over the types of other buyers and
the random choices of the mechanism. The mechanism is explained in detail in \autoref{def:pre}.

\begin{samepage}
\begin{definition}[$\gamma$-Pre-Rounding] \ \
\label{def:pre}%
\begin{enumerate}[(I)]
\item
Solve the convex program \eqref{P:OPT'} and let $\XAAllocC$ be an optimal assignment.

\item
For each item $j \in \RangeN{\NumItems}$, create an instance of $\gamma$-conservative magician
(\autoref{def:gcm}) with $\NumUnits_j$ wands (this will be referred to as the $j^{th}$ magician). We will use
these magicians through the rest of the mechanism. Note that $\gamma$ is a parameter that is given.

\item
For each buyer $i \in \RangeN{\NumAgents}$:
\begin{enumerate}
\item
For each item $j \in \RangeN{\NumItems}$, write $\XAAllocC_{ij}$ on a box and present it to the $j^{th}$
magician. Let $\ItemSubset_i$ be the set of items where the corresponding magicians opened the box.

\item
Run $\Cons{\Mech_i}{\Induced{\XAAllocC_i}{\ItemSubset_i}}$ on buyer $i$ and use its outcome as the final
outcome for buyer $i$.

\item
For each item $j \in \RangeN{\NumItems}$, if a unit of item $j$ was allocated to buyer $i$ in the previous
step, break the wand of the $j^{th}$ magician.

\end{enumerate}
\end{enumerate}
\end{definition}
\end{samepage}

Note that since $\XAAllocC$ is a feasible assignment for convex program \eqref{P:OPT'}, it must satisfy
$\sum_i \XAAllocC_{ij} \le \NumUnits_j$, so by setting $\gamma \leftarrow \gamma_\NumUnits$ and by
\autoref{thm:magician} and \autoref{def:magician_param} we can argue that each $\ItemSubset_i$ includes each
item $j$ with probability at least $\gamma_\NumUnits$ where $\gamma_\NumUnits$ is at least
$1-\frac{1}{\sqrt{\NumUnits+3}}$.

In order for the above mechanism to retain at least a $\gamma$-fraction of the the expected objective value
of each $\Cons{\Mech_i}{\XAAllocC_i}$, further technical assumptions are needed in addition to $\gamma \le
\gamma_\NumUnits$. We show that it is enough to assume each $\Rev_i$ has a budget-balanced and cross
monotonic cost sharing scheme.

\begin{definition}[Budget Balanced Cross Monotonic Cost Sharing Scheme]
\label{def:cs}%
A function $\Rev : \RangeAB{0}{1}^\NumItems \to \PosReals$ has a budget balanced cross monotonic cost sharing
scheme iff there exists a cost share function $\xi : \RangeN{\NumItems} \times \RangeAB{0}{1}^\NumItems \to
\PosReals$ with the following two properties:
\begin{enumerate}[(i)]
\item
$\xi$ must be budget balanced which means for all $\XAAlloc \in \RangeAB{0}{1}^\NumItems$ and $\ItemSubset
\subseteq \RangeN{\NumItems}$, $\sum_{j \in \ItemSubset} \xi(j, \Induced{\XAAlloc}{\ItemSubset}) =
\Rev(\Induced{\XAAlloc}{\ItemSubset})$.
\item
$\xi$ must be cross monotonic which means for all  $\XAAlloc \in \RangeAB{0}{1}^\NumItems$, $j\in
\RangeN{\NumItems}$ and $\ItemSubset,\ItemSubset' \subseteq \RangeN{\NumItems}$, $\xi(j,
\Induced{\XAAlloc}{\ItemSubset}) \ge \xi(j, \Induced{\XAAlloc}{\ItemSubset \cup \ItemSubset'})$.
\end{enumerate}
\end{definition}

Intuitively, a cost share function associates a fraction of the expected objective value returned by the
benchmark function $\Rev$ to each item; and ensures that the fraction associated with each item does not
decrease when other items are excluded. In particular, the above assumption holds if
$\Rev(\Induced{\XAAlloc}{\ItemSubset})$ is a submodular function of $\ItemSubset$ (e.g., for welfare
maximization, assuming that buyers' valuations are submodular\footnote{We conjecture that it holds in general
for revenue maximization, when buyers' valuations are submodular and $\Mechs$ is restricted to mechanisms
which use buyer specific item pricing.}). Note that it is enough to show that such a cost sharing function
exists; however it is never used in the mechanism and its computation is not required.

\begin{theorem}[$\gamma$-Pre-Rounding]
\label{thm:pre}%
Suppose for each buyer $i$, $\Mech_i$ is an $\alpha$-approximate incentive compatible single buyer mechanism,
and $\Rev_i$ is the corresponding concave benchmark. Also suppose $\Rev_i$ has a budget balanced cross
monotonic cost sharing scheme. Then, for any $\gamma \in \RangeAB{0}{\gamma_\NumUnits}$, the
$\gamma$-pre-rounding mechanism (\autoref{def:pre}) is dominant strategy incentive compatible (DSIC)
mechanism which is in $\Mechs$ and is a $\gamma\alpha$-approximation of the optimal mechanism in $\Mechs$.
\end{theorem}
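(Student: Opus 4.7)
The plan is to verify three properties of the $\gamma$-pre-rounding mechanism $\Mech$: it is DSIC, it lies in $\Mechs$, and its expected objective value is at least $\gamma\alpha\,\overline{OPT}$. By \autoref{thm:decomposition}, the last of these dominates the optimal multi-buyer value and delivers the claimed $\gamma\alpha$-approximation.

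For DSIC, I would condition on the realization of $\ItemSubset_i$, which depends only on the magicians' internal randomness and on the types reported by buyers served before $i$, and is therefore independent of buyer $i$'s report. Given $\ItemSubset_i$, buyer $i$ is served by $\Cons{\Mech_i}{\Induced{\XAAllocC_i}{\ItemSubset_i}}$, which is IC by hypothesis (and hence DSIC, since a single-buyer BIC mechanism is DSIC); pointwise dominance transfers to the overall mechanism. For $\Mech \in \Mechs$, the decomposability assumption reduces the task to checking pointwise supply feasibility together with $\InducedMech{i}{\Mech} \in \Mechs_i$. The supply constraint on item $j$ follows from $\sum_i \XAAllocC_{ij} \le \NumUnits_j$ (feasibility of convex program \eqref{P:OPT'}) and the fact that, since $\gamma \le \gamma_\NumUnits \le \gamma_{\NumUnits_j}$, \autoref{thm:magician} guarantees the $j$-th magician's thresholds are bounded by $\NumUnits_j - 1$, so at most $\NumUnits_j$ of its wands break. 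The induced single-buyer mechanism $\InducedMech{i}{\Mech}$ is a convex combination (over random $\ItemSubset_i$ and other buyers' types) of mechanisms $\Cons{\Mech_i}{\Induced{\XAAllocC_i}{\ItemSubset_i}} \in \Mechs_i$, hence lies in $\Mechs_i$ by the convexity assumption.

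For the approximation bound, fix buyer $i$ and let $\xi$ denote her budget-balanced cross-monotonic cost share for $\Rev_i$. Since $\Mech_i$ is $\alpha$-approximate, the expected objective contribution of buyer $i$ is at least $\alpha \cdot \Ex{\Rev_i(\Induced{\XAAllocC_i}{\ItemSubset_i})}$. Budget balance followed by cross-monotonicity (applied with $\ItemSubset_i \subseteq \RangeN{\NumItems}$) gives
\begin{align*}
\Rev_i(\Induced{\XAAllocC_i}{\ItemSubset_i}) \;=\; \sum_{j \in \ItemSubset_i} \xi(j,\Induced{\XAAllocC_i}{\ItemSubset_i}) \;\ge\; \sum_{j \in \ItemSubset_i} \xi(j,\XAAllocC_i).
\end{align*}
Taking expectations, using $\Prx{j \in \ItemSubset_i} \ge \gamma$ from \autoref{thm:magician} (with $\XAAllocC_{ij}$ serving as an upper bound on the wand-breaking probability of $\Cons{\Mech_i}{\cdot}$ for item $j$), and applying budget balance at $\ItemSubset = \RangeN{\NumItems}$, I obtain
\begin{align*}
\Ex{\Rev_i(\Induced{\XAAllocC_i}{\ItemSubset_i})} \;\ge\; \gamma \sum_j \xi(j,\XAAllocC_i) \;=\; \gamma \Rev_i(\XAAllocC_i).
\end{align*}
Summing over $i$ and invoking \autoref{thm:decomposition} closes the argument.

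The main obstacle is precisely the step that lower-bounds $\Ex{\Rev_i(\Induced{\XAAllocC_i}{\ItemSubset_i})}$ by $\gamma \Rev_i(\XAAllocC_i)$. Concavity of $\Rev_i$ alone is insufficient: Jensen's inequality goes the wrong way for expectations of concave functions, and $\Induced{\XAAllocC_i}{\ItemSubset_i}$ is a random coordinate zero-out rather than a scalar contraction of $\XAAllocC_i$, so a Jensen-then-scaling route is not available. The cross-monotonic budget-balanced cost share is exactly the additive structure that converts the magician's per-item inclusion guarantee into the required per-buyer lower bound.
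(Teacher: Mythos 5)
Your proposal is correct and follows essentially the same route as the paper's own proof: the cost-share argument (budget balance at $\ItemSubset_i$, cross-monotonicity to pass to $\XAAllocC_i$, then the $\gamma$ per-item inclusion guarantee and budget balance again at $\RangeN{\NumItems}$) is exactly the paper's chain, and your DSIC and feasibility arguments mirror the paper's, with the added helpful detail of deducing $\InducedMech{i}{\Mech}\in\Mechs_i$ from convexity of $\Mechs_i$ (which the paper leaves implicit). Your closing remark correctly diagnoses why concavity alone cannot substitute for the cost-share hypothesis here.
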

\begin{proof}
See \SRef{proof:thm:pre}.
\end{proof}

\begin{remark}
The $\gamma$-pre-rounding mechanism assumes no control and no prior information about the order in which buyers are visited. The
order specified in the mechanism is arbitrary and could be replaced by any other ordering which may be unknown in advance. In
particular, this mechanism can be adopted to online settings in which buyers are served in an unknown order.
\end{remark}

\begin{corollary}
In any setting where \autoref{thm:pre} is applicable and when $\Mechs$ includes all feasible BIC mechanisms,
the gap between the optimal DSIC mechanism and the optimal BIC mechanism is at most $1/\gamma_k$. This gap is
at most $2$ (for $k=1$) and vanishes as $k \to \infty$. That is because \autoref{def:pre} is always DSIC, yet
it approximates the optimal mechanism in $\Mechs$.
\end{corollary}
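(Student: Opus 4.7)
The plan is to apply \autoref{thm:pre} in its strongest existential form, with $\alpha = 1$. Given the hypothesis that \autoref{thm:pre} is applicable, for each buyer $i$ there already exists a concave benchmark $\Rev_i$ (namely, the optimal single-buyer benchmark, which is concave by \autoref{thm:opt_concave}) admitting a budget-balanced cross-monotonic cost-sharing scheme. What I would do is instantiate the single-buyer building blocks $\Mech_i$ with the \emph{optimal} single-buyer mechanisms $\Cons{\Mech_i^*}{\XAAllocC_i}$ rather than approximate ones. The key observation is that for a single buyer, every BIC mechanism is trivially DSIC (there are no other buyers whose reports could violate dominance), so $\Cons{\Mech_i^*}{\XAAllocC_i}$ is DSIC and attains exactly $\Rev_i(\XAAllocC_i)$, making the pair $(\Mech_i^*, \Rev_i)$ a $1$-approximate single-buyer mechanism/benchmark.

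Next, I would feed these into the $\gamma$-pre-rounding construction with $\gamma = \gamma_\NumUnits$. By \autoref{thm:pre}, this produces a DSIC mechanism $\Mech \in \Mechs$ whose expected objective value is at least $\gamma_\NumUnits \cdot 1 \cdot \text{OPT}_\Mechs$, where $\text{OPT}_\Mechs$ denotes the optimal value achievable by any mechanism in $\Mechs$. Under the corollary's hypothesis that $\Mechs$ contains every feasible BIC mechanism, $\text{OPT}_\Mechs$ coincides with the value of the optimal BIC mechanism. Since the constructed $\Mech$ is itself DSIC, the value of the optimal DSIC mechanism satisfies $\text{OPT}_{\text{DSIC}} \ge \gamma_\NumUnits \cdot \text{OPT}_{\text{BIC}}$, so the ratio $\text{OPT}_{\text{BIC}} / \text{OPT}_{\text{DSIC}}$ is at most $1/\gamma_\NumUnits$. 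The numerical bounds ($\le 2$ for $\NumUnits = 1$, vanishing as $\NumUnits \to \infty$) then follow directly from the properties of $\gamma_\NumUnits$ recorded in \autoref{def:magician_param}.

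The main obstacle, and the only subtle point in the argument, is the justification for using $\alpha = 1$. \autoref{thm:pre} as stated takes $\Mech_i$ and $\Rev_i$ as given black boxes that are $\alpha$-approximate, and its conclusion is meaningful whether or not $\Mech_i$ is computable in polynomial time; the proof of the theorem only invokes the single-buyer mechanisms analytically. Since the corollary is an existential (information-theoretic) statement about the \emph{gap}, not a polynomial-time construction, I would explicitly note this and invoke the optimal single-buyer mechanisms $\Cons{\Mech_i^*}{\XAAllocC_i}$ that achieve $\Rev_i(\XAAllocC_i)$ by definition of the optimal benchmark. No further work is required beyond observing that the hypotheses of \autoref{thm:pre} (concavity via \autoref{thm:opt_concave}, incentive-compatibility via BIC$=$DSIC for a single buyer, and the cost-sharing scheme inherited from the corollary's applicability assumption) all transfer to this choice of $\Mech_i$ and $\Rev_i$.
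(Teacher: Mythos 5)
Your proposal is correct and follows the same route the paper itself sketches: instantiate $\gamma$-pre-rounding with $\alpha=1$ by taking the optimal single-buyer mechanisms and their optimal (concave) benchmarks $\Rev_i$, observe via \autoref{thm:pre} that the resulting mechanism is DSIC and secures a $\gamma_\NumUnits$-fraction of the optimal value over $\Mechs$, and conclude $\text{OPT}_{\text{DSIC}} \ge \gamma_\NumUnits\,\text{OPT}_{\text{BIC}}$. The points you add explicitly --- that single-buyer BIC coincides with DSIC so the optimal single-buyer mechanism qualifies as the IC black box, and that the claim is information-theoretic so computability of $\Cons{\Mech_i^*}{\XAAllocC_i}$ is irrelevant --- are exactly the details the paper's one-line justification leaves implicit.
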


\subsection{Post-Rounding}
\label{sec:pre}

This mechanism runs $\Cons{\Mech_i}{\XAAllocC_i}$ simultaneously and independently for all buyers $i$ to
compute a tentative allocation/payment for each buyer; it then deallocates some of the items at random to
ensure that the supply constraints are met at every instance; it ensures that the probability of deallocation
perceived by each buyer (i.e., in expectation over the types of other buyers and random choices of the
mechanism) is equalized and therefore simultaneously minimized for all buyers. The payments are also adjusted
respectively. The mechanism is explained in detail in \autoref{def:post}.

\begin{samepage}
\begin{definition}[$\gamma$-Post-Rounding] \ \
\label{def:post}%
\begin{enumerate}[(I)]
\item
Solve the convex program \eqref{P:OPT'} and let $\XAAllocC$ denote an optimal assignment.

\item
Run $\Cons{\Mech_i}{\XAAllocC_i}$ simultaneously and independently for all buyers $i \in
\RangeN{\NumAgents}$, and let $\XPAlloc'_i \subseteq \RangeN{\NumItems}$ and $\Pay'_i \in \PosReals$ denote
respectively the allocation (subset of items) and payment computed by $\Cons{\Mech_i}{\XAAllocC_i}$ for buyer
$i$.

\item
For each item $j \in \RangeN{\NumItems}$, create an instance of $\gamma$-conservative magician
(\autoref{def:gcm}) with $\NumUnits_j$ wands (this will be referred to as the $j^{th}$ magician). We will use
these magicians through the rest of the mechanism. Note that $\gamma$ is a parameter that is given.

\item
For each buyer $i \in \RangeN{\NumAgents}$:
\begin{enumerate}
\item
For each item $j \in \RangeN{\NumItems}$, write $\XAAllocA_{ij}$ on a box and present it to the $j^{th}$
magician, where $\XAAllocA_{ij}$ is the exact probability\footnote{Note that $\XAAllocC_{ij}$ is only an
upper bound on the probability of allocation, so $\XAAllocA_{ij} \le \XAAllocC_{ij}$} of
$\Cons{\Mech_i}{\XAAllocC_i}$ allocating a unit of item $j$ to buyer $i$; let $\ItemSubset_i$ be the set of
items where the corresponding magicians opened the box.

\item
Let $\XPAlloc_i \leftarrow \ItemSubset_i \cap \XPAlloc'_i$ and $\Pay_i \leftarrow \gamma\Pay'_i$. The final
allocation and payment of buyer $i$ is given by $\XPAlloc_i$ and $\Pay_i$ respectively.

\item
For each item $j \in \XPAlloc_i$, break the wand of the $j^{th}$ magician.
\end{enumerate}
\end{enumerate}
\end{definition}
\end{samepage}

Note that $\sum_i \XAAllocA_{ij} \le \sum_i \XAAllocC_{ij} \le \NumUnits_j$; so by setting $\gamma \leftarrow
\gamma_\NumUnits$ and by \autoref{thm:magician} and \autoref{def:magician_param} we can argue that each
$\ItemSubset_i$ includes each item $j$ with probability at least $\gamma_\NumUnits$ where $\gamma_\NumUnits$
is at least $1-\frac{1}{\sqrt{\NumUnits+3}}$. Consequently, any item that is in $\XPAlloc'_i$ will also be in
$\XPAlloc_i$ with probability exactly $\gamma$.

In order for $\gamma$-post-rounding to retain at least a $\gamma$-fraction of the the expected objective
value of each $\Cons{\Mech_i}{\XAAllocC_i}$ and preserve incentive compatibility, further technical
assumptions are needed in addition to $\gamma \le \gamma_\NumUnits$; next, we present a set of assumptions
which is sufficient for this purpose\footnote{I.e., one might come up with other sets of assumptions that are
also sufficient.}.

\begin{enumerate}[{($A'$}1{)}]
\item
\label{asm:post:first}%
\label{asm:post:exact_q}%
The exact ex ante allocation rule for each $\Cons{\Mech_i}{\XAAllocC_i}$ (i.e., $\XAAllocA$) must be
available (i.e., efficiently computable). Note that $\XAAllocC$ is only an upper bound on the ex ante
allocation rule.

\item
\label{asm:post:obj}%
The objective functions must be of the form $\Obj_i(\Type_i, \XPAlloc_i,\Pay_i)=\Obj_i(\Type_i, \XPAlloc_i,
0)+\PayFactor_i \Pay_i$ in which $\PayFactor_i \in \PosReals$ is an arbitrary fixed constant. Also, each
$\Obj_i(\Type_i, \XPAlloc_i, 0)$ must have cost sharing scheme in $\XPAlloc_i$ which is cross monotonic and
budget balanced.

\item
\label{asm:post:bic}%
The resulting mechanism must be in $\Mechs$. In particular, that implies $\Mechs$ may not be restricted to
any from of incentive compatibility stronger than Bayesian incentive compatibility (BIC), because the
$\gamma$-post-rounding is only BIC.

\item
\label{asm:post:valuations}%
The valuations of each buyer must be in the form of a weighted rank function of some matroid.
\label{asm:post:last}%
\end{enumerate}

Observe that \myref{A'}{asm:post:obj} obviously holds for revenue maximization (because $\Obj_i(\Type_i,
\XPAlloc_i, \Pay_i)=\Pay_i$), and also for welfare maximization with quasilinear utilities and submodular
valuations (because $\Obj_i(\Type_i, \XPAlloc_i, \Pay_i)=\Valuation_i(\Type_i, \XPAlloc_i)$ where
$\Valuation_i(\Type_i, \XPAlloc_i)$ is the valuation of buyer $i$ for allocation $\XPAlloc_i$\footnote{Note
that the payment terms cancel out because the utility of the seller is counted toward the social welfare of
the mechanism}). Next, we formally define \myref{A'}{asm:post:valuations}.

\begin{definition}[Matroid Weighted Rank Valuation]
\label{def:matval}%
A valuation function $\Valuation : 2^\NumItems \to \PosReals$ is a matroid weighted rank valuation if{f}
there exists a matroid $\Mat=(\RangeN{\NumItems}, \IndepSets)$, and a weight function $\Weight :
\RangeN{\NumItems} \to \PosReals$ such that $\Valuation(\ItemSubset)$ is equal to the weight of a maximum
weight independent subset of $\ItemSubset$, i.e,

\begin{align*}
    \Valuation(\ItemSubset) &= \max_{\substack{\IndepSet \in \IndepSets \cap 2^\ItemSubset}}  \sum_{j \in
    \IndepSet} \Weight(j), & & \forall \ItemSubset \subseteq \RangeN{\NumItems}
\end{align*}
\end{definition}

Matroid weighted rank valuations include additive valuations with demand constraints, unit demand valuations,
etc.

\begin{theorem}[$\gamma$-Post-Rounding]
\label{thm:post}%
Suppose for each buyer $i$, $\Mech_i$ is an $\alpha$-approximate incentive compatible single buyer mechanism,
and $\Rev_i$ is a corresponding concave benchmark. Also suppose the assumptions \myref{A'}{asm:post:first}
through \myref{A'}{asm:post:last} hold. Then, for any $\gamma \in \RangeAB{0}{\gamma_\NumUnits}$, the
$\gamma$-post-rounding mechanism (\autoref{def:post}) is a Bayesian incentive compatible (BIC) mechanism
which is in $\Mechs$ and is a $\gamma\alpha$-approximation of the optimal mechanism in $\Mechs$.
\end{theorem}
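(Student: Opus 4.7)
The plan is to split the theorem into three independent pieces: (i) the mechanism lies in $\Mechs$, (ii) it is Bayesian incentive compatible, and (iii) it attains at least a $\gamma\alpha$ fraction of the optimal expected objective value.

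Step (i) follows from the decomposability axiom \myref{A}{asm:decompose} together with \myref{A'}{asm:post:bic}. The supply constraints hold pointwise because the $j^{th}$ magician begins with $\NumUnits_j$ wands and breaks one exactly when a unit of item $j$ is actually allocated to some buyer. The single-buyer mechanism induced on buyer $i$ is a randomized composition of $\Cons{\Mech_i}{\XAAllocC_i}$ with coordinate-wise random deletions (obtained by simulating the other buyers and their magicians via $\Dists_{-i}$) and a deterministic payment rescaling, so it lies in $\Mechs_i$; decomposability then places the composite mechanism in $\Mechs$.

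For the approximation bound (iii), I would work buyer by buyer. By \myref{A'}{asm:post:exact_q} the exact ex-ante marginals $\XAAllocA_{ij}$ of $\Cons{\Mech_i}{\XAAllocC_i}$ are available and satisfy $\sum_i \XAAllocA_{ij} \le \sum_i \XAAllocC_{ij} \le \NumUnits_j$. \autoref{thm:magician}, applied with $\gamma \le \gamma_\NumUnits$ and these exact wand-breaking probabilities, guarantees that each box is opened with ex-ante probability exactly $\gamma$, independently of the internal coins of $\Cons{\Mech_i}{\XAAllocC_i}$. The payment portion of buyer $i$'s contribution thus scales by exactly $\gamma$ by construction. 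For the allocation portion, expand $\Obj_i(\Type_i, \XPAlloc'_i, 0) = \sum_{j \in \XPAlloc'_i} \xi(j, \XPAlloc'_i)$ via the cross-monotonic budget-balanced cost share $\xi$ granted by \myref{A'}{asm:post:obj}. Each element of $\XPAlloc'_i$ survives into $\XPAlloc_i = \ItemSubset_i \cap \XPAlloc'_i$ with probability $\gamma$ independently of $\XPAlloc'_i$, and cross monotonicity gives $\xi(j, \XPAlloc_i) \ge \xi(j, \XPAlloc'_i)$ for every surviving $j$; summing yields $\Ex{\Obj_i(\Type_i, \XPAlloc_i, 0)} \ge \gamma \Ex{\Obj_i(\Type_i, \XPAlloc'_i, 0)}$. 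Combining with the payment part and the $\alpha$-approximation of $\Mech_i$, buyer $i$'s contribution is at least $\gamma\alpha\Rev_i(\XAAllocC_i)$; summing over $i$ and invoking \autoref{thm:decomposition} completes the argument.

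The main obstacle is (ii). The matroid weighted-rank assumption \myref{A'}{asm:post:valuations} is the critical ingredient, since it is what makes the random subsetting introduced by the magicians act linearly on valuations. I would assume without loss of generality that $\Cons{\Mech_i}{\XAAllocC_i}$ always allocates an independent set in the buyer's matroid (items outside the max-weight independent subset contribute nothing). Then for any reported $\tilde{\Type}_i$, the surviving set $\ItemSubset_i \cap \XPAlloc'_i(\tilde{\Type}_i)$ remains independent; each of its elements is retained with probability exactly $\gamma$ independently of the report, and hence by additivity over independent sets $\Ex{v_i(\Type_i, \ItemSubset_i \cap \XPAlloc'_i(\tilde{\Type}_i))} = \gamma\,\Ex{v_i(\Type_i, \XPAlloc'_i(\tilde{\Type}_i))}$. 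Since payments are also rescaled by exactly $\gamma$ and the magicians' coins for buyer $i$ are independent of her report, the post-rounded induced mechanism reproduces the entire utility surface of $\Cons{\Mech_i}{\XAAllocC_i}$ uniformly scaled by $\gamma$, so BIC of $\Cons{\Mech_i}{\XAAllocC_i}$ transfers immediately. The delicate point, and the main technical hurdle, is handling the case where the matroid itself varies with the reported type so that a misreport's allocation fails to be independent in the true-type matroid; resolving this requires a matroid-exchange argument that replaces the misreport outcome by a "clean" independent surrogate of weight no less than that of the misreport outcome for the true type.
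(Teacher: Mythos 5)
Your three-part decomposition and the arguments for parts (i) and (iii) match the paper's proof essentially line for line: supply constraints hold pointwise by the wand-breaking discipline, membership in $\Mechs$ follows from decomposability, and the approximation bound is obtained buyer-by-buyer by combining the exact-$\gamma$ guarantee of \autoref{thm:magician} (valid because \myref{A'}{asm:post:exact_q} supplies the exact marginals $\XAAllocA_{ij}$) with the budget-balanced cross-monotonic cost share from \myref{A'}{asm:post:obj}, then summing and invoking \autoref{thm:decomposition}. Your BIC argument is also the paper's: restrict $\XPAlloc'_i$ WLOG to an independent set, observe that each element survives into $\XPAlloc_i$ with probability exactly $\gamma$ independently of buyer $i$'s report and internal coins, so by additivity of the weighted-rank valuation over an independent set the entire utility surface of $\Cons{\Mech_i}{\XAAllocC_i}$ is scaled by $\gamma$.

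The one place you diverge is the final worry about a type-dependent matroid forcing a matroid-exchange argument. This hurdle does not arise in the paper's framework: \myref{A'}{asm:post:valuations} (and \autoref{def:matval}) fix a single matroid per buyer, with only the weight function carrying the private information, and the concrete instantiation in \SRef{sec:corr} makes $\Mat$ publicly known. Under a misreport $\tilde{\Type}_i$, the set $\XPAlloc'_i(\tilde{\Type}_i)$ is still independent in the one fixed matroid, and since the true-type weights are non-negative, the true-type valuation of an independent set is just the sum of its weights. So additivity and the exact $\gamma$-scaling hold for any reported type, and no exchange argument is needed. If the matroid were permitted to vary with type, your concern would be real (and the paper's proof, which merely scales the utility surface, would not go through), but that case is outside the stated assumption. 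Aside from this unnecessary detour, the proposal is correct and follows the paper's argument.
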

\begin{proof}
See \SRef{proof:thm:post}.
\end{proof}

\section{Single Buyer Mechanisms}
\label{sec:single}%

In this section, we present approximately optimal single buyer mechanisms for several common settings. Each
one of the single buyer mechanisms presented in this section satisfies the requirements of one of the generic
multi buyer mechanisms of \SRef{sec:expansion}, so they can be readily converted to a multi buyer mechanisms.
Except for \SRef{sec:corr}, we restrict the space of mechanisms to item pricing mechanisms with budget
randomization as defined next.

\begin{definition}[Item Pricing with Budget Randomization (\IP)]
\label{def:aip}%
An item pricing mechanism is a possibly randomized mechanism that offers a menu of prices to each buyer and
allows each buyer to choose their favorite bundle. The payment of a buyer is equal to the total price of the
items in her purchased bundle. Note that the prices offered to different buyers do not need to be identical
and buyers can be served sequentially. In the presence of budget constraints, a buyer is allowed to pay a
fraction of the price of an item and receive the item with a probability equal to the paid
fraction\footnote{A utility maximizing buyer, with submodular valuations and budget constraint, always pays
the full price for any item she purchases, except potentially for the last item purchased, for which she must
have run out of budget.}. A mechanism is considered an item pricing mechanism if its outcome can be
interpreted as such\footnote{I.e., an item pricing mechanism may collect all the reports and compute the
final outcome along with buyer specific prices, such that the outcome of each buyer would be the same as if
each buyer purchased their favorite bundle according to her observed prices, and the prices observed by each
buyer should be independent of her report.}.
\end{definition}

Item pricing mechanisms are simple and practical as opposed to optimal BIC mechanisms which often involve
lotteries. Also budget randomization allows us to get around the hardness of the knapsack problem faced by
the budgeted buyers; in particular, assuming that budgets are large compared to prices, budget randomization
can be safely ignored since the optimal integral solution of the knapsack problem approaches its optimal
fractional solution.

Table \ref{tab:results} lists several settings for which we obtain a multi buyer mechanism with an improved
approximation factor compared to previous best known approximations. For each setting, we present a single
buyer mechanism that satisfies the requirements of one of the generic multi buyer mechanisms of
\SRef{sec:expansion}. The corresponding single buyer mechanisms are presented in detail throughout the rest
of this section. Note that the final approximation factor for each multi buyer mechanism is equal to the
approximation factor of the corresponding single buyer mechanism multiplied by $\gamma_\NumUnits$; recall
that $\gamma_\NumUnits \ge 1-\frac{1}{\sqrt{\NumUnits + 3}}$ which approaches $1$ as $\NumUnits \to \infty$.

\begin{table}[h]
\footnotesize
\begin{tabular}{|p{0.4\columnwidth}|c|c|p{0.05\columnwidth}|}
\hline Setting &  Approx & Space of Mechanisms & Ref \\
\hline single item(multi unit), unit demand, budget constraint, revenue maximization & $\gamma_\NumUnits$ & item pricing with budget randomization & \SRef{sec:budget1} \\
\hline multi item(heterogenous), unit demand, product distribution, revenue maximization & $\frac{1}{2}\gamma_\NumUnits$ & deterministic & \SRef{sec:unit} \\
\hline multi item(heterogenous), additive valuations, product distribution, budget constraint, revenue maximization & $(1-\frac{1}{e})\gamma_\NumUnits$ & item pricing with budget randomization & \SRef{sec:budgetm} \\
\hline multi item(heterogenous), additive valuations, correlated distribution with polynomial number of types, budget constraint, matroid constrains, revenue or welfare maximization & $\gamma_\NumUnits$ & randomized (BIC) & \SRef{sec:corr}  \\
\hline
\end{tabular}
\caption{Summary of mechanisms obtained using the framework of this paper. \label{tab:results}}
\end{table}

For each single buyer mechanism presented in this paper, the single buyer benchmark function $\Rev(\XAAlloc)$
is defined as the optimal value of some convex program of the following general form, in which $\SObj$ is
some concave function, $g_j(\wdot)$ are some convex functions, and $\mathds{Y}$ is some convex polytope (in
the rest of this section we only consider a single buyer, so we will omit the subscript $i$).

\begin{alignat*}{3}
    \text{maximize} &   &\qquad & \SObj(y) \tag{$\overline{OPT}_1$} & \qquad &\label{CP:u} \\
    \text{subject to}&  &       & g_j(y) \le \XAAllocC_j, & & \forall j \in \RangeN{\NumItems}  \\
                    &   &       & y \in \mathds{Y}
\end{alignat*}

\begin{lemma}
\label{lem:concave}%
$\Rev(\XAAllocC)$ is concave, i.e., the optimal value of a convex program of the form \eqref{CP:u} is always
concave in $\XAAllocC$.
\end{lemma}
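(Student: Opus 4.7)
The plan is to prove concavity directly from the definition, by exhibiting for each convex combination of right-hand sides a feasible solution whose objective value realizes the convex combination of the optimal values.

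I would fix arbitrary $\XAAllocC^1, \XAAllocC^2 \in \RangeAB{0}{1}^{\NumItems}$ and $\beta \in \RangeAB{0}{1}$, and let $y^1, y^2$ denote optimal solutions to \eqref{CP:u} at $\XAAllocC^1$ and $\XAAllocC^2$ respectively, so that $\SObj(y^1) = \Rev(\XAAllocC^1)$ and $\SObj(y^2) = \Rev(\XAAllocC^2)$. Set $y^\beta := \beta y^1 + (1-\beta) y^2$. The goal is then to show $\Rev(\beta \XAAllocC^1 + (1-\beta)\XAAllocC^2) \ge \beta \Rev(\XAAllocC^1) + (1-\beta)\Rev(\XAAllocC^2)$.

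The key step is to verify that $y^\beta$ is a feasible solution to \eqref{CP:u} at the right-hand side $\beta \XAAllocC^1 + (1-\beta)\XAAllocC^2$. Convexity of the polytope $\mathds{Y}$ gives $y^\beta \in \mathds{Y}$, and convexity of each $g_j$ gives
\begin{align*}
    g_j(y^\beta) \le \beta g_j(y^1) + (1-\beta) g_j(y^2) \le \beta \XAAllocC^1_j + (1-\beta)\XAAllocC^2_j,
\end{align*}
where the last inequality uses the feasibility of $y^1, y^2$ at their respective right-hand sides. Hence $y^\beta$ is feasible, so $\Rev(\beta \XAAllocC^1 + (1-\beta)\XAAllocC^2) \ge \SObj(y^\beta)$. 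Concavity of $\SObj$ then yields $\SObj(y^\beta) \ge \beta \SObj(y^1) + (1-\beta)\SObj(y^2) = \beta \Rev(\XAAllocC^1) + (1-\beta)\Rev(\XAAllocC^2)$, completing the argument.

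There is essentially no obstacle here: the claim is a textbook perturbation/value-function concavity statement for a convex program with concave objective and right-hand-side perturbations of convex constraints, and the hypotheses on $\SObj$, $g_j$, and $\mathds{Y}$ are precisely what is needed. The only mild thing to watch is ensuring that $y^\beta$ actually lies in the domain of the program (handled by $\mathds{Y}$ being a convex polytope) and that the constraint perturbation is of the right sign (handled because the $g_j$ appear on the less-than side, matching the direction of Jensen's inequality for convex $g_j$).
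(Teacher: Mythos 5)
Your proof is correct and takes essentially the same approach as the paper: take optimal solutions at the two right-hand sides, form their convex combination, show it is feasible for the interpolated right-hand side, and invoke concavity of $\SObj$. You simply spell out the feasibility step (via convexity of the $g_j$ and of $\mathds{Y}$) a bit more explicitly than the paper does.
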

\begin{proof}
See \autoref{proof:lem:concave}.
\end{proof}

Note that we can substitute each $\Rev_i(\wdot)$ in the multi buyer benchmark convex program \eqref{P:OPT'}
with the corresponding single buyer benchmark convex program to obtain a combined convex program which can be
solved efficiently. If each $\Rev_i$ is captured by a linear program, the combined multi buyer program will
also be a linear program.

\subsection{Single Item, Unit Demand, Budget Constraint}
\label{sec:budget1}%

In this section, we consider a unit-demand buyer with a publicly known budget $\Budget$ and one type of item
(i.e., $\NumItems=1$). The only private information of the buyer is her valuation for the item, which is
drawn from a publicly known distribution with CDF $\CDF(\wdot)$. To avoid complicating the proofs, we assume
that $\CDF(\wdot)$ is continuous and strictly increasing in its domain\footnote{The proofs can be modified to
work without this assumption.}. We present a single buyer mechanism which is optimal among item pricing
mechanisms with budget randomization (\IP). We start by defining the modified CDF function $\CDFB(\wdot)$ as
follows.
\begin{align*}
    \CDFB(\Val) &=
        \begin{cases}
        \CDF(\Val)                    & \Val \le \Budget \\
        1-(1-\CDF(\Val))\frac{\Budget}{\Val}   & \Val \ge \Budget
        \end{cases} \tag{$\CDFB$} \label{eq:CDFB}
\end{align*}

Intuitively, $1-\CDFB(\Price)$ is the probability of allocating the item to the buyer if we offer the item at
price $\Price$. Note that the buyer only buys if her valuation is more than $\Price$ which happens with
probability $1-\CDF(\Price)$ ; if $\Price > \Budget$, she will pay her whole budget and only get the item
with probability $\frac{\Budget}{\Price}$, otherwise she pays the full price and receives the item with
probability 1. Observe that if we want to allocate the item with probability $\XAAlloc$ we can offer a price
of ${\CDFB}^{(-1)}(1-\XAAlloc)$ which yields a revenue of $\XAAlloc  {\CDFB}^{(-1)}(1-\XAAlloc)$ in
expectation. Define $\SRev(\XAAlloc)=\XAAlloc {\CDFB}^{(-1)}(1-\XAAlloc)$ and let $\SRevCC(\XAAlloc)$ denote
its concave closure (i.e., the smallest concave function that is an upper bound on $\SRev(\XAAlloc)$ for
every $\XAAlloc$). We will address the problem of efficiently computing $\SRevCC(\XAAlloc)$ later in
\autoref{lem:cc}. Next, we show that the optimal value of the following convex program is equal to the
expected revenue of the optimal single buyer \IP{} mechanism subject to $\XAAllocC$; therefore we will define
the single buyer benchmark function $\Rev(\XAAllocC)$ to be equal to the optimal value of this program as a
function of $\XAAllocC$.

\begin{alignat*}{3}
    \text{maximize} &   &\qquad & \SRevCC(\XAAlloc) \tag{$\SRevSC_{single}$} \label{CP:rev1} \\
    \text{subject to}&  &       & \XAAlloc \le \XAAllocC  \\
                    &   &       & \XAAlloc \ge 0
\end{alignat*}

\begin{theorem}
\label{thm:rev1}%
The revenue of the optimal single buyer \IP{} mechanism, subject to an upper bound of $\XAAllocC$ on the ex
ante allocation rule, is equal to the optimal value of the convex program \eqref{CP:rev1}. Furthermore,
assuming that $\XAAlloc^*$ is the optimal assignment for the convex program, if
$\SRevCC(\XAAlloc^*)=\SRev(\XAAlloc^*)$, then the optimal mechanism uses a single price
$\Price={\CDFB}^{(-1)}(1-\XAAlloc^*)$ otherwise, it randomized between two prices $\Price^-, \Price^+$ with
probabilities $\theta$ and $1-\theta$ for some $\theta \in \RangeAB{0}{1}$ and $\Price^-, \Price^+$.
\end{theorem}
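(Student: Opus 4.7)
The plan is to establish an explicit one-to-one correspondence between single-buyer \IP{} mechanisms (with budget randomization) and randomized offers of a single price $\Price \sim \mu$ on $[0,\infty)$, then compute revenue and allocation in terms of $\mu$, and finally match against the convex program via the concave closure.

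For the reduction: since the buyer is unit-demand facing a single item, any \IP{} mechanism may be summarized by the (possibly random) price $\Price$ presented to her. Conditioned on $\Price = \Price$, the buyer purchases iff $\Val \ge \Price$. If $\Price \le \Budget$ she pays $\Price$ and receives the item deterministically; if $\Price > \Budget$, budget randomization lets her pay her entire budget $\Budget$ and receive the item with probability $\Budget/\Price$. A direct case split using the definition of $\CDFB$ in \eqref{eq:CDFB} gives two key identities: $\Prx{\text{allocate} \mid \Price}=1-\CDFB(\Price)$, and the expected payment equals $\min(\Price,\Budget)(1-\CDF(\Price))=(1-\CDFB(\Price))\cdot\Price=\SRev(1-\CDFB(\Price))$, where the last equality uses $\Price = {\CDFB}^{(-1)}(1-(1-\CDFB(\Price)))$. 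Thus if we let $\XAAlloc = 1-\CDFB(\Price)$ be the (random) ex ante allocation probability induced by the realized price, the mechanism's expected allocation is $\Ex{\XAAlloc}$ and expected revenue is $\Ex{\SRev(\XAAlloc)}$.

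For the upper bound, suppose the mechanism is feasible, i.e.\ $\Ex{\XAAlloc} \le \XAAllocC$. Using $\SRev \le \SRevCC$ and Jensen's inequality on the concave $\SRevCC$,
\begin{align*}
\Ex{\SRev(\XAAlloc)} \le \Ex{\SRevCC(\XAAlloc)} \le \SRevCC(\Ex{\XAAlloc}) \le \max_{\XAAlloc \le \XAAllocC} \SRevCC(\XAAlloc),
\end{align*}
which is exactly the optimal value of \eqref{CP:rev1}. Hence no \IP{} mechanism does better than the program.

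For achievability and the structural claim, let $\XAAlloc^*$ be an optimizer of \eqref{CP:rev1}. Under the assumption that $\CDF$ is continuous and strictly increasing, ${\CDFB}^{(-1)}$ and thus $\SRev$ are continuous on $\RangeAB{0}{1}$, so the concave closure $\SRevCC$ agrees with $\SRev$ at every point that is extreme on the graph of $\SRevCC$, and every point of $\RangeAB{0}{1}$ lies either on the graph of $\SRev$ (case $\SRevCC(\XAAlloc^*) = \SRev(\XAAlloc^*)$) or in the relative interior of a linear segment of $\SRevCC$ whose endpoints touch $\SRev$. In the first case, the deterministic price $\Price^* = {\CDFB}^{(-1)}(1-\XAAlloc^*)$ realizes ex ante allocation $\XAAlloc^*$ and revenue $\SRev(\XAAlloc^*) = \SRevCC(\XAAlloc^*)$. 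In the second case, write $\XAAlloc^* = \theta \XAAlloc^- + (1-\theta)\XAAlloc^+$ with $\SRevCC(\XAAlloc^*) = \theta\SRev(\XAAlloc^-) + (1-\theta)\SRev(\XAAlloc^+)$, and randomize between $\Price^\pm = {\CDFB}^{(-1)}(1-\XAAlloc^\pm)$ with probabilities $\theta$ and $1-\theta$; this mechanism is \IP{}, has ex ante allocation $\theta\XAAlloc^- + (1-\theta)\XAAlloc^+ = \XAAlloc^* \le \XAAllocC$, and revenue exactly $\SRevCC(\XAAlloc^*)$.

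The main obstacle is the bookkeeping in the first step: verifying the identity ``payment $=\SRev(1-\CDFB(\Price))$'' across the two regimes $\Price\le \Budget$ and $\Price > \Budget$ and recognizing that the single function $\SRev$ absorbs both, since this is what reduces an arbitrary distribution over prices (with budget randomization) to an expectation of $\SRev$ evaluated at a single random scalar. Once this identification is in place, the convex programming argument via Jensen and the endpoint decomposition of $\SRevCC$ are routine, and the prescription of either one or two prices falls out immediately.
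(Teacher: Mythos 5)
Your proof is correct and takes essentially the same route as the paper: identify an \IP{} mechanism with a (possibly random) posted price, change variables to the induced allocation probability $\XAAlloc = 1-\CDFB(\Price)$, observe that expected revenue equals $\Ex{\SRev(\XAAlloc)}$, bound this by $\SRevCC(\Ex{\XAAlloc})$ via Jensen, and realize the bound with one or two prices using the concave-closure decomposition. The only genuine (and welcome) addition is that you verify the payment identity $\min(\Price,\Budget)(1-\CDF(\Price)) = \Price(1-\CDFB(\Price)) = \SRev(1-\CDFB(\Price))$ explicitly across both regimes, which the paper leaves implicit.
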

\begin{proof}
First, we prove that the expected revenue of the optimal single buyer \IP{} mechanism, subject to
$\XAAllocC$, is upper bounded by $\SRevCC(\XAAlloc^*)$. We then construct a price distribution that obtains
this revenue. Note that any single buyer \IP{} mechanism can be specified as a distribution over prices. Let
$\PriceDist$ be the optimal price distribution. So the optimal revenue is $\Ex[\Price \sim \PriceDist]{\Price
(1-\CDFB(\Price))}$. Note that every price $\Price$ corresponds to an allocation probability
$\XAAlloc=1-\CDFB(\Price)$. So any probability distribution over $\Price$ can be specified as a probability
distribution over $\XAAlloc$. Let $\AllocDist$ denote the probability distribution over $\XAAlloc$ that
corresponds to price distribution $\PriceDist$, so we can write
\begin{align*}
    \text{optimal revenue} &= \Ex[\XAAlloc \sim \AllocDist]{\XAAlloc  {\CDFB}^{(-1)}(1-\XAAlloc)}
        = \Ex[\XAAlloc \sim \AllocDist]{\SRev(\XAAlloc)} \\
        &\le \Ex[\XAAlloc\sim \AllocDist]{\SRevCC(\XAAlloc)}
        \le \SRevCC(\Ex[\XAAlloc \sim \AllocDist]{\XAAlloc}) & &\text{By Jensen's inequality}
\end{align*}
which means the optimal revenue is upper bounded by the value of the convex program for
$\XAAlloc=\Ex[\XAAlloc \sim \AllocDist]{\XAAlloc}$ \footnote{Note that $\Ex[\XAAlloc \sim
\AllocDist]{\XAAlloc}$ is exactly the probability of allocating the item by the price distribution
$\PriceDist$, so it must be no more than $\XAAllocC$}; so the optimal revenue is upper bounded by the optimal
value of the convex program. That completes the first part of the proof.

Next, we construct an optimal price distribution. If $\SRevCC(\XAAlloc^*)=\SRev(\XAAlloc^*)$, the optimal
price distribution is just a single price $\Price = {\CDFB}^{(-1)}(1-\XAAlloc^*)$; otherwise, by definition
of concave closure, there are two points $\XAAlloc^-$ and $\XAAlloc^+$ and $\theta \in \RangeAB{0}{1}$ such
that $\XAAlloc^*=\theta \XAAlloc^- + (1-\theta)\XAAlloc^+$ and $\SRevCC(\XAAlloc^*)=\theta
\SRev(\XAAlloc^-)+(1-\theta)\SRev(\XAAlloc^+)$. In the latter case, the optimal price distribution offers
price $\Price^- = {\CDFB}^{(-1)}(1-\XAAlloc^-)$ with probability $\theta$ and offers price $\Price^+ =
{\CDFB}^{(-1)}(1-\XAAlloc^+)$ with probability $1-\theta$.
\end{proof}

Formally, an optimal single buyer \IP{} mechanism can be constructed as follows.

\begin{samepage}
\begin{definition}[Mechanism] \ \
\label{def:budget1}%
\begin{itemize}
\item
Define the single buyer benchmark $\Rev(\XAAllocC)$ to be the optimal value of the convex program
\eqref{CP:rev1} as a function of $\XAAllocC$.

\item
Given $\XAAllocC$, solve \eqref{CP:rev1} and let $\XAAlloc$ be an optimal assignment.

\item
If $\SRevCC(\XAAlloc)=\SRev(\XAAlloc)$, offer the single price $\Price={\CDFB}^{(-1)}(1-\XAAlloc)$, otherwise
randomize between two prices $\Price^-$ and $\Price^+$ as explained in the proof of \autoref{thm:rev1}.
\end{itemize}
\end{definition}
\end{samepage}

\begin{theorem}
The mechanism of \autoref{def:budget1} is the optimal revenue maximizing single buyer \IP{} mechanism.
Furthermore, this mechanism satisfies the requirements of $\gamma$-pre-rounding.
\end{theorem}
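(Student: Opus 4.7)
The plan is to split the statement into two claims and handle them separately. For the first claim (optimality among \IP{} mechanisms), the work is almost entirely done by \autoref{thm:rev1}: that theorem already shows that the optimal revenue among single buyer \IP{} mechanisms, subject to $\XAAllocC$, equals the optimal value of the convex program~\eqref{CP:rev1}, and explicitly exhibits a price distribution (a single price, or a two-point randomization between $\Price^-$ and $\Price^+$) that attains this value. Since \autoref{def:budget1} is defined to solve~\eqref{CP:rev1} for $\XAAlloc$ and then post exactly that price distribution, it achieves the optimal revenue by construction. No new argument is needed beyond citing \autoref{thm:rev1}.

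For the second claim, I would verify the three hypotheses that feed into \autoref{thm:pre}: (i) $\Mech$ is an incentive-compatible single buyer mechanism; (ii) $\Rev$ is a concave benchmark; and (iii) $\Rev$ admits a budget-balanced cross-monotonic cost sharing scheme. Item (i) is immediate because the mechanism is a take-it-or-leave-it posted price (with at most a public randomization between $\Price^-$ and $\Price^+$), which is DSIC: the prices the buyer faces do not depend on her report, so truthful purchase of her utility-maximizing bundle is a dominant strategy, and budget randomization preserves this since the buyer is free to pay any fraction of the offered price. Item (ii) is covered by \autoref{lem:concave}: the benchmark $\Rev(\XAAllocC)$ is the optimal value of the convex program~\eqref{CP:rev1}, which is of the general form~\eqref{CP:u} with $\SObj = \SRevCC$ concave and $g(\XAAlloc) = \XAAlloc$ convex, so concavity in $\XAAllocC$ follows.

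Item (iii) is the condition that would normally be the main obstacle, but here it collapses because this setting has only a single item ($\NumItems = 1$). With only one item, the unique candidate cost share function $\xi(1, \XAAlloc) = \Rev(\XAAlloc)$ is budget balanced by definition, and cross monotonicity is vacuous since there is no other item to exclude. Hence \autoref{def:cs} is satisfied trivially.

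Combining these three points, the hypotheses of \autoref{thm:pre} are met with $\alpha = 1$, so plugging this mechanism into the $\gamma$-pre-rounding construction yields, for any $\gamma \le \gamma_\NumUnits$, a DSIC multi buyer mechanism that is a $\gamma$-approximation of the optimal multi buyer mechanism in $\Mechs$. I would end the proof by noting that the only step that required any content was invoking \autoref{thm:rev1} for optimality and \autoref{lem:concave} for concavity; incentive compatibility and the cost sharing condition are essentially by inspection in this single-item setting.
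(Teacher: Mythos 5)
Your proposal matches the paper's proof: both cite \autoref{thm:rev1} for optimality, \autoref{lem:concave} for concavity of the benchmark, and observe that the cost sharing requirement is trivially satisfied because there is only one item. The only addition is your explicit check that posted pricing is DSIC, which the paper leaves implicit; this is a reasonable elaboration but not a different argument.
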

\begin{proof}
The proof of the optimality follows from \autoref{thm:rev1}. Furthermore, the benchmark function,
$\Rev(\XAAlloc)$, is concave (this follows from \autoref{lem:concave}) and it has a trivial budget balanced
cost sharing scheme (because there is only one item), therefore it meets the requirements of
$\gamma$-pre-rounding.
\end{proof}

Next, we address the problem of efficiently computing $\SRevCC(\wdot)$.

\begin{lemma}
\label{lem:cc}%
A $(1+\epsilon)$-approximation of $\SRevCC(\wdot)$, which we denote by $\SRevCC_{1+\epsilon}(\wdot)$, can be
constructed using a piece-wise linear function with $\ell = \frac{\log L}{\log (1+\epsilon)}$ pieces and in
time $O(\ell \log \ell)$ in which $L$ is the ratio of the maximum valuation to minimum non-zero valuation.
Note that we need at least $\log_2 L$ bits just to represent such valuations so this construction is
polynomial in the input size for any constant $\epsilon$.
\end{lemma}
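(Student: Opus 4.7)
The plan is to construct $\SRevCC_{1+\epsilon}$ as the upper concave envelope of a geometric sample of points on the curve $\SRev$. Concretely, set $p_k = (1+\epsilon)^k v_{\min}$ for $k = 0, 1, \ldots, \ell$ so that $p_\ell = v_{\max}$ and $\ell = \log L / \log(1+\epsilon)$; for each $k$ define the sampled allocation $\XAAlloc_k = 1 - \CDFB(p_k)$ and the sampled revenue $\SRev(\XAAlloc_k) = p_k \XAAlloc_k$. I take $\SRevCC_{1+\epsilon}$ to be the upper concave envelope of these $\ell+1$ points together with the origin $(0, 0)$; this is a piecewise linear concave function on $\RangeAB{0}{1}$ with at most $\ell+1$ pieces. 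Because every vertex used to define $\SRevCC_{1+\epsilon}$ lies on $\SRev$, one automatically has $\SRevCC_{1+\epsilon}(\XAAlloc) \le \SRevCC(\XAAlloc)$.

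The main content is the lower bound $\SRevCC_{1+\epsilon}(\XAAlloc) \ge \SRevCC(\XAAlloc)/(1+\epsilon)$. Fix $\XAAlloc \in \RangeAB{0}{1}$. Since the domain is one-dimensional, the concave closure at $\XAAlloc$ is attained by at most two support points, i.e.\ $\SRevCC(\XAAlloc) = \lambda \SRev(\XAAlloc^a) + (1-\lambda) \SRev(\XAAlloc^b)$ with $\lambda \XAAlloc^a + (1-\lambda) \XAAlloc^b = \XAAlloc$. Let $P^a = {\CDFB}^{-1}(1-\XAAlloc^a)$, $P^b = {\CDFB}^{-1}(1-\XAAlloc^b)$ be the corresponding prices, and round each down to a grid price: $p_{k_a} \le P^a \le p_{k_a+1}$, so simultaneously $p_{k_a} \ge P^a/(1+\epsilon)$ and, by monotonicity of $\CDFB$, $\XAAlloc_{k_a} \ge \XAAlloc^a$ (and analogously for $b$). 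The difficulty is that the grid point $(\XAAlloc_{k_a}, p_{k_a}\XAAlloc_{k_a})$ has too much mass, so I \emph{dilute} it with the origin: put weight $\mu_a = \XAAlloc^a/\XAAlloc_{k_a}$ on $(\XAAlloc_{k_a}, p_{k_a}\XAAlloc_{k_a})$ and weight $1-\mu_a$ on $(0, 0)$. The resulting mean allocation is exactly $\XAAlloc^a$ and the mean revenue is $\mu_a \cdot p_{k_a}\XAAlloc_{k_a} = p_{k_a}\XAAlloc^a \ge \SRev(\XAAlloc^a)/(1+\epsilon)$. Doing the same for $b$ and combining with weights $\lambda$ and $1-\lambda$ yields a convex combination over grid points and the origin whose mean allocation is exactly $\XAAlloc$ and whose mean revenue is at least $[\lambda \SRev(\XAAlloc^a) + (1-\lambda)\SRev(\XAAlloc^b)]/(1+\epsilon) = \SRevCC(\XAAlloc)/(1+\epsilon)$; since $\SRevCC_{1+\epsilon}(\XAAlloc)$ upper-bounds every such convex combination's value, the claim follows.

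For the running time, producing the $\ell+1$ grid coordinates takes $O(\ell)$ evaluations of $\CDFB$, and computing the upper concave envelope of $\ell+2$ points in the plane is the standard upper-convex-hull problem, solvable in $O(\ell \log \ell)$ by Andrew's monotone chain (or by sorting on $\XAAlloc$ and sweeping, since the $\XAAlloc_k$ are already monotone in $k$, the sort can even be skipped for an $O(\ell)$ sweep). Storing the hull vertices supports $O(\log \ell)$ point queries by binary search on the breakpoints, which is what subsequent use of $\SRevCC_{1+\epsilon}(\wdot)$ inside the convex program \eqref{CP:rev1} requires.

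The main obstacle is the approximation inequality: naively replacing $\SRev(\XAAlloc^a)$ by its nearest grid revenue $\SRev(\XAAlloc_{k_a})$ does not work because $\XAAlloc_{k_a} > \XAAlloc^a$ breaks the marginal constraint $\lambda \XAAlloc^a + (1-\lambda)\XAAlloc^b = \XAAlloc$. The dilution-with-origin step is precisely what restores the correct marginal at the cost of a single multiplicative $(1+\epsilon)$ factor; the geometric spacing of the grid is what makes that factor uniform over $\XAAlloc$.
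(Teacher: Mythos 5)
Your proposal is correct and follows essentially the same construction as the paper: a geometric grid of prices $(1+\epsilon)^k$, the corresponding sampled points $(\XAAlloc_k, p_k\XAAlloc_k)$, and the upper concave envelope of these points plus the origin, with the $(1+\epsilon)$ factor coming from the geometric spacing. The one cosmetic difference is how the approximation inequality is argued: the paper shows directly that on each interval $[\XAAlloc_r,\XAAlloc_{r+1}]$ one has $\SRev(\XAAlloc)\le \Price_r\XAAlloc=(1+\epsilon)\Price_{r+1}\XAAlloc\le(1+\epsilon)\SRevCC_{1+\epsilon}(\XAAlloc)$ (using that $\SRevCC_{1+\epsilon}$ lies above the chord from the origin to $(\XAAlloc_{r+1},\Price_{r+1}\XAAlloc_{r+1})$) and then lifts this pointwise bound on $\SRev$ to the concave closure $\SRevCC$, whereas you decompose $\SRevCC(\XAAlloc)$ into two support points of $\SRev$ and dilute each with the origin; your dilution step is exactly the paper's chord-from-the-origin argument in disguise, so the two routes are the same idea.
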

\begin{proof}
WLOG, assume that all possible non-zero valuations of the buyer are in the range of $\RangeAB{1}{L}$. Let
$\ell = \lfloor\frac{\log L}{\log(1+\epsilon)}\rfloor$. For $r= 0 \cdots \ell$, consider the prices
$\Price_r=(1+\epsilon)^{\ell-r}$ and compute the corresponding $\XAAlloc_r=1-\CDFB(\Price_r)$. Construct
$\SRevCC_{1+\epsilon}(\wdot)$ by constructing the convex hall of the points:
\\
$(0,0), (\XAAlloc_1, \Price_1\XAAlloc_1), (\XAAlloc_2, \Price_2\XAAlloc_2),\ldots, (\XAAlloc_\ell,
\Price_\ell \XAAlloc_\ell), (1, 0)$. This can be done in time $O(\ell \log \ell)$. Note that
${\CDFB}^{(-1)}(1-\XAAlloc)$ is a decreasing function of $\XAAlloc$ so at every $\XAAlloc\in
\RangeAB{\XAAlloc_r}{\XAAlloc_{r+1}}$, the corresponding price is ${\CDFB}^{(-1)}(\XAAlloc) \in
\RangeAB{\Price_{r+1}}{\Price_{r}}$ but $\Price_{r}=(1+\epsilon)\Price_{r+1}$ therefore at every $\XAAlloc$,
$\SRev_{1+\epsilon}(\XAAlloc) \le \SRevCC(\XAAlloc) \le (1+\epsilon)\SRev_{1+\epsilon}(\XAAlloc)$ which
completes the proof.
\end{proof}

\begin{remark}
In order to use $\SRev_{1+\epsilon}(\wdot)$ in the single buyer mechanism of \autoref{def:budget1}, we need
to substitute $(1+\epsilon)\SRevCC_{1+\epsilon}(\wdot)$ in the objective function of the convex program
\eqref{CP:rev1} instead of $\SRevCC(\wdot)$ for computing the benchmark. Furthermore, the mechanism will be a
$(1-\epsilon)$-approximation of the optimal single buyer \IP{} mechanism. Also notice that finding $\Price^-$
and $\Price^+$ from $\SRev_{1+\epsilon}(\wdot)$ is trivial.
\end{remark}

\subsection{Multi Item (Independent), Unit Demand}
\label{sec:unit}%

In this section, we consider a unit demand buyer with private independent valuations for $\NumItems$ items.
We assume that for each item $j$, the buyer's valuation is distributed independently according to a publicly
known distribution with CDF $\CDF_j(\wdot)$. We present a single buyer mechanism which is a
$\frac{1}{2}$-approximation of the optimal deterministic revenue maximizing mechanism. To avoid complicating
the proofs, we assume that each $\CDF_j(\wdot)$ is continuous and strictly increasing in its domain.
Furthermore, we require the distributions to be regular. This mechanism can be used with
$\gamma$-pre-rounding (\autoref{def:pre}) to yield a $\frac{1}{2}\gamma_\NumUnits$-approximate sequential
posted pricing multi buyer mechanism. The previous best approximation mechanism for this setting was a
$\frac{1}{6.75}$-approximate sequential posted pricing mechanism by \cite{CHMS10}\footnote{Note that the
mechanism of \cite{CHMS10} does not work for non-regular distributions despite the authors' claim.}.


We start by defining $\SRev_j(\XAAlloc) = \XAAlloc  \CDF_j^{-1}(1-\XAAlloc)$ for each item $j$. Because
$\CDF_j(\wdot)$ is corresponds to a regular distribution, $\SRev_j(\wdot)$ is concave as shown in the
following lemma.

\begin{lemma}
If $\CDF(\wdot)$ is the CDF of a regular distribution, the function $\SRev(\XAAlloc)=\XAAlloc
\CDF^{-1}(1-\XAAlloc)$ is concave.
\end{lemma}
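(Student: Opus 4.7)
The plan is to reduce concavity of $\SRev$ to the standard characterization of regularity via the virtual value function $\phi(v) = v - \frac{1-\CDF(v)}{\PDFRV(v)}$ (where $\PDFRV = \CDF'$), which is non-decreasing by assumption. Concretely, I will change variables by setting $\Price = \CDF^{-1}(1-\XAAlloc)$, so that $\XAAlloc = 1-\CDF(\Price)$ and $\SRev(\XAAlloc) = \XAAlloc \Price$. Since $\CDF$ is continuous and strictly increasing, this substitution is a strictly decreasing smooth bijection between $\XAAlloc \in (0,1)$ and $\Price$ in the support.

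Next, I would compute $\frac{d\SRev}{d\XAAlloc}$. From $\XAAlloc = 1 - \CDF(\Price)$ we get $\frac{d\Price}{d\XAAlloc} = -\frac{1}{\PDFRV(\Price)}$, and therefore
\begin{align*}
\frac{d\SRev}{d\XAAlloc} = \Price + \XAAlloc \cdot \frac{d\Price}{d\XAAlloc} = \Price - \frac{1-\CDF(\Price)}{\PDFRV(\Price)} = \phi(\Price).
\end{align*}
Thus $\SRev'(\XAAlloc) = \phi\bigl(\CDF^{-1}(1-\XAAlloc)\bigr)$, i.e.\ the derivative of the revenue curve in quantile space is exactly the virtual value evaluated at the corresponding price.

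Now the monotonicity argument is immediate: as $\XAAlloc$ increases, $\Price = \CDF^{-1}(1-\XAAlloc)$ decreases, and by regularity $\phi(\Price)$ is non-decreasing in $\Price$. Composing a non-decreasing function with a decreasing function yields a non-increasing function, so $\SRev'(\XAAlloc)$ is non-increasing in $\XAAlloc$, which is exactly concavity of $\SRev$. I should briefly note the boundary cases $\XAAlloc = 0$ and $\XAAlloc = 1$ (where $\SRev = 0$) are handled by continuity, and if $\CDF$ is only assumed differentiable almost everywhere one can work with one-sided derivatives or approximate by smooth CDFs.

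The only potentially delicate point is tying the informal notion of ``regular'' used in the paper precisely to the monotone virtual value condition, so that the identity $\SRev'(\XAAlloc) = \phi(\CDF^{-1}(1-\XAAlloc))$ directly yields the conclusion. Everything else is a one-line change of variables and a composition argument; no real obstacle is expected.
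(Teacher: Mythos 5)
Your proof is correct and is essentially the paper's own argument: both compute $\SRev'(\XAAlloc) = \CDF^{-1}(1-\XAAlloc) - \XAAlloc/\PDF(\CDF^{-1}(1-\XAAlloc))$, substitute $\Price = \CDF^{-1}(1-\XAAlloc)$ to recognize the virtual value $\phi(\Price) = \Price - (1-\CDF(\Price))/\PDF(\Price)$, and conclude concavity from regularity (monotone $\phi$) composed with the decreasing map $\XAAlloc \mapsto \Price$.
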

\begin{proof}
It is enough to show that $\frac{\partial}{\partial \XAAlloc} \SRev(\XAAlloc)$ is non-increasing in
$\XAAlloc$. Observe that $\frac{\partial}{\partial \XAAlloc} \SRev(\XAAlloc) = \CDF^{-1}(1-\XAAlloc) -
\frac{\XAAlloc}{\PDF(\CDF^{-1}(1-\XAAlloc))}$ in which $\PDF(\wdot)$ is the derivative of $\CDF(\wdot)$. By
substituting $\XAAlloc=1-\CDF(\Price)$, it is enough to show that the resulting function is non-decreasing in
$\Price$ because $\XAAlloc$ is itself non-increasing in $\Price$. However, by this substitution we get
$\frac{\partial}{\partial \XAAlloc} \SRev(\XAAlloc)=\Price-\frac{1-\CDF(\Price)}{\PDF(\Price)}$ which is
non-decreasing in $\Price$ by definition of regularity.
\end{proof}

Note that any deterministic mechanism for a unit demand buyer can be interpreted as item pricing.
Consequently, $\SRev_j(\XAAlloc_j)$ is the maximum revenue that such a mechanism can obtain if item $j$ is
allocated with probability $\XAAlloc_j$. Next, we show that the following convex program gives an upper bound
the on the expected optimal revenue.

\begin{alignat*}{3}
    \text{maximize} &   &\qquad \sum_j & \SRev_j(\XAAlloc_j) & \qquad & \tag{$\SRevSC_{unit}$} \label{CP:revu} \\
    \text{subject to}&  &       \XAAlloc_j & \le \XAAllocC_j, & & \forall j \in \RangeN{\NumItems} \tag{$\lambda_j$} \label{eq:lambda_j} \\
                    &   &       \sum_j \XAAlloc_j & \le 1 \tag{$\tau$} \label{eq:tau} \\
                    &   &       \XAAlloc_j &\ge 0, & & \forall j \in \RangeN{\NumItems}  \tag{$\mu_j$} \label{eq:mu_j}
\end{alignat*}

\begin{theorem}
\label{thm:revu}%
The revenue of the optimal deterministic single buyer mechanism, subject to an upper bound of $\XAAllocC$ on
the ex ante allocation rule, is no more than the optimal value of the convex program \eqref{CP:revu}.
\end{theorem}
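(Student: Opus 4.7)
The plan is to reduce any deterministic mechanism to item pricing, then bound revenue item-by-item using the inverse-CDF argument, and finally observe that the resulting ex-ante allocation vector is feasible for \eqref{CP:revu}.

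First I would invoke the fact (noted in the paragraph immediately preceding the theorem) that any deterministic mechanism for a single unit-demand buyer is equivalent to buyer-specific item pricing: by the taxation principle, the buyer faces a menu of (item, price) pairs $\{(j,\Price_j)\}_{j \in [\NumItems]}$ together with the outside option, and selects a utility-maximizing choice. Fix such an optimal deterministic mechanism and let $\Price_j$ be the offered price and $\XAAlloc_j$ be the ex ante probability (over $\Val_j \sim \CDF_j$ and, if needed, tie-breaking) that item $j$ is allocated.

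Next I would verify that the vector $(\XAAlloc_j)_{j}$ is feasible for \eqref{CP:revu}. The constraint $\XAAlloc_j \le \XAAllocC_j$ is exactly the hypothesis of the theorem (the upper bound on the ex ante allocation rule). The constraint $\sum_j \XAAlloc_j \le 1$ follows from unit demand: the events ``buyer gets item $j$'' are mutually exclusive, so the probabilities sum to at most one. Nonnegativity is trivial.

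Then I would bound the revenue contribution of each item. Since the buyer can be allocated item $j$ only when $\Val_j \ge \Price_j$, we get $\XAAlloc_j \le \Prx{\Val_j \ge \Price_j} = 1-\CDF_j(\Price_j)$, so $\Price_j \le \CDF_j^{-1}(1-\XAAlloc_j)$ (using strict monotonicity of $\CDF_j$). Therefore the expected revenue from item $j$ is
\[
\Price_j \ssdot \XAAlloc_j \;\le\; \XAAlloc_j \ssdot \CDF_j^{-1}(1-\XAAlloc_j) \;=\; \SRev_j(\XAAlloc_j).
\]
Summing over $j$ gives that the total expected revenue is at most $\sum_j \SRev_j(\XAAlloc_j)$, which in turn is at most the optimal value of \eqref{CP:revu} because $(\XAAlloc_j)_j$ is feasible for that program.

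There is no real obstacle: the only subtle point is the reduction to item pricing, which is standard for deterministic unit-demand mechanisms and is already asserted in the text. Regularity of the distributions is not needed for the upper bound itself; it was used only to ensure that $\SRev_j$ is concave so that \eqref{CP:revu} is genuinely a convex program.
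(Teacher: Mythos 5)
Your proposal is correct and follows essentially the same route as the paper: take the ex ante allocation vector $\XAAlloc^*$ of the optimal deterministic mechanism, verify it is feasible for \eqref{CP:revu} via $\sum_j \XAAlloc^*_j \le 1$ (unit demand) and $\XAAlloc^*_j \le \XAAllocC_j$ (hypothesis), and bound the per-item revenue by $\SRev_j(\XAAlloc^*_j)$. The only difference is cosmetic: where the paper defers the per-item bound to ``essentially the same as \autoref{thm:rev1},'' you unpack it directly via $\Price_j \le \CDF_j^{-1}(1-\XAAlloc_j)$, which is in fact cleaner here since a deterministic mechanism uses a single price and no concave closure is needed.
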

\begin{proof}
Let $\XAAlloc^*$ be the ex ante allocation rule of the optimal single buyer deterministic mechanism. So the
expected revenue obtained from each item $j$ is upper bounded by $\SRev_j(\XAAlloc^*_j)$ (proof of this claim
is essentially the same as the proof of \autoref{thm:rev1}). Consequently, the expected optimal revenue
cannot be more that $\sum_j \SRev_j(\XAAlloc^*_j)$. Furthermore, the optimal mechanism never allocates more
than one item, so $\sum_j \XAAlloc^*_j \le 1$, and also $\XAAlloc^*_j \le \XAAllocC_j$; therefore
$\XAAlloc^*$ is a feasible solution for the convex program; so the expected optimal revenue is upper bounded
by the optimal value of the convex program.
\end{proof}

Next, we present the single buyer mechanism.

\begin{samepage}
\begin{definition}[Mechanism] \ \
\label{def:unit}%
\begin{itemize}
\item

Define the benchmark $\Rev(\XAAllocC)$ to be the optimal value of \eqref{CP:revu} as a function of
$\XAAllocC$.

\item
Given $\XAAllocC$, solve \eqref{CP:revu} and let $\XAAlloc$ denote an optimal assignment.

\item
For each item $j$, assign the price $\Price_j = \CDF_j^{-1}(1-\XAAlloc_j)$. WLOG, assume that items are
indexed in non-decreasing order of prices, i.e., $\Price_1 \le \ldots \le \Price_\NumItems$.

\item
For each item $j$, define $\TailRev_j = \max(\XAAlloc_j \Price_j + (1-\XAAlloc_j)\TailRev_{j+1},
\TailRev_{j+1})$ and let $\TailRev_{\NumItems+1} = 0$. Let $\ItemSubset^*$ be the subset of items defined as
$\ItemSubset^* = \{j | \Price_j \ge \TailRev_{j+1} \}$.

\item Only offer the items in $\ItemSubset^*$ at prices computed in the previous step (i.e., set the price of other items
to infinity).
\end{itemize}
\end{definition}
\end{samepage}

\begin{theorem}
The mechanism of \autoref{def:unit} obtains at least $\frac{1}{2}$ of the revenue of the optimal
deterministic single buyer mechanism in expectation. Furthermore, it satisfies the requirements of
$\gamma$-pre-rounding.
\end{theorem}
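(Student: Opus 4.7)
The statement has two parts: a $\tfrac{1}{2}$-approximation ratio against the optimal deterministic mechanism, and compatibility with $\gamma$-pre-rounding. By \autoref{thm:revu}, $\Rev(\bar{\XAAlloc})$ upper bounds the optimal deterministic revenue, so it suffices to lower bound the mechanism's expected revenue by $\tfrac{1}{2}\Rev(\bar{\XAAlloc})$ and to verify that $\Rev$ is concave and admits a budget-balanced cross-monotonic cost sharing scheme.

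For the approximation, I would first unwind the DP defining $\TailRev_j$ and observe that $\TailRev_1 = \max_{T \subseteq \RangeN{\NumItems}} \mathrm{Rev}(T)$, where $\mathrm{Rev}(T) = \sum_{j \in T} \Price_j \XAAlloc_j \prod_{i \in T,\, i < j}(1 - \XAAlloc_i)$ is the expected revenue of the myopic-sequential mechanism that offers items of $T$ in non-decreasing price order. The actual mechanism presents the simultaneous menu $(\Price_j)_{j \in \ItemSubset^*}$ to a utility-maximizing buyer; its revenue is at least $\TailRev_1$ because in every valuation realization the utility-maximizer must pick an affordable item no earlier than the myopic buyer's choice in the sorted order (every strictly earlier item is unaffordable by definition of ``first affordable''), hence at a weakly larger price. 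To establish $\TailRev_1 \ge \Rev/2$ I would reduce to the magician's problem: instantiate a $\gamma_1$-conservative magician (\autoref{def:gcm}) with $\K = 1$ wand, present it the items in non-decreasing price order using $\XAAlloc_j$ as the written breaking probability, and interpret ``opening box $j$'' as ``offering item $j$'' (so the wand breaks on box $j$ exactly when item $j$ sells, with conditional probability $\XAAlloc_j$). \autoref{thm:magician} and \autoref{def:magician_param} give $\gamma_1 \ge 1 - 1/\sqrt{1+3} = \tfrac{1}{2}$, so each item sells with probability at least $\XAAlloc_j/2$, making the expected revenue of this randomized strategy at least $\sum_j \Price_j \XAAlloc_j/2 = \Rev/2$. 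Since for $\K = 1$ any adaptive offering strategy is revenue-equivalent to attempting a static subset $T$ (once the wand breaks, later decisions affect nothing), the magician's policy is a distribution over such $T$, and its expected revenue is at most $\max_T \mathrm{Rev}(T) = \TailRev_1$.

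For the pre-rounding hypotheses, concavity of $\Rev$ follows from \autoref{lem:concave} applied to \eqref{CP:revu}. For the cost sharing scheme I define $\xi(j, \bar{\XAAlloc}|_\ItemSubset) = \SRev_j(y_j^*|_\ItemSubset)$, where $y^*|_\ItemSubset$ is an optimal assignment of \eqref{CP:revu} restricted to items in $\ItemSubset$ (i.e.\ with $\bar{\XAAlloc}_j \leftarrow 0$ for $j \notin \ItemSubset$); budget balance is immediate. For cross monotonicity, let $\tau_\ItemSubset$ denote the Lagrange multiplier of $\sum_j y_j \le 1$ in the restricted program. Via the Lagrangian relaxation, $\tau_\ItemSubset$ is the smallest $\tau \ge 0$ such that the unconstrained maximizer $y_j(\tau) = \arg\max_{y \in [0, \bar{\XAAlloc}_j]} [\SRev_j(y) - \tau y]$ satisfies $\sum_{j \in \ItemSubset} y_j(\tau) \le 1$; adding items to $\ItemSubset$ only increases this sum at any fixed $\tau$, so the minimal feasible $\tau$ is non-decreasing in $\ItemSubset$. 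The KKT stationarity $\SRev_j'(y_j^*) = \tau_\ItemSubset + \lambda_j$ (with $\lambda_j \ge 0$ the dual of $y_j \le \bar{\XAAlloc}_j$) together with regularity-driven monotonicity of $\SRev_j'$ then forces $y_j^*|_\ItemSubset$ to be non-increasing in $\ItemSubset$ for each $j \in \ItemSubset$. Finally, the same stationarity gives $\SRev_j'(y_j^*) \ge 0$, placing $y_j^*$ in the increasing region of the concave $\SRev_j$; hence $\SRev_j(y_j^*|_\ItemSubset)$---and therefore $\xi(j, \bar{\XAAlloc}|_\ItemSubset)$---is non-increasing in $\ItemSubset$.

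The main obstacle is the magician-to-revenue reduction above: one must carefully argue both that the ``each box opened with probability $\ge \gamma_1$'' guarantee immediately lifts to ``each item sold with probability $\ge \XAAlloc_j \gamma_1$,'' and that for $\K = 1$ the magician's adaptive policy is revenue-equivalent to a distribution over static offer-subsets, so that its expected revenue can be compared against $\max_T \mathrm{Rev}(T)$. The KKT bookkeeping for cross monotonicity is routine but does require careful treatment of the boundary case $y_j^* = \bar{\XAAlloc}_j$.
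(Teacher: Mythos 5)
Your proof is correct, and the second half (concavity and the cost-sharing scheme) is essentially the same Lagrangian/KKT argument the paper gives, stated slightly more carefully — you explicitly flag that stationarity places $y_j^*$ in the region where $\SRev_j' \ge 0$, which the paper leaves implicit.

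Your argument for the $\tfrac{1}{2}$ ratio, however, takes a genuinely different route. The paper proves $\TailRev_1 \ge \tfrac{1}{2}\sum_j \XAAlloc_j \Price_j$ directly as a standalone LP fact (\autoref{lem:dp}): it writes down the primal LP whose feasible region contains the DP's $(\Price_j,\TailRev_j)$, constructs an explicit dual assignment with value $\tfrac12$, and is done. You instead reduce to the magician's problem with a single wand: you exhibit a $\gamma_1$-conservative magician strategy on the boxes $\XAAlloc_1,\ldots,\XAAlloc_\NumItems$ (valid since $\sum_j\XAAlloc_j\le 1$), observe that by independence of the $\Val_j$ it sells each item $j$ with unconditional probability at least $\gamma_1\XAAlloc_j$ so its expected revenue is at least $\gamma_1\sum_j\XAAlloc_j\Price_j\ge\tfrac12\sum_j\XAAlloc_j\Price_j$, and then note that for $\K=1$ the threshold is identically $0$ so the magician is a coin-flip over static offer sets, hence its revenue is dominated by $\max_T\mathrm{Rev}(T)=\TailRev_1$. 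Sandwiching gives $\TailRev_1\ge\tfrac12\sum_j\XAAlloc_j\Price_j$. Both arguments are sound; yours buys re-use of the magician machinery (and in fact gives an alternative proof of \autoref{lem:dp} for the case $\sum_j\XAAlloc_j\le1$) at the cost of importing the heavier \autoref{thm:magician}, whereas the paper's dual witness is self-contained and elementary. One small presentational caveat in your write-up: ``every strictly earlier item is unaffordable'' should read ``every item strictly earlier than the myopic buyer's choice is unaffordable''; that is what forces the utility-maximizer's pick to be no cheaper, and is what the paper's ``adversary replica'' argument is making precise.
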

\begin{proof}
First, we show that this mechanism obtains in expectation at least $\frac{1}{2}$ of its benchmark
$\Rev(\XAAllocC)$, which by \autoref{thm:revu} is an upper bound on the optimal revenue. Observe that
$\Rev(\XAAllocC) = \sum_j \XAAlloc_j \Price_j$ where $\XAAlloc_j$ is exactly the probability that the
valuation of the buyer for item $j$ is at least $\Price_j$. Now consider an ``adversary replica'' who has the
exact same valuations as the original buyer, but always buys the item that has the lowest price among all the
items priced below her valuation. For any assignment of prices, the revenue obtained from the adversary
replica is a lower bound on the revenue obtained from the original buyer. So it is enough to show that the
mechanism obtains a revenue of at least $\frac{1}{2}\sum_j \XAAlloc_j \Price_j$ from the adversary replica.
Observe that $\TailRev_j$ is exactly the expected revenue obtained from the adversary replica when offered
the items in $\ItemSubset^* \cap \RangeMN{j}{\NumItems}$. In particular, item $j$ is included in
$\ItemSubset^*$ if $\Price_j \ge \Rev_{j+1}$, which implies that the revenue obtained from the purchase of
item $j$, conditioned on purchase, is more than the lower bound on the expected revenue obtained from items
$\RangeMN{j}{\NumItems}$. Finally, observe that the expected revenue obtained from the adversary replica is
exactly $\TailRev_1$. By \autoref{lem:dp} we can conclude that $\TailRev_1 \ge \frac{1}{2}\sum_j \XAAlloc_j
\Price_j $ which completes the proof of the first claim.

Next, we show that this mechanism satisfies the requirements of $\gamma$-pre-rounding. Observe that by
\autoref{lem:concave}, the optimal value of \eqref{CP:revu} is a concave function of $\XAAllocC$; so
$\Rev(\XAAllocC)$ is concave. It only remains to show that $\Rev(\wdot)$ has a budget balanced cross
monotonic cost sharing scheme. Let $\XAAlloc_j(\XAAllocC)$ denote the optimal assignment of variable
$\XAAlloc_j$, in the convex program \eqref{CP:revu}, as a function of $\XAAllocC$. Define the cost share
function
\begin{align*}
    \xi(j, \XAAllocC)=\SRev_j(\XAAlloc_j(\XAAllocC)).
\end{align*}
We shall show that $\xi$ is budget balanced and cross monotonic (see \autoref{def:cs}).
\begin{itemize}
\item
\About{Budget balance} We shall show that for any $\XAAllocC \in \RangeAB{0}{1}^\NumItems$ and any
$\ItemSubset \subseteq \RangeN{\NumItems}$, $\Rev(\Induced{\XAAllocC}{\ItemSubset}) = \sum_{j \in
\ItemSubset} \xi(j,\Induced{\XAAllocC}{\ItemSubset})$. Note that
$\Rev(\Induced{\XAAllocC}{\ItemSubset})=\sum_j \SRev_j(\XAAlloc_j(\Induced{\XAAllocC}{\ItemSubset}))=\sum_{j
\in \ItemSubset} \xi(j, \XAAlloc_j(\Induced{\XAAllocC}{\ItemSubset}))$ which proved that $\xi$ is budget
balanced. Note that $\SRev_j(\XAAlloc_j(\Induced{\XAAllocC}{\ItemSubset}))=0$, for any $j \not \in
\ItemSubset$, because $\XAAlloc_j(\Induced{\XAAllocC}{\ItemSubset})$ is forced to be $0$.

\item
\About{Cross monotonicity} We shall show that $\xi(j,\Induced{\XAAllocC}{\ItemSubset}) \ge
\xi(j,\Induced{\XAAllocC}{\ItemSubset \cup \ItemSubset'})$, for any $\XAAllocC \in \RangeAB{0}{1}^\NumItems$
and any $\ItemSubset, \ItemSubset' \subseteq \RangeN{\NumItems}$. Let the Lagrangian of \eqref{CP:revu} be
defined as follows.
\begin{align*}
    \Lag(\XAAlloc,\lambda,\tau,\mu) &= -\sum_j \SRev_j(\XAAlloc_j)+\sum_j \lambda_j\sdot(\XAAlloc_j-\XAAllocC_j)+ \tau\sdot(\sum_j \XAAlloc_j-1)-\sum_j \mu_j \XAAlloc_j
\end{align*}
The high level idea of the proof is as follows. We show that there is more pressure on the constraint
associated with $\tau$ when the set of available items is $\ItemSubset \cup \ItemSubset'$ instead of
$\ItemSubset$ (i.e., $\tau$ is larger for $\ItemSubset \cup \ItemSubset'$); we then show that the optimal
$\XAAlloc_j$ can be determined from $\tau$; in particular, we show that, as the optimal $\tau$ increases, the
optimal $\XAAlloc_j$ decreases, and consequently $\xi(j, \XAAlloc)$ (which is equal to $\SRev_j(\XAAlloc_j)$)
decreases as well, which proves $\xi$ is cross monotonic. Next we present the proof in detail.

By KKT stationarity conditions, at the optimal assignment the following holds.
\begin{align*}
    \frac{\partial}{\partial \XAAlloc_j} \Lag(\XAAlloc,\lambda,\tau,\mu)& =- \frac{\partial}{\partial \XAAlloc_j} \SRev_j(\XAAlloc_j)+\lambda_j + \tau-\mu_j = 0
\end{align*}

First we show that the optimal $\XAAlloc_j$, and consequently $\xi(j, \XAAllocC)$, can be determined from the
optimal $\tau$; and they are both non-increasing in $\tau$. Observe that (a) all dual variables must be
non-negative, (b) by complementary slackness $\lambda_j$ may be non-zero only if $\XAAlloc_j=\XAAllocC_j$,
and (c) complementary slackness implies that $\mu_j$ may be non-zero only if $\XAAlloc_j=0$; therefore, if
the optimal $\tau$ is given, the optimal assignment for $\XAAlloc_j$ is uniquely\footnote{To avoid
complicating the proof, we assume that the functions $\SRev_j(\wdot)$ are strictly concave, however this
assumption is not necessary.} determined by the above equation and the aforementioned complementarity
slackness conditions. Let $\XAAlloc_j(\tau)$ denote the optimal assignment of $\XAAlloc_j$ as a function of
$\tau$. Due to the concavity of $\SRev_j(\wdot)$, and the above KKT condition, we can argue that
$\XAAlloc_j(\tau)$ is non-increasing in $\tau$, which also implies that $\xi(j, \XAAllocC)$ is non-increasing
in $\tau$.

Next, we prove by contradiction that $\xi$ is cross monotonic. Let $\tau(\XAAllocC)$ denote the optimal
assignment of $\tau$ as a function of $\XAAllocC$. By contradiction, suppose $\xi$ is not cross monotonic,
i.e. $\xi(j^*, \Induced{\XAAllocC}{\ItemSubset \cup \ItemSubset'}) > \xi(j^*,
\Induced{\XAAllocC}{\ItemSubset})$ for some item $j^*$; therefore $\tau(\Induced{\XAAllocC}{\ItemSubset}) >
\tau(\Induced{\XAAllocC}{\ItemSubset \cup \ItemSubset'}) \ge 0$. Since
$\tau(\Induced{\XAAllocC}{\ItemSubset}) > 0$, the inequality associated with $\tau$ must be tight (by
complementary slackness), so $\sum_j \XAAlloc_{j}(\tau(\Induced{\XAAllocC}{\ItemSubset}))=1$. On the other
hand, for all $j$, $\XAAlloc_{j}(\tau(\Induced{\XAAllocC}{\ItemSubset \cup \ItemSubset'})) \ge
\XAAlloc_{j}(\tau(\Induced{\XAAllocC}{\ItemSubset}))$, with the inequality being strict for $j=j^*$, which
means $\sum_j \XAAlloc_{j}(\tau(\Induced{\XAAllocC}{\ItemSubset \cup \ItemSubset'})) > 1$, which is a
contradiction.
\end{itemize}
\end{proof}

\begin{lemma}
\label{lem:dp}%
Let $\Price_1, \ldots, \Price_\NumItems$ and $\XAAlloc_1,\ldots, \XAAlloc_\NumItems$ be two sequences of
non-negative real numbers and suppose $\sum_j \XAAlloc_j \le 1$. For each $j \in \RangeN{\NumItems}$, define
$\TailRev_j = \max(\XAAlloc_j \Price_j + (1-\XAAlloc_j)\TailRev_{j+1}, \TailRev_{j+1})$ and let
$\TailRev_{\NumItems+1}=0$. Then $\TailRev_1 \ge \frac{1}{2}\sum_j \XAAlloc_j \Price_j$.
\end{lemma}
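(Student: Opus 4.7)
The plan is to prove the inequality by a telescoping argument that separates the items according to whether the ``$\max$'' is achieved by its first or second argument, and then to invoke the budget constraint $\sum_j \XAAlloc_j\le 1$ at the very end. Let me write $T_j$ for $\TailRev_j$ for brevity, and observe immediately that the recursion gives $T_j\ge T_{j+1}$ for all $j$, so in particular $T_{j+1}\le T_1$. Define the ``include'' set
\[
S=\{\,j\in\RangeN{\NumItems}\;:\;\Price_j\ge T_{j+1}\,\},
\]
which is precisely the set of indices at which the first argument of the $\max$ is the larger one. For $j\in S$ we have $T_j-T_{j+1}=\XAAlloc_j(\Price_j-T_{j+1})$, while for $j\notin S$ we have $T_j=T_{j+1}$.

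First I would telescope: $T_1=T_1-T_{\NumItems+1}=\sum_{j=1}^{\NumItems}(T_j-T_{j+1})=\sum_{j\in S}\XAAlloc_j(\Price_j-T_{j+1})$. Rearranging gives
\[
\sum_{j\in S}\XAAlloc_j\Price_j \;=\; T_1+\sum_{j\in S}\XAAlloc_j T_{j+1}\;\le\;T_1+T_1\sum_{j\in S}\XAAlloc_j,
\]
where I used $T_{j+1}\le T_1$. Second, for each $j\notin S$ the definition of $S$ gives $\Price_j<T_{j+1}\le T_1$, so
\[
\sum_{j\notin S}\XAAlloc_j\Price_j\;\le\;T_1\sum_{j\notin S}\XAAlloc_j.
\]

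Adding the two bounds and applying $\sum_j\XAAlloc_j\le 1$,
\[
\sum_j\XAAlloc_j\Price_j\;\le\;T_1\Bigl(1+\sum_j\XAAlloc_j\Bigr)\;\le\;2T_1,
\]
which rearranges to $T_1\ge\tfrac12\sum_j\XAAlloc_j\Price_j$, as required. There is no real obstacle here; the only subtle point is that the global constraint $\sum_j\XAAlloc_j\le 1$ must be saved for the final step after both the ``in $S$'' and ``not in $S$'' contributions have each been bounded by $T_1\sum(\cdot)\XAAlloc_j$, because a naive bound of each sum separately (using only $\Price_j\le T_1$ or only telescoping) would give a worse constant.
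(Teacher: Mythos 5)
Your proof is correct, and it takes a genuinely different route from the paper. The paper proves \autoref{lem:dp} by LP duality: it writes a linear program whose objective is $\TailRev_1$, whose constraints are the recursion inequalities and the normalization $\sum_j \XAAlloc_j \Price_j \ge 1$ (after a harmless rescaling), and then exhibits a feasible dual solution with value $\frac{1}{2}$. You instead argue directly: partition the indices into the set $S$ where the recursion takes its first branch and its complement; telescope $\TailRev_1 = \sum_{j\in S}\XAAlloc_j(\Price_j - \TailRev_{j+1})$; bound $\TailRev_{j+1}\le\TailRev_1$ on the $S$-side and $\Price_j < \TailRev_{j+1}\le\TailRev_1$ on the complement; and then combine the two bounds using $\sum_j\XAAlloc_j\le 1$. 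Every step checks out (the boundary case $\Price_j=\TailRev_{j+1}$ contributes zero to the telescope and is handled either way, and $\XAAlloc_j\ge 0$ is used when multiplying through). The tradeoff: the LP-duality proof is mechanical and makes it transparent that the constant $\tfrac12$ is the value of an optimization problem, whereas your telescoping argument is shorter, self-contained, and makes visible where the two halves of the factor $2$ come from --- one unit of $\TailRev_1$ from the telescoped total, and up to one more unit of $\TailRev_1$ from $\sum_j\XAAlloc_j\le1$. Both are sound; yours is arguably the more illuminating elementary argument.
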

\begin{proof}
See \autoref{proof:lem:dp}.
\end{proof}

\subsection{Multi Item (Independent), Additive, Budget Constraint}
\label{sec:budgetm}%

In this section, we consider a buyer with publicly known budget $\Budget$ who has private independent and
additive valuations for $\NumItems$ items (i.e., her valuation for a bundle of items is the sum of her
valuations for individual items in the bundle). We assume the buyer's valuation for each item $j$ is
distributed independently according to a publicly known distribution with CDF $\CDF_j(\wdot)$. To avoid
complicating the proofs, we assume that each $\CDF_j(\wdot)$ is continuous and strictly increasing in its
domain\footnote{The proofs can be modified to work without this assumption.}. We present a single buyer
mechanism which is a $(1-\frac{1}{e})$-approximation of the optimal revenue maximizing item pricing mechanism
with budget randomization (\IP). This mechanism can be used with $\gamma$-pre-rounding (\autoref{def:pre}) to
yield a $(1-\frac{1}{e})\gamma_\NumUnits$-approximate sequential posted pricing multi buyer mechanism. The
previous best approximation mechanism for this setting was an $O(1)$-approximate\footnote{$\frac{1}{96}$}
sequential posted pricing mechanism by \cite{BGGM10}. We should note that the mechanism in \cite{BGGM10} is
more general as it allows the buyers to have demand constraints as well, and it does not allow for budget
randomization.

As in \SRef{sec:budget1}, we start by defining the modified CDF function $\CDFB_j(\wdot)$ for each item $j$
as follows.
\begin{align*}
    \CDFB_j(\Val) &=
        \begin{cases}
        \CDF_j(\Val)                    & \Val \le \Budget \\
        1-(1-\CDF_j(\Val))\frac{\Budget}{\Val}   & \Val \ge \Budget
        \end{cases} \tag{$\CDFB_j$} \label{eq:CDFBj}
\end{align*}

Furthermore, for each item $j$, let $\SRev_j(\XAAlloc)=\XAAlloc  {\CDFB_j}^{-1}(1-\XAAlloc)$ and let
$\SRevCC_j(\wdot)$ be its concave closure as define in \SRef{sec:budget1}. Also, for each $j$, define
$\Rev_j(\XAAllocC_j)$ to be the optimal value of the following convex program as a function of $\XAAllocC_j$.

\begin{alignat*}{3}
    \text{maximize} &   &\qquad & \SRev_j(\XAAlloc_j) & \qquad & \tag{$\SRevSC_{add}$} \label{CP:revj} \\
    \text{subject to}&  &       \XAAlloc_j & \le \XAAllocC_j \\
                    &   &       \XAAlloc_j &\ge 0
\end{alignat*}

The next theorem provides an upper bound on the revenue of the optimal single buyer \IP{} mechanism.

\begin{theorem}
\label{thm:revm}%
The revenue of the optimal single buyer item pricing mechanism with budget randomization (\IP), subject to an
upper bound of $\XAAllocC$ on the ex ante allocation rule, is no more than $\min(\sum_j \Rev_j(\XAAllocC_j),
\Budget)$, .
\end{theorem}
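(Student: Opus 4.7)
The plan is to prove the two bounds in $\min(\sum_j \Rev_j(\XAAllocC_j), \Budget)$ separately. The bound $\Budget$ is immediate: in every realization of the mechanism's randomness and the buyer's valuations, the buyer's total payment is capped by her budget $\Budget$, so the expected revenue is at most $\Budget$.

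For the bound $\sum_j \Rev_j(\XAAllocC_j)$, I would decompose the revenue item by item. Fix any feasible \IP{} mechanism with ex ante allocation $\XAAlloc \le \XAAllocC$, and consider a single item $j$. Condition on the mechanism's random price vector $\Price$ and let $y_j(\Price)$ denote the conditional expected fraction of item $j$ allocated (expectation over the buyer's valuations). Because the buyer is additive, her best response against the realized $\Price$ gives two pointwise facts: (i) $\XPAlloc_j(\Val,\Price) > 0$ only if $\Val_j \ge \Price_j$, and (ii) the budget constraint forces $\Price_j \XPAlloc_j(\Val,\Price) \le \Budget$, i.e., $\XPAlloc_j(\Val,\Price) \le \Budget/\Price_j$. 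Combining and taking expectation over $\Val$:
\begin{align*}
y_j(\Price) \le \min(1, \Budget/\Price_j)\bigl(1-\CDF_j(\Price_j)\bigr) = 1-\CDFB_j(\Price_j),
\end{align*}
where the last equality is precisely the definition of $\CDFB_j$ in \eqref{eq:CDFBj}. Inverting, $\Price_j \le {\CDFB_j}^{-1}(1-y_j(\Price))$, so the expected per-item payment conditional on $\Price$ satisfies
\begin{align*}
\Price_j \cdot y_j(\Price) \le y_j(\Price)\cdot {\CDFB_j}^{-1}\bigl(1-y_j(\Price)\bigr) = \SRev_j(y_j(\Price)) \le \SRevCC_j(y_j(\Price)).
\end{align*}

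Taking expectation over $\Price$ and applying Jensen's inequality to the concave function $\SRevCC_j$, the total expected payment for item $j$ is at most $\SRevCC_j(\XAAlloc_j)$, where $\XAAlloc_j = \Ex[\Price]{y_j(\Price)}$ is the mechanism's ex ante allocation of item $j$. Since $\SRevCC_j$ is the objective of the single-item convex program \eqref{CP:revj}, and since $\XAAlloc_j \le \XAAllocC_j$, the optimal value satisfies $\Rev_j(\XAAllocC_j) \ge \SRevCC_j(\XAAlloc_j)$. Summing over $j$ and combining with the trivial budget bound yields the desired inequality.

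The main obstacle is the pointwise bound $y_j(\Price) \le 1-\CDFB_j(\Price_j)$, which is where one must carefully reconcile the shared-budget multi-item setting with the single-item definition of $\CDFB_j$ from \SRef{sec:budget1}. The subtlety is that a shared budget only decreases each $y_j(\Price)$ further (relative to the item-$j$-alone scenario), so the single-item budget-randomized CDF remains a valid per-item upper bound even though the buyer may spend budget on other items. Everything else---Jensen's inequality via the concavity of $\SRevCC_j$ and the monotonicity of the benchmark's optimal value in $\XAAllocC_j$---is routine once the pointwise bound is in hand.
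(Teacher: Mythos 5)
Your proof is correct and follows the same per-item decomposition strategy as the paper's proof, which bounds the revenue obtained from each item $j$ by what the optimal single-item \IP{} mechanism for $j$ alone would obtain (via \autoref{thm:rev1}), and then intersects with the trivial budget bound. Where the paper argues the per-item bound informally (via an exchange-style observation that extra sales in the joint mechanism come only from lower types), you make it fully rigorous by conditioning on the realized price vector, establishing the pointwise demand bound $y_j(\Price)\le 1-\CDFB_j(\Price_j)$, and then applying Jensen's inequality to the concave closure $\SRevCC_j$ together with monotonicity of $\Rev_j$ --- a genuine tightening of the paper's sketch, but the same underlying idea.
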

\begin{proof}
For any $j$, if we were only to sell the item $j$, by \autoref{thm:rev1}, the maximum revenue we could obtain
using an \IP{} mechanism would be no more than $\Rev_j(\XAAllocC_j)$. Observe that if we compute the optimal
price distribution for each item separately, we might only get less revenue because the budget is shared
among all items and the buyer might not be able to buy some of the items that she would otherwise buy if
there were no other items. That means the actually probability of allocating each item $j$ could be less than
the optimal assignment of $\XAAlloc_j$ for the convex program \eqref{CP:revj}; so the optimal joint price
distribution might sell at lower prices; but the extra revenue may only come from lower types which were
originally excluded by the optimal single item mechanism. Consequently, the overall revenue from each item
$j$ cannot be more than $\Rev_j(\XAAllocC_j)$. Finally, observe that the expected revenue of the mechanism
cannot be more that $\Budget$, so it can be no more than $\min(\sum_j \Rev_j(\XAAlloc_j), \Budget)$.
\end{proof}

Next, we present $(1-\frac{1}{e})$-approximate revenue maximizing single buyer \IP{} mechanism.

\begin{samepage}
\begin{definition}[Mechanism] \ \
\label{def:budgetm}%
\begin{itemize}
\item
Define the benchmark $\Rev(\XAAllocC)=\min(\sum_j \Rev_j(\XAAllocC_j), \Budget)$.

\item
Given $\XAAllocC$, solve the convex program of \eqref{CP:revj} for each item $j$, and let $\XAAlloc_j$ denote
an optimal assignment.

\item
For each item $j$, if $\SRevCC_j(\XAAlloc_j)=\SRev_j(\XAAlloc_j)$, offer the single price
$\Price_j={\CDFB_j}^{(-1)}(1-\XAAlloc_j)$, otherwise randomize between two prices $\Price_j^-$ and
$\Price_j^+$ with probabilities $\theta_j$ and $1-\theta_j$, as explained in \autoref{thm:rev1}. Note that
the randomization must be done for each item independently.
\end{itemize}
\end{definition}
\end{samepage}

\begin{theorem}
The mechanism of \autoref{def:budgetm} obtains at least $1-\frac{1}{e}$ of the revenue of the optimal single
buyer \IP{} mechanism. Furthermore, this mechanism satisfies the requirements of $\gamma$-pre-rounding.
\end{theorem}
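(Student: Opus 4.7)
By \autoref{thm:revm}, the quantity $\Rev(\XAAllocC)=\min(\sum_j \Rev_j(\XAAllocC_j),\Budget)$ is already an upper bound on the optimal single-buyer \IP{} revenue, so my plan is to prove two things: that the mechanism of \autoref{def:budgetm} earns at least $(1-\tfrac{1}{e})\Rev(\XAAllocC)$ in expectation, and that $\Rev$ is concave and admits a budget-balanced cross-monotonic cost share (the hypotheses required by $\gamma$-pre-rounding via \autoref{def:cs}). The heart of the argument is a probabilistic lower bound on $\Ex{\min(\sum_j \tilde X_j,\Budget)}$ for a specific sum of independent $[0,\Budget]$-valued variables whose means equal the per-item benchmarks $\Rev_j(\XAAllocC_j)$.

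First I will characterize the buyer's best response. With additive valuations, budget $\Budget$, and budget randomization, her utility maximization is a fractional knapsack: given realized prices $\Price_j$ she takes $x_j\in[0,1]$ on items with $\Val_j\ge \Price_j$ in decreasing order of $\Val_j/\Price_j$, possibly fractionally on the last, and pays exactly $\min(Y,\Budget)$ where $Y:=\sum_j \Price_j\mathds{1}[\Val_j\ge \Price_j]$. Since any single term $\ge\Budget$ already saturates the min, $\min(Y,\Budget)=\min(\sum_j \tilde X_j,\Budget)$ for $\tilde X_j:=\min(\Price_j,\Budget)\mathds{1}[\Val_j\ge \Price_j]\in[0,\Budget]$. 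A direct computation using the definition \eqref{eq:CDFBj} of $\CDFB_j$ gives $\Ex{\tilde X_j}=\min(\Price_j,\Budget)(1-\CDF_j(\Price_j))=\SRev_j(\XAAlloc_j)$ in the single-price case, and the corresponding convex combination equals $\SRevCC_j(\XAAlloc_j)$ under the two-price randomization; in both cases this equals $\Rev_j(\XAAllocC_j)$ by the definition \eqref{CP:revj} of $\Rev_j$. Independence of the $\tilde X_j$ across $j$ follows from independence of valuations and of the mechanism's per-item price randomizations.

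The crux is then to show, for independent $\tilde X_j\in[0,\Budget]$ with $q_j:=\Ex{\tilde X_j}/\Budget$,
\[
    \Ex{\min(\textstyle\sum_j \tilde X_j,\Budget)} \;\ge\; \Budget\bigl(1-\textstyle\prod_j(1-q_j)\bigr).
\]
I will prove this by writing $\min(y,\Budget)=\Budget-(\Budget-y)^+$ and observing that $(\Budget-\sum_j x_j)^+$ is convex in each $x_j$; the Jensen bound for bounded-support variables ($\Ex{g(X)}\le(1-\Ex{X}/\Budget)g(0)+(\Ex{X}/\Budget)g(\Budget)$ for convex $g$ on $[0,\Budget]$) then shows, variable by variable, that $\Ex{(\Budget-\sum_j\tilde X_j)^+}$ is maximized when each $\tilde X_j$ is the two-point distribution on $\{0,\Budget\}$ with $\Pr[\tilde X_j=\Budget]=q_j$, in which case the expectation evaluates exactly to $\Budget\prod_j(1-q_j)$. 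Using $1-q\le e^{-q}$ and setting $S:=\sum_j\Rev_j(\XAAllocC_j)$, this gives $\Ex{\text{revenue}}\ge \Budget(1-e^{-S/\Budget})$. Since $t\mapsto 1-e^{-t}$ is concave on $[0,\infty)$ and agrees with $(1-\tfrac{1}{e})t$ at $t=0$ and $t=1$, it dominates $(1-\tfrac{1}{e})\min(t,1)$ for every $t\ge 0$; evaluating at $t=S/\Budget$ yields $\Ex{\text{revenue}}\ge (1-\tfrac{1}{e})\min(S,\Budget)=(1-\tfrac{1}{e})\Rev(\XAAllocC)$.

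For the pre-rounding requirements, concavity of $\Rev$ is immediate: each $\Rev_j$ is concave by \autoref{lem:concave}, summation preserves concavity, and taking the min with the constant $\Budget$ does too. For the cost share I will take $\xi(j,\XAAllocC):=\Rev_j(\XAAllocC_j)\cdot\Rev(\XAAllocC)/\sum_{j'}\Rev_{j'}(\XAAllocC_{j'})$ (with the convention $0/0:=0$); budget balance is by construction, and cross monotonicity follows because enlarging $\ItemSubset$ cannot decrease the denominator while the numerator $\Rev$ saturates at $\Budget$, so the common scaling factor $\min(1,\Budget/\sum_{j'\in\ItemSubset}\Rev_{j'}(\XAAllocC_{j'}))$ is non-increasing, hence $\xi(j,\cdot)$ cannot grow. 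The main obstacle will be the extremal-distribution step in the third paragraph; the buyer-response characterization and the pre-rounding bookkeeping are routine by comparison.
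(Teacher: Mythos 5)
Your proof is correct and establishes exactly the same quantitative bound as the paper, but via a noticeably different route for the key probabilistic inequality. The paper's argument introduces a ``super replica'' who has a separate budget $\Budget$ for each item, observes that the mechanism's truncated revenue $\min(\sum_j \ZRev_j,\Budget)$ from the super replica coincides with the revenue from the real buyer, and then invokes Lemma~\ref{lem:unc}, which it proves by a one-pass recursion: setting $\YRev_j=(\Budget-\sum_{r\le j}\ZRev_r)^+$, it shows $\Ex{\YRev_j}\le(1-\Ex{\ZRev_j}/\Budget)\Ex{\YRev_{j-1}}$ and multiplies through. Your proof reaches the same inequality $\Ex{(\Budget-\sum_j\tilde X_j)^+}\le\Budget\prod_j(1-q_j)$ by a different device: you replace each independent $[0,\Budget]$-valued summand, one at a time, by its mean-preserving two-point spread on $\{0,\Budget\}$, using the Jensen-on-bounded-support inequality for the convex function $(\Budget-\cdot)^+$, and then compute the extremal value in closed form. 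The recursion and the extremal-distribution argument yield the identical product bound; your version makes the worst case explicit (independent Bernoullis), which is perhaps more illuminating, while the paper's recursion is shorter to write down. Your framing of the buyer's payment directly as $\min(\sum_j\tilde X_j,\Budget)$ with $\tilde X_j=\min(\Price_j,\Budget)\mathds{1}[\Val_j\ge\Price_j]$ is equivalent to the super-replica picture. The tail inequality $1-e^{-t}\ge(1-1/e)\min(t,1)$ you use is the same as the paper's $(1-x^a)\ge(1-x)a$ step, just presented geometrically.

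For the pre-rounding hypotheses, you and the paper both note concavity of $\Rev$ is inherited from the $\Rev_j$. For the cost share, the paper simply remarks that $\ItemSubset\mapsto\min(\sum_{j\in\ItemSubset}\Rev_j(\XAAllocC_j),\Budget)$ is submodular and so admits a cross-monotonic budget-balanced scheme; you instead write one down explicitly, $\xi(j,\XAAllocC)=\Rev_j(\XAAllocC_j)\min\bigl(1,\Budget/\sum_{j'}\Rev_{j'}(\XAAllocC_{j'})\bigr)$, and verify both properties. This is a valid instantiation (it is, in effect, the proportional cost-share for the ``budgeted additive'' function) and slightly more self-contained than an appeal to the general submodular-implies-cross-monotonic fact. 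Overall the proposal is sound; the only place you flagged as the ``main obstacle,'' the extremal-distribution step, is indeed the crux, and your Jensen-style reduction to two-point distributions handles it correctly.
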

\begin{proof}
First, we show that the mechanism obtains at least $1-\frac{1}{e}$ of its benchmark $\Rev(\XAAllocC)$, which
by \autoref{thm:revm} is an upper bound on the optimal revenue. Consider an imaginary replica of the buyer
who has exactly the same valuations as the original buyer, but has a separate budget $\Budget$ for each item.
We call this imaginary buyer the ``super replica''. Furthermore, suppose that any payment received from the
super replica beyond $\Budget$ is lost (i.e., if the super replica pays $\ZRev$, the mechanism receives only
$\min(\ZRev,\Budget)$). Observe that for any assignment of prices, the payment received from the original
buyer and the payment received from the super replica are exactly the same because if the original buyer
has't hit his budget limit then both the original buyer and the super replica will buy the same items and pay
the exact same amount. Otherwise, if the original buyer hits his budget limit, the mechanism receives exactly
$\Budget$ from both the original buyer and the super replica; therefore we only need to show that the revenue
obtained by the mechanism from the super replica is at least $(1-\frac{1}{e})\Rev(\XAAllocC)$. Observe that
from the view point of the super replica there is no connection between different items, so he makes a
decision for each item independently. Let $\ZRev_j$ be the random variable corresponding to the amount paid
by the super replica for item $j$. By \autoref{thm:rev1}, we know that $\Ex{\ZRev_j}=\Rev_j(\XAAllocC_j)$ and
the total revenue received by the mechanism is $\ZRev=\min(\sum_j \ZRev_j, \Budget)$. Notice that
$\ZRev_1,\ldots, \ZRev_\NumItems$ are independent random variables in the range of $\RangeAB{0}{\Budget}$. By
applying \autoref{lem:unc}, we can argue that $\Ex{\min(\sum_j \ZRev_j, \Budget)} \ge (1-\frac{1}{e})
\min(\sum_j \Ex{\ZRev_j}, \Budget) = (1-\frac{1}{e}) \Rev(\XAAllocC)$ which proves our claim.

Next, we show that the mechanism satisfies the requirements of $\gamma$-pre-rounding. Observe that all
$\Rev_j(\wdot)$ are concave, and so is $\Rev(\XAAllocC)$. Furthermore,
$\Rev(\Induced{\XAAllocC}{\ItemSubset})=\min(\sum_{j\in \ItemSubset} \Rev_j(\XAAllocC_j), \Budget)$ is
submodular in $\ItemSubset$ for any $\ItemSubset \subseteq \RangeN{\NumItems}$, and therefore it has a cross
monotonic budget balanced cost share scheme (see \autoref{def:cs}), which completes the proof.
\end{proof}

\begin{lemma}
\label{lem:unc}%
Let $\Budget$ be an arbitrary positive number and let $\ZRev_1, \ldots, \ZRev_\NumItems$ be independent
random variables such that $\ZRev_j \in \RangeAB{0}{\Budget}$, for all $j$. Then the following inequality
holds.
\begin{align*}
    \Ex{\min(\sum_j \ZRev_j, \Budget)} \ge (1-\frac{1}{e^{(\sum_j \Ex{\ZRev_j})/\Budget}}) \Budget \ge (1-\frac{1}{e})\min(\sum_j \Ex{\ZRev_j},\Budget)
\end{align*}
\end{lemma}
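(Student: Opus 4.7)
The plan is to reduce to the extremal case where each $\ZRev_j$ is a two-point distribution on $\{0,\Budget\}$, and then to evaluate explicitly. The second inequality in the statement is a one-variable calculus check that I will handle at the end.

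First I would establish the reduction. Fix an arbitrary $j$ and, conditionally on the others, let $h(z) = \Ex[]{\min(z + \sum_{k\neq j} \ZRev_k, \Budget)}$. Because $\min(\cdot,\Budget)$ is concave, so is $h$, as the expectation of concave functions. Now, among random variables $Z \in \RangeAB{0}{\Budget}$ with a prescribed mean $\mu = \Ex{\ZRev_j}$, the quantity $\Ex{h(Z)}$ is minimized by the two-point distribution $\tilde Z$ which equals $\Budget$ with probability $\mu/\Budget$ and $0$ otherwise: indeed, concavity of $h$ gives the chord bound $h(z) \ge \tfrac{\Budget-z}{\Budget} h(0) + \tfrac{z}{\Budget} h(\Budget)$ for $z \in \RangeAB{0}{\Budget}$, and taking expectations yields $\Ex{h(Z)} \ge \Ex{h(\tilde Z)}$. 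Replacing $\ZRev_j$ by $\tilde Z$ thus only decreases $\Ex{\min(\sum_k \ZRev_k, \Budget)}$, and preserves independence and the mean of that coordinate. Iterating over $j=1,\ldots,\NumItems$, it suffices to prove the first inequality when each $\ZRev_j$ is Bernoulli on $\{0,\Budget\}$ with parameter $p_j = \Ex{\ZRev_j}/\Budget$.

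In this reduced case, $\min(\sum_j \ZRev_j,\Budget) = \Budget$ whenever at least one $\ZRev_j$ equals $\Budget$, and $0$ otherwise. Therefore, by independence,
\begin{align*}
\Ex{\min\bigl(\sum_j \ZRev_j,\,\Budget\bigr)} \;=\; \Budget\left(1 - \prod_{j}(1-p_j)\right) \;\ge\; \Budget\left(1 - e^{-\sum_j p_j}\right),
\end{align*}
where the last step uses the elementary bound $1-x \le e^{-x}$ for $x \in \RangeAB{0}{1}$ termwise. Substituting $\sum_j p_j = (\sum_j \Ex{\ZRev_j})/\Budget$ gives the first inequality of the lemma.

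For the second inequality, set $s = (\sum_j \Ex{\ZRev_j})/\Budget$; I must show $(1-e^{-s})\Budget \ge (1-1/e)\min(s\Budget,\Budget)$. If $s \ge 1$, then $1-e^{-s} \ge 1-1/e$ and $\min(s\Budget,\Budget)=\Budget$, so the claim is immediate. If $s \in \RangeAB{0}{1}$, then $\min(s\Budget,\Budget) = s\Budget$, and the claim reduces to $1-e^{-s} \ge (1-1/e)s$ on $\RangeAB{0}{1}$; this follows since the left side is concave in $s$, the right side is linear, and they agree at $s=0$ and $s=1$. The whole argument is short; the only real step is the concavity-based reduction to Bernoullis, which I expect to be the main (but entirely standard) point.
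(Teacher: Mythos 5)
Your proof is correct, but it takes a genuinely different route from the paper's. The paper argues by telescoping: it sets $\YRev_j = \max(\Budget-\sum_{r\le j}\ZRev_r,\,0)$, observes $\Ex{\min(\sum_j \ZRev_j,\Budget)} = \Budget - \Ex{\YRev_\NumItems}$, and then shows the contraction $\Ex{\YRev_j} \le (1-\Ex{\ZRev_j}/\Budget)\,\Ex{\YRev_{j-1}}$ via the pointwise bound $\max(w-z,0)\le w(1-z/\Budget)$ for $w,z\in\RangeAB{0}{\Budget}$ together with independence, which yields $\Ex{\YRev_\NumItems}\le \Budget\prod_j(1-\Ex{\ZRev_j}/\Budget)\le \Budget\,e^{-\mu/\Budget}$. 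You instead argue by an extremal-distribution reduction: using concavity of $h(z)=\Ex{\min(z+\sum_{k\neq j}\ZRev_k,\Budget)}$ and the chord bound, each $\ZRev_j$ can be replaced by a mean-preserving two-point distribution on $\{0,\Budget\}$ without increasing the objective, after which the expectation evaluates exactly to $\Budget(1-\prod_j(1-p_j))$ and the same inequality $1-x\le e^{-x}$ closes the argument. The two proofs land at the same product bound; yours makes explicit that the worst case is the scaled-Bernoulli one (a cleaner structural statement), while the paper's avoids any appeal to extremality and is a purely sequential estimate. For the second inequality the paper invokes $(1-x^a)\ge(1-x)a$ for $a\le 1$, whereas you check concavity of $s\mapsto 1-e^{-s}$ against the chord on $\RangeAB{0}{1}$; these are equivalent one-variable facts. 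Both are complete; no gaps.
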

\begin{proof}
See \autoref{proof:lem:unc}.
\end{proof}

\subsection{Multi Item (Correlated), Additive, Budget and Matroid Constraints}
\label{sec:corr}%

In this section, we consider a buyer with publicly known budget $\Budget$ who has private correlated additive
valuations for $\NumItems$ items; furthermore, a bundle of items can be allocated to the buyer only if it is
an independent set of a matroid $\Mat=(\RangeN{\NumItems}, \IndepSets)$, where $\Mat$ is publicly known;
equivalently, instead of treating $\Mat$ as a constraint on the allocation, we may assume that the buyer has
matroid valuations, as defined in \autoref{def:matval}. We assume that the buyer has a discrete type space
$\TypeSpace$. Let $\Val_\Type \in \PosReals^\NumItems$ denote the buyer's valuation vector corresponding to
type $\Type \in \TypeSpace$, and let $\PDF(\Type)$ denote its probability. We assume that $\PDF(\wdot)$ is
represented explicitly as a part of the input, i.e., by enumerating all types along with their respective
probabilities. The only private information of the buyer is her type. We present an optimal single buyer
randomized mechanism. This mechanism can be used with $\gamma$-post-rounding (\autoref{def:post}) to yield a
$\gamma_\NumUnits$-approximate multi buyer BIC mechanism. Recall that $\gamma_\NumUnits$ is at least
$\frac{1}{2}$, and approaches $1$ as $\NumUnits\to \infty$, which means the resulting multi buyer mechanism
approaches the optimal multi buyer mechanism as $\NumUnits\to \infty$. Prior to the preliminary version of
this paper, the best approximation for this setting was a $\frac{1}{4}$-approximate BIC mechanism by
\cite{BGGM10}\footnote{The mechanism in \cite{BGGM10} considers demand constraint, which is a special case of
matroid constraints.}. At the time of writing the current version, \cite{HV11} has also presented a
$\frac{1}{2}$-approximate BIC mechanism for the same setting. Note that all of the aforementioned mechanisms
(including the current paper) have running times polynomial only in $\Abs{\TypeSpace}$, which means their
running time may not be polynomial in the input size if $\Abs{\TypeSpace}$ is of exponential size and
$\PDF(\wdot)$ has a compact representation.

Consider the following linear program in which $\XAAlloc_\Type \in \RangeAB{0}{1}^\NumItems$ represents the
marginal allocation probabilities for type $\Type \in \TypeSpace$, and $\Price_\Type$ represents the
corresponding payment. Also let $\Rank{\Mat} : 2^\NumItems \to \RangeMN{0}{\NumItems}$ denote the rank
function of $\Mat$. The optimal value of this LP is obviously an upper bound on the optimal revenue.

\begin{align*}
    \text{maximize} &   &\qquad &  \sum_{\Type \in \TypeSpace} \PDF(\Type) \Price_\Type & \qquad & \tag{$\SRevSC_{corr}$} \label{LP:rev} \\
    \text{subject to}&  &       & \sum_{\Type \in \TypeSpace} \PDF(\Type) \XAAlloc_{\Type j} \le \XAAllocC_j, & & \forall j \in \RangeN{\NumItems} \\
                    &   &       & \sum_{j \in \ItemSubset} \XAAlloc_{\Type j} \le \Rank{\Mat}(\ItemSubset), & & \forall \Type \in \TypeSpace, \forall \ItemSubset \subseteq \RangeN{\NumItems} \\
                    &   &       & \Val_\Type\idot\XAAlloc_\Type - \Price_\Type \ge \Val_{\Type}\idot\XAAlloc_{\Type'} - \Price_{\Type'} , & & \forall \Type,\Type'  \in \TypeSpace  \\
                    &   &       & \XAAlloc_\Type \in \RangeAB{0}{1}^\NumItems, & & \forall \Type \in \TypeSpace \\
                    &   &       & \Price_\Type \in \RangeAB{0}{\Budget}, & & \forall \Type \in \TypeSpace
\end{align*}

Even though the above LP has exponentially many constraints, it can be solved in polynomial time using the
ellipsoid method\footnote{See \cite{S03} for optimization over matroid polytope.}. Next, we present a
mechanism whose expected revenue is equal to the optimal value of the above LP, which also implies that it is
optimal.

\begin{samepage}
\begin{definition}[Mechanism] \ \
\label{def:corr}%
\begin{itemize}
\item
Define the optimal benchmark $\Rev(\XAAllocC)$ to be the optimal value of \eqref{LP:rev} as a function of
$\XAAllocC$.

\item
Given $\XAAllocC$, solve the LP of \eqref{LP:rev} and let $\XAAlloc$ an $\Price$ be an optimal assignment.

\item
Let $\Type$ be the buyer's reported type. Allocate a random subset $\XPAlloc \subseteq \RangeN{\NumItems}$ of
items such that $\XPAlloc$ is an independent set of $\Mat$ and each item $j \in \RangeN{\NumItems}$ is
included in $\XPAlloc $ with a marginal probability of exactly $\XAAlloc_{\Type j}$. This can be archived by
rounding $\XAAlloc_\Type$ to a vertex of the matroid polytope using dependent randomized rounding (see
\cite{CVZ10} and the references therein). Also charge a payment of $\Price_\Type$.
\end{itemize}
\end{definition}
\end{samepage}

\begin{theorem}
\label{thm:corr}%
The mechanism of \autoref{def:corr} is an optimal truthful in expectation revenue maximizing single buyer
mechanism, subject to an upper bound of $\XAAllocC$ on the ex ante allocation rule. Furthermore, it satisfies
all the requirements of the $\gamma$-post-rounding.
\end{theorem}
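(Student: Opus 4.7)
The plan is to prove the theorem in two parts: first, optimality of the mechanism against the LP benchmark; second, verification of the four assumptions needed by $\gamma$-post-rounding.

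For the upper bound side of optimality, I would show that any truthful-in-expectation single buyer mechanism satisfying the ex ante allocation bound $\XAAllocC$, the matroid feasibility constraint, and the budget $\Budget$, yields a feasible point of \eqref{LP:rev}. Given such a mechanism, define $\XAAlloc_{\Type j}$ as the interim marginal probability with which item $j$ is allocated to the buyer given reported type $\Type$, and $\Price_\Type$ as the interim expected payment. The ex ante constraint $\sum_\Type \PDF(\Type)\XAAlloc_{\Type j} \le \XAAllocC_j$ is immediate. Matroid feasibility of the realized allocation implies $\XAAlloc_\Type$ lies in the matroid polytope, i.e., the rank inequalities. Because valuations are additive, the expected utility of type $\Type$ reporting $\Type'$ equals $\Val_\Type\cdot\XAAlloc_{\Type'}-\Price_{\Type'}$, so BIC translates directly into the incentive inequalities in the LP. Hence the expected revenue $\sum_\Type \PDF(\Type)\Price_\Type$ is upper-bounded by the optimum of \eqref{LP:rev}.

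For the matching lower bound, I need to show the constructed mechanism is BIC and achieves revenue equal to the LP optimum. The key tool is dependent randomized rounding in the matroid polytope (Chekuri--Vondr\'ak--Zenklusen), which rounds any $\XAAlloc_\Type$ in the matroid polytope to a random vertex (an independent set of $\Mat$) that preserves every coordinate marginal. Applied to the LP's optimal $\XAAlloc_\Type$, this produces an allocation whose marginals are exactly $\XAAlloc_{\Type j}$. Because valuations are additive and realized bundles are always independent, the expected utility of reporting $\Type'$ when the true type is $\Type$ is exactly $\Val_\Type\cdot\XAAlloc_{\Type'}-\Price_{\Type'}$, so the LP's incentive inequalities imply BIC. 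The expected revenue is $\sum_\Type \PDF(\Type)\Price_\Type$, matching the LP value.

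For the $\gamma$-post-rounding requirements, I would verify the four assumptions in order. Concavity of $\Rev(\XAAllocC)$ follows from \autoref{lem:concave} since \eqref{LP:rev} fits the template \eqref{CP:u} with $\XAAllocC$ appearing only as a right-hand side. For \myref{A'}{asm:post:exact_q}, the exact ex ante allocation probability of item $j$ is $\XAAllocA_j=\sum_\Type \PDF(\Type)\XAAlloc_{\Type j}$, read directly off the LP solution. For \myref{A'}{asm:post:obj}, the objective is pure revenue, so $\Obj(\Type,\XPAlloc,\Pay)=\Pay$ corresponds to $\Obj(\Type,\XPAlloc,0)=0$ (trivially admitting a budget-balanced cross-monotonic cost share) and $\PayFactor=1$. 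Assumption \myref{A'}{asm:post:bic} is satisfied because the mechanism is BIC as established above. Assumption \myref{A'}{asm:post:valuations} is immediate since the setting's matroid-plus-additive structure is exactly a matroid weighted rank valuation (\autoref{def:matval}).

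The main obstacle, and the place to be careful, is the second part of the optimality argument: one must be sure that dependent rounding in the matroid polytope not only preserves each single-item marginal exactly, but that this is enough to carry BIC through from the fractional solution — additivity of the valuations over items (together with the fact that realized bundles are always independent, so no downweighting occurs) is what makes this reduction clean. A secondary point worth spelling out is that payments $\Price_\Type \in [0,\Budget]$ make the budget constraint hard (not just ex ante), so no further adjustment is needed post-rounding.
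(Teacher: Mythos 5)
Your proposal is correct and follows essentially the same approach as the paper's proof, which simply states that truthfulness and optimality ``trivially follow'' from the LP and then checks concavity (via \autoref{lem:concave}), exact computability of $\XAAllocA$, and the matroid-valuation interpretation. Your version is more explicit than the paper's: it spells out both directions of the optimality argument (the feasibility embedding for the upper bound and the marginal-preserving matroid rounding plus additivity for the lower bound) and verifies assumptions \myref{A'}{asm:post:obj} and \myref{A'}{asm:post:bic}, which the paper leaves implicit.
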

\begin{proof}
The proof of truthfulness and optimality trivially follows from the linear program of \eqref{LP:rev}. So, we
only focus on proving that this mechanism satisfies the requirements of \autoref{thm:post}. First, observe
that the benchmark function, $\Rev(\XAAllocC)$, is concave (this follows from \autoref{lem:concave}). Second,
observe that the matroid constrains can be interpreted as matroid valuations for the buyer. Third, notice
that the exact ex ante allocation rule can be readily computed from the LP solution, i.e., $\XAAllocA_j =
\sum_\Type \PDF(\Type) \XAAlloc_{\Type j}$ is the exact probability of allocating item $j$. Therefore, the
mechanism satisfies the requirements of $\gamma$-post-rounding.
\end{proof}

\begin{remark}
Observe that by replacing the objective function of \eqref{LP:rev} with $\sum_{\Type \in \TypeSpace}
\PDF(\Type) \Val_{\Type}\idot \XAAlloc_{\Type}$, we get a truthful in expectation welfare maximizing single
buyer mechanism, which can also be used with $\gamma$-post-rounding to obtain a
$\gamma_\NumUnits$-approximate welfare maximizing BIC multiple buyer mechanism.
\end{remark}

\section{Analysis of $\gamma$-Conservative Magician}
\label{sec:magician_analysis}%

In this section, we present the proof of \autoref{thm:magician}. We prove the theorem in two parts. In the
first part, we show that the thresholds computed by the $\gamma$-conservative magician indeed guarantee that
each box is opened with an ex-ante probability at least $\gamma$, assuming that there is enough wand. In the
second part, we show that, for any $\gamma \le 1-\frac{1}{\sqrt{\K+3}}$, the thresholds $\Threshold_i$ are no
more than $\K-1$ for all $i$, which implies that the magician never requires more than $\K$ wands. It can be
shown that a non-adaptive strategy cannot guarantee a probability of more than $1-O(\frac{\sqrt{\ln
\K}}{\sqrt{\K}})$ for opening each box. Furthermore, in \autoref{thm:magician_hardness}, we show that no
algorithm can guarantee a probability of $1-\frac{1}{\sqrt{2\pi \K}}+\epsilon$ or better for opening each
box, for any $\epsilon \ge 0$, which implies that $1-\frac{1}{\sqrt{\K+3}}$ is not far from optimal. Below,
we repeat the dynamic program for computing $\OpenProb_{i}^{\Indw}$, $\Threshold_i$ and
$\BrokenCDF{i}{\wdot}$.

\begin{align}
      \OpenProb_{i}^{\Indw} = \Prx{\OpenRV_{i}=1|\BrokenRV_{i}=\Indw} &=
        \begin{cases}
        1                                                               & \Indw < \Threshold_{i} \\
        (\gamma-\BrokenCDF{i}{\Threshold_{i}-1})/(\BrokenCDF{i}{\Threshold_{i}}-\BrokenCDF{i}{\Threshold_{i}-1}) \quad           & \Indw = \Threshold_{i} \\
        0                                                               & \Indw > \Threshold_{i}
        \end{cases} \tag{$\OpenProb$} \label{eq:dp':s} \\
    \Threshold_{i} &=  \min \{\Indw | \BrokenCDF{i}{\Indw} \ge \gamma \}   \tag{$\Threshold$} \label{eq:dp':theta} \\
    \BrokenCDF{i+1}{\Indw} &=
        \begin{cases}
        \OpenProb_{i}^{\Indw}\XBox_{i}\BrokenCDF{i}{\Indw-1}+(1-\OpenProb_{i}^{\Indw}\XBox_{i})\BrokenCDF{i}{\Indw}     \quad            & i \ge 1,  \Indw \ge 0 \\
        1                                                                                               & i = 0,  \Indw\ge 0 \\
        0                                                                                               & \text{otherwise.}
        \end{cases} \tag{$\BrokenCDFSignW$} \label{eq:dp':cdf}
\end{align}

\paragraph{Part 1}
We show that, assuming there is enough number of wands, the thresholds computed by the dynamic program
guarantee that each box is opened with an ex ante probability of at least $\gamma$.

\begin{enumerate}[(a)]
\item \label{part:1a}%
First we prove that $\Prx{\BrokenRV_i \le \Indw} \ge \BrokenCDF{i}{\Indw}$ by induction on $i$. The base case
is trivial. Suppose the inequality holds for $i \ge 1$, we prove it for $i+1$ as follows.
\begin{align*}
    \Prx{\BrokenRV_{i+1} \le \Indw}  &\ge \Prx{\BrokenRV_{i} \le \Indw-1}+\Prx{\BrokenRV_{i}=\Indw} (1-\OpenProb_{i}^\Indw \XBox_{i}) \\
            &=  \Prx{\BrokenRV_{i} \le \Indw-1}\OpenProb_{i}^\Indw \XBox_{i} +\Prx{\BrokenRV_{i}\le \Indw} (1-\OpenProb_{i}^\Indw \XBox_{i}) \\
            &\ge \BrokenCDF{i}{\Indw-1} \OpenProb_{i}^\Indw \XBox_{i} + \BrokenCDF{i}{\Indw} (1-\OpenProb_{i}^\Indw \XBox_{i})  & \text{by induction hypothesis} \\
            &= \BrokenCDF{i+1}{\Indw} &  \tag{1.a} \label{eq:p1a} \text{by \eqref{eq:dp':cdf}}
\end{align*}
Observe that all of the above inequalities are met with equality if every $\XBox_i$ is the exact probability
of breaking wand for the corresponding box instead of just an upper bound.

\item
Next, we show that each box is opened with probability at least $\gamma$. We shall show that $\Prx{\OpenRV_i=1} \ge
\gamma$.
\begin{align*}
    \Prx{\OpenRV_i=1}   &= \sum_\Indw \Prx{\OpenRV_i=1 | \BrokenRV_{i}=\Indw} \Prx{\BrokenRV_{i}=\Indw} \\
                &= \sum_{\Indw=0}^{\Threshold_i} \OpenProb_i^\Indw \Prx{\BrokenRV_{i}=\Indw} \\
                &= \Prx{\BrokenRV_{i}< \Threshold_i} + \OpenProb_i^{\Threshold_i} \Prx{\BrokenRV_{i}=\Threshold_i} & \text{because $\OpenProb_i^\Indw=1$ for $\Indw < \Threshold_i$} \\
                &= (1-\OpenProb_i^{\Threshold_i}) \Prx{\BrokenRV_{i} < \Threshold_i} + \OpenProb_i^{\Threshold_i} \Prx{\BrokenRV_{i}\le\Threshold_i} \\
                &\ge (1-\OpenProb_i^{\Threshold_i}) \BrokenCDF{i}{\Threshold_i-1} + \OpenProb_i^{\Threshold_i} \BrokenCDF{i}{\Threshold_i} & \text{by \eqref{eq:p1a} }\\
                &= \BrokenCDF{i}{\Threshold_i-1} + \OpenProb_i^{\Threshold_i} (\BrokenCDF{i}{\Threshold_i}-\BrokenCDF{i}{\Threshold_i-1}) \\
                &= \gamma & \text{ by substituting $\OpenProb_i^{\Threshold_i}$ from \eqref{eq:dp':s}}
\end{align*}
Observe that all of the above inequalities are met with equality if each $\XBox_i$ is the exact probability
of breaking a wand for the corresponding box instead of being just an upper bound.
\end{enumerate}

\paragraph{Part 2}
We show that when $\gamma \le 1-\frac{1}{\sqrt{\K+3}}$, the $\gamma$-conservative magician never requires
more than $\K$ wands, i.e., we show that $\Threshold_i \le \K-1$, for all $i$. First, we present an
interpretation of the magician's dynamic program as a stochastic process on an infinite tape with one unit of
infinitely divisible sand.

\begin{definition}[Sand Displacement Process]
\label{def:sand}%
Consider one unit of infinitely divisible sand which is initially at position $0$ on an infinite tape;
positions on the tape are labeled from left to right. The sand is gradually moved to the right and
distributed over the tape in $\N$ rounds. Let $\BrokenCDF{i}{\Indw}$ denote the total amount of sand in
positions $\RangeMN{0}{\Indw}$ at the beginning of round $i \in \RangeN{\N}$. At each round $i$ the following
happens.
\begin{enumerate}[(I)]
\item
The leftmost $\gamma$-fraction of the sand is selected by identifying the smallest threshold $\Threshold_i$
such that $\BrokenCDF{i}{\Threshold_i} \ge \gamma$ and then selecting all the sand in positions
$\RangeMN{0}{\Threshold_i-1}$ and selecting a fraction of the sand at position $\Threshold_i$ itself such
that the total amount of selected is equal to $\gamma$. Formally, if $\BrokenSel{i}{\Indw}$ denotes the total
amount of sand selected from $\RangeMN{0}{\Indw}$, the selection of sand is such that
$\BrokenSel{i}{\Indw}=\min(\BrokenCDF{i}{\Indw}, \gamma)$, for every $\Indw$. In particular, this implies
that only a fraction of the sand at position $\Threshold_i$ itself might be selected, however all the sand to
the left of position $\Threshold_i$ is selected.

\item
The selected sand is moved one position to the right as follows. Simultaneously, for all $\Indw \in
\RangeMN{0}{\Threshold_i}$, an $\XBox_i$ fraction of the sand selected from position $\Indw$ is moved to
position $\Indw+1$. For every $\Indw$, the total amount of sand in positions $\RangeMN{0}{\Indw}$ at the end
of round $i$ is given by the following equation.
\begin{align}
    \BrokenCDF{i+1}{\Indw} &= \BrokenCDF{i}{\Indw}-\XBox_{i}\BrokenSel{i}{\Indw}+ \XBox_{i}  \BrokenSel{i}{\Indw-1} \tag{$\BrokenCDFSign_{eq}$} \label{eq:phi}
\end{align}
\end{enumerate}
\end{definition}

It is easy to see that $\Threshold_i$ and $\BrokenCDF{i}{\Indw}$ resulting from the above process are exactly
the same as those computed by the $\gamma$-conservative magician.

Consider a conceptual barrier which is at position $\Threshold_i+1$ at the beginning of round $i$ and is
moved to position $\Threshold_{i+1}+1$ for the next round, for each $i \in \RangeN{\N}$. It is easy to verify
(i.e., by induction) that the sand never crosses to the right side of the barrier (i.e.,
$\BrokenCDF{i+1}{\Threshold_i+1}=1$). The following theorem implies that the sand remains concentrated near
the barrier throughout the process.

\begin{theorem}[Sand]
\label{thm:sand}%
Throughout the sand displacement process (\autoref{def:sand}), at the beginning of each round $i \in
\RangeN{\N}$, the following inequality holds.
\begin{align}
    \BrokenCDF{i}{\Indw} &< \gamma \BrokenCDF{i}{\Indw+1} & & \forall i \in \RangeN{\N}, \forall \Indw \in \RangeN{\Threshold_{i}} \tag{$\BrokenCDFSign_{ineq}$} \label{eq:phi_ineq}
\end{align}
Furthermore, the average distance of the sand from the barrier, denoted by $\Distance_i$, is upper bounded by
the following inequalities.
\begin{align}
    \Distance_i &\le \frac{1-\gamma^{\Threshold_i+1}}{1-\gamma} < \frac{1}{1-\gamma}  & & \forall i \in \RangeN{\N} \tag{$\Distance$} \label{eq:barrier_dist}
\end{align}
The first inequality is strict except for $i=1$.
\end{theorem}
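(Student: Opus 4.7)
The plan is to prove both inequalities by induction on $i$, with the first inequality serving as the main invariant and the distance bound following as a direct geometric-sum corollary. The key fact to keep in mind (easily verified inductively from the recurrence \eqref{eq:phi}) is that sand never crosses the conceptual barrier, so $\BrokenCDF{i+1}{\Threshold_i+1}=1$, and the threshold can rise by at most one per round, i.e., $\Threshold_{i+1} \in \{\Threshold_i, \Threshold_i+1\}$.

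First I would establish the geometric-ratio invariant $\BrokenCDF{i}{\Indw} < \gamma\,\BrokenCDF{i}{\Indw+1}$ inductively. The base case $i=1$ is essentially vacuous since $\Threshold_1=0$ (all sand is at position $0$). For the inductive step, I would fix $\Indw$ in the valid range at round $i+1$ and split into three regimes according to its position relative to $\Threshold_i$.
\begin{itemize}
\item If $\Indw > \Threshold_i$, both $\BrokenSel{i}{\Indw}$ and $\BrokenSel{i}{\Indw-1}$ saturate at $\gamma$, so \eqref{eq:phi} gives $\BrokenCDF{i+1}{\Indw} = \BrokenCDF{i}{\Indw}$, and likewise at $\Indw+1$; the invariant is inherited directly from round $i$ (using the boundary value $\BrokenCDF{i+1}{\Threshold_i+1}=1$ when $\Indw=\Threshold_i+1$).
\item If $\Indw < \Threshold_i$, neither selection is truncated, so \eqref{eq:phi} simplifies to the convex combination $\BrokenCDF{i+1}{\Indw} = (1-\XBox_i)\BrokenCDF{i}{\Indw} + \XBox_i\BrokenCDF{i}{\Indw-1}$; applying the same convex combination at $\Indw+1$ and invoking the round-$i$ invariant termwise preserves the ratio.
\item If $\Indw = \Threshold_i$, use $\BrokenCDF{i+1}{\Threshold_i} = \BrokenCDF{i}{\Threshold_i} - \XBox_i(\gamma-\BrokenCDF{i}{\Threshold_i-1})$ together with $\BrokenCDF{i+1}{\Threshold_i+1}=1$, then bound the right-hand side using $\BrokenCDF{i}{\Threshold_i-1} < \gamma$.
\end{itemize}

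For the distance bound, iterating the invariant gives $\BrokenCDF{i}{\Threshold_i-r} < \gamma^{r}\BrokenCDF{i}{\Threshold_i} \le \gamma^{r}$ for each $r \ge 1$. Writing the expected distance to the barrier as an Abel sum
\[
\Distance_i \;=\; \sum_{\Indw=0}^{\Threshold_i}(\Threshold_i+1-\Indw)\bigl(\BrokenCDF{i}{\Indw}-\BrokenCDF{i}{\Indw-1}\bigr) \;=\; \sum_{k=0}^{\Threshold_i} \BrokenCDF{i}{k},
\]
and plugging in the geometric upper bounds yields $\Distance_i \le \BrokenCDF{i}{\Threshold_i}\cdot\frac{1-\gamma^{\Threshold_i+1}}{1-\gamma} \le \frac{1-\gamma^{\Threshold_i+1}}{1-\gamma} < \frac{1}{1-\gamma}$, with strict inequality whenever $\Threshold_i \ge 1$, matching the "strict except for $i=1$" clause.

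The main obstacle will be the boundary case $\Indw=\Threshold_i$ in the induction, because here the selection is truncated and the update is not a clean convex combination; in this case one additionally has to consider separately whether $\Threshold_{i+1}=\Threshold_i$ (so the barrier position is unchanged and the invariant is needed at $\Indw=\Threshold_i$ paired with the unchanged value to the right) or $\Threshold_{i+1}=\Threshold_i+1$ (so the invariant at $\Indw=\Threshold_i$ relative to $\BrokenCDF{i+1}{\Threshold_i+1}=1$ amounts to showing $\BrokenCDF{i+1}{\Threshold_i} < \gamma$, which is exactly the defining property of the new threshold). The convex-combination case is conceptually straightforward but notationally requires applying the inductive hypothesis at two adjacent indices simultaneously.
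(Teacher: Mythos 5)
Your strategy is the same as the paper's (induction on $i$ for the geometric-ratio invariant; Abel sum $\Distance_i=\sum_{\Indw=0}^{\Threshold_i}\BrokenCDF{i}{\Indw}$ plus geometric decay for the distance bound), but the inductive case analysis has a real gap at $\Indw=\Threshold_i-1$. There the left-hand side $\BrokenCDF{i+1}{\Threshold_i-1}$ is indeed a clean convex combination, but the right-hand side $\BrokenCDF{i+1}{\Threshold_i}$ is \emph{not}, because the selection at position $\Threshold_i$ is truncated: the update is $\BrokenCDF{i+1}{\Threshold_i}=\BrokenCDF{i}{\Threshold_i}-\XBox_i\gamma+\XBox_i\BrokenCDF{i}{\Threshold_i-1}$, which exceeds the would-be convex combination by the nonnegative amount $\XBox_i(\BrokenCDF{i}{\Threshold_i}-\gamma)$. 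This only helps, so the induction survives, but you must invoke the inequality $\BrokenSel{i}{\Threshold_i}=\gamma\le\BrokenCDF{i}{\Threshold_i}$ explicitly (this is precisely the $\le$ step in the paper's chain). As written, ``applying the same convex combination at $\Indw+1$'' is simply false at this index, and you attribute the truncation subtlety only to $\Indw=\Threshold_i$, where it actually plays no role.

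Your remaining cases also show some off-by-one confusion about which indices the invariant needs to cover. The distance bound only uses $\BrokenCDF{i}{\Indw}<\gamma\BrokenCDF{i}{\Indw+1}$ for $\Indw\in\{0,\dots,\Threshold_i-1\}$, and since $\Threshold_{i+1}\le\Threshold_i+1$, every index you must verify at round $i+1$ is $\le\Threshold_i$. Hence the case $\Indw>\Threshold_i$ is vacuous, and its stated justification is wrong anyway: for $\Indw>\Threshold_i$ both $\BrokenCDF{i}{\Indw}$ and $\BrokenCDF{i}{\Indw+1}$ equal $1$, so the round-$i$ invariant is false there and cannot be ``inherited.'' Similarly, your sub-case ``$\Indw=\Threshold_i$ with $\Threshold_{i+1}=\Threshold_i$'' asks you to prove $\BrokenCDF{i+1}{\Threshold_i}<\gamma$, which directly contradicts $\Threshold_{i+1}=\Threshold_i$ (which forces $\BrokenCDF{i+1}{\Threshold_i}\ge\gamma$); fortunately that index lies outside the needed range, so nothing has to be proved. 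Your other sub-case $\Threshold_{i+1}=\Threshold_i+1$ is correctly reduced to the definition of the new threshold, but your first proposed tactic (``bound using $\BrokenCDF{i}{\Threshold_i-1}<\gamma$'') would not by itself give $\BrokenCDF{i+1}{\Threshold_i}<\gamma$ — it only shows the quantity decreased — and is superfluous once you appeal to the threshold definition.
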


\begin{proof}
We start by proving inequality \eqref{eq:phi_ineq}, by induction on $i$. The base case of $i=1$ is trivial
because all the sand is at position $0$ and $\Threshold_1 = 0$. Suppose the inequality holds at the beginning
of round $i$ and for all $\Indw \in \RangeN{\Threshold_{i}}$, we show that it holds at the beginning of round
$i+1$ and for all $\Indw \in \RangeN{\Threshold_{i+1}}$. Consider any $\Indw \in \RangeN{\Threshold_{i+1}}$.
Note that $\Threshold_{i} \le \Threshold_{i+1} \le \Threshold_{i}+1$, so there are two possible cases:

\begin{itemize}
\item \About{Case 1}
If $\Indw \in \RangeN{\Threshold_{i}}$, then
\begin{align*}
    \BrokenCDF{i+1}{\Indw-1} &= \BrokenCDF{i}{\Indw-1}-\XBox_i  \BrokenSel{i}{\Indw-1}+\XBox_i  \BrokenSel{i}{\Indw-2} && \text{by \eqref{eq:phi}.} \\
                &= (1-\XBox_i)\BrokenCDF{i}{\Indw-1} + \XBox_i \BrokenCDF{i}{\Indw-2} && \text{by $\BrokenSel{i}{\Indw'}=\BrokenCDF{i}{\Indw'}$, for $\scriptstyle \Indw' < \Threshold_{i}$.} \\
                &< (1-\XBox_i)\gamma\BrokenCDF{i}{\Indw} + \XBox_i \gamma \BrokenCDF{i}{\Indw-1} && \text{by induction hypothesis.} \\
                &= \gamma\sdot\left(\BrokenCDF{i}{\Indw}-\XBox_i \BrokenCDF{i}{\Indw}+ \XBox_i \BrokenSel{i}{\Indw-1}\right) && \text{by $\BrokenSel{i}{\Indw'}=\BrokenCDF{i}{\Indw'}$, for $\scriptstyle \Indw' < \Threshold_{i}$.} \\
                &\le \gamma\sdot\left(\BrokenCDF{i}{\Indw}-\XBox_i \BrokenSel{i}{\Indw}+ \XBox_i \BrokenSel{i}{\Indw-1}\right) && \text{by $\BrokenSel{i}{\Indw'}\le \BrokenCDF{i}{\Indw'}$, for all $\Indw'$.} \\
                &= \gamma\BrokenCDF{i+1}{\Indw} && \text{by \eqref{eq:phi}.}
\end{align*}

\item \About{Case 2}
If $\Indw = \Threshold_{i+1}=\Threshold_{i}+1$, then by definition of $\Threshold_{i+1}$, it must be
$\BrokenCDF{i+1}{\Threshold_{i+1}-1} < \gamma$. Furthermore, at the end of round $i$, all the sand must be in
the range $\RangeMN{0}{\Threshold_{i}+1}$, so $\BrokenCDF{i+1}{\Threshold_{i+1}} =
\BrokenCDF{i+1}{\Threshold_{i}+1} = 1$. Consequently, we can conclude that
$\BrokenCDF{i+1}{\Threshold_{i+1}-1} \le \gamma  \BrokenCDF{i+1}{\Threshold_{i+1}}$ which proves the claim.
\end{itemize}

Next, we prove that the average distance of the sand from the barrier, $\Distance_i$, is no more than
$\frac{1-\gamma^{\Threshold_i +1}}{1-\gamma}$ at the beginning of round $i$.
\begin{align*}
    \Distance_i    &= \sum_{\Indw=0}^{\Threshold_i} \BrokenCDF{i}{\Indw} & \text{sand at position $\Indw$ is counted exactly $\Threshold_i-\Indw+1$ times} \\
            &\le \sum_{\Indw=0}^{\Threshold_i} \gamma^{\Threshold_i-\Indw} \BrokenCDF{i}{\Threshold_i} & \text{by \eqref{eq:phi}} \\
            &\le \sum_{\Indw=0}^{\Threshold_i} \gamma^{\Threshold_i-\Indw} & \text{because $\BrokenCDF{i}{\Threshold_i} \le 1$} \\
            &= \frac{1-\gamma^{\Threshold_i+1}}{1-\gamma}
\end{align*}
Note that the second inequality is strict unless $\Threshold_i=0$; furthermore conditioned on
$\Threshold_i=0$, the third inequality is strict except for round $i=1$; therefore, the upper-bound is strict
except for the first round.
\end{proof}

\begin{theorem}[Barrier]
\label{thm:barrier}%
Consider the process of \autoref{def:sand}. If $\gamma \le 1-\frac{1}{\sqrt{\K+3}}$ and $\sum_{i=1}^\N
\XBox_i \le \K$ for some $\K \in \PosIntsZ$, then the barrier never gets past position $\K$, i.e.,
$\Threshold_i \le \K-1$ for all $i \in \RangeN{\N}$.
\end{theorem}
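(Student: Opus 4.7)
The plan is to argue by contradiction: suppose $\Threshold_{i^*} \geq \K$ for some round $i^*$. Since $\Threshold_1 = 0$ and one readily checks that $\Threshold_i$ is non-decreasing with $\Threshold_i \le \Threshold_{i-1} + 1$ (so in particular the sand at the start of round $i$ lies entirely in positions $0, 1, \ldots, \Threshold_i + 1$), we may take $i^*$ to be the first such round, which gives $\Threshold_{i^*} = \K$ exactly and $i^* > 1$.

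The argument rests on two centroid-type identities for the sand distribution. Let $\bar p_i := \sum_\ell \ell \cdot (\BrokenCDF{i}{\ell} - \BrokenCDF{i}{\ell-1})$ denote the center of mass at the start of round $i$. A short induction using \eqref{eq:phi} shows that each round displaces the selected mass $\gamma$ by a fraction $\XBox_i$ one position rightward, raising the centroid by exactly $\gamma \XBox_i$, so $\bar p_i = \gamma \sum_{i' < i} \XBox_{i'} \le \gamma \K$. Second, because all sand lies in positions $\le \Threshold_i + 1$ and the position $\Threshold_i + 1$ itself contributes zero to the ``distance from the barrier'' sum, the identity $\sum_\ell (\Threshold_i + 1 - \ell)(\BrokenCDF{i}{\ell} - \BrokenCDF{i}{\ell-1}) = \Threshold_i + 1 - \bar p_i$ collapses to the clean form $\bar p_i = \Threshold_i + 1 - \Distance_i$.

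Combining the two identities at $i^*$ yields $\Distance_{i^*} \ge \K + 1 - \gamma \K = \K(1-\gamma) + 1$. On the other hand, \autoref{thm:sand} supplies via \eqref{eq:barrier_dist} the strict upper bound $\Distance_{i^*} < (1-\gamma^{\K+1})/(1-\gamma)$ (strict because $i^* > 1$). Chaining and rearranging yields the necessary condition $\gamma(1-\gamma^\K) > \K(1-\gamma)^2$ for $\Threshold_{i^*} = \K$ to be possible. It remains to contradict this whenever $\gamma \le 1 - 1/\sqrt{\K+3}$: setting $t = 1-\gamma \ge 1/\sqrt{\K+3}$, the required inequality $\K t^2 \ge (1-t)(1-(1-t)^\K)$ is equivalent to $(1-t)^{\K+1} \ge 1 - t - \K t^2$. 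For $\K \ge 6$, $\sqrt{\K+3} \ge 3$ makes the right-hand side non-positive at $t = 1/\sqrt{\K+3}$, and the inequality is trivial.

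The main obstacle is this final algebraic step for small $\K$. While $\K = 1$ (where equality is attained at $t = 1/2$) and $\K \ge 6$ fall out directly, the intermediate cases $\K \in \{2, 3, 4, 5\}$ require either direct substitution or a uniform convexity argument: the function $F(t) := (1-t)^{\K+1} + \K t^2 + t - 1$ is convex on $[0,1]$ with $F(0) = 0$, $F'(0) = -\K < 0$, and $F(1) = \K > 0$, so $F$ has a unique positive root $t^{**}$, and what is needed is the bound $t^{**} \le 1/\sqrt{\K+3}$.
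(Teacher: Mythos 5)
Your argument is essentially the paper's own proof, just recast as a contradiction: your centroid identity $\bar p_i = \Threshold_i + 1 - \Distance_i$ together with $\bar p_i = \gamma\sum_{i'<i}\XBox_{i'} \le \gamma\K$ is exactly the paper's decomposition $\Threshold_i + 1 = \Distance'_i + \Distance_i$, and your reduced inequality $(1-t)^{\K+1} \ge 1-t-\K t^2$ is the paper's inequality \eqref{eq:Gamma} under the substitution $t = 1-\gamma$ (with the ``RHS non-positive'' shortcut being the paper's stronger quadratic bound $\K+1 \ge \gamma\K + \tfrac{1}{1-\gamma}$). The ``obstacle'' you flag is not actually one: for $\K \in \{2,3,4,5\}$ direct evaluation of $F(1/\sqrt{\K+3})$ gives roughly $0.016, 0.031, 0.043, 0.051 > 0$, which is precisely the finite check the paper disposes of with ``verified by direct calculation.''
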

\begin{proof}
We shall show that the barrier never gets past position $\K$ throughout the process, which proves the theorem
(recall that the barrier is defined to be at position $\Threshold_i+1$). Let $\Distance'_i$ denote the
average distance of the sand from the origin, and let $\Distance_i$ denote the average distance of the sand
from the barrier, at the beginning of round $i$. Observe that $\Distance'_i+\Distance_i = \Threshold_i+1$;
furthermore, $\Distance'_i = \Distance'_{i-1}+\gamma \XBox_{i-1}=\gamma \sum_{r=1}^{i-1} \XBox_r$, i.e., the
average distance of the sand from the origin is increased exactly by $\gamma \XBox_{i-1}$ during round $i-1$
(because the amount of selected sand is exactly $\gamma$ and $\XBox_{i-1}$ fraction of the selected sand is
moved one position to the right). By applying \autoref{thm:sand}, we get the following inequality.
\begin{align*}
    \Threshold_i+1  &= \Distance'_i + \Distance_i < \gamma \sum_{r=1}^{i-1} \XBox_r + \frac{1-\gamma^{\Threshold_i+1}}{1-\gamma}
        \le \gamma \K + \frac{1-\gamma^{\Threshold_i+1}}{1-\gamma}
\end{align*}

In order to show that the barrier never gets past position $\K$, it is enough to show that the above
inequality cannot hold for $\Threshold_i \ge \K$; in fact it is just enough to show that it cannot hold for
$\Threshold_i=\K$~\footnote{Because in order for the barrier to get past position $\K$, it must first fall on
position $\K+1$.}; alternatively, it is enough to show that the complement of the above inequality holds for
$\Threshold_i = \K$.

\begin{alignat*}{2}
    \K+1   &\ge \gamma \K + \frac{1-\gamma^{\K+1}}{1-\gamma} \tag{$\Gamma$} \label{eq:Gamma}
\end{alignat*}

Consider the the stronger inequality $\K+1 \ge \gamma \K  + \frac{1}{1-\gamma}$ which is quadratic in
$\gamma$ and yields a bound of $\gamma \le 1-\frac{1}{1/2+\sqrt{\K+1/4}}$; this bound is in fact imposes a
looser constraint than $\gamma \le 1-\frac{1}{\sqrt{\K+3}}$ when $\K \ge 7$. Furthermore it can be verified
(by direct calculation) that inequality \eqref{eq:Gamma} holds, for $\K < 7$ and $\gamma \le
1-\frac{1}{\sqrt{\K+3}}$. That completes the proof.
\end{proof}

\autoref{thm:barrier} implies that a $\gamma$-conservative magician requires no more than $\K$ wands,
assuming that $\gamma \le 1-\frac{1}{\sqrt{\K+3}}$. That completes the proof of \autoref{thm:magician}.

\section{Multi Unit Demands}
\label{sec:multi_unit}%

In this section, we show that the more general model, in which each buyer may need more than one unit but no
more than $\frac{1}{\MaxDemandFrac}$ of all units of each item, can be reduced to the simpler model in which
there are at least $\NumUnits$ units of every item and no buyer demands more than $1$ unit of each item.

\begin{definition}[Multi Unit Demand Market Transformation]
\label{def:trans}%
Let $\NumUnits_j$ denote the number of units of item $j$. Define
$\BinCount_j=\Floor{\frac{\NumUnits_j}{\MaxDemandFrac}} $ and divide the units of item $j$ almost equally
into $\BinCount_j$ bins (i.e., each bin will contain either $\BinCount_j$ or $\BinCount_j+1$ units). Create a
new item type for each bin (i.e., units from the same bin has the same type, but units from different bins
are treated as different types of item).
\end{definition}

\begin{theorem}
\label{thm:trans}%
Let $\Mechs$ be the space of feasible mechanisms, in the original (multi unit demand) market, which do not
allocate more than $\frac{1}{\MaxDemandFrac}$ of all units of each item to any single buyer. Similarly, let
$\Mechs^{(1)}$ be the space of feasible mechanisms, in the transformed market, which do not allocate more
than one unit of each item to any single buyer. Any mechanism in $\Mechs$ can be interpreted as a mechanism
in $\Mechs^{(1)}$ and vice-versa with the same allocations/payments. Therefore, in order to find the optimal
mechanism in the original market, it is enough to find the optimal mechanism in the transformed market.
\end{theorem}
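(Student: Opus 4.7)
The plan is to give an explicit realization-by-realization correspondence between mechanisms in $\Mechs$ and $\Mechs^{(1)}$ that preserves each buyer's total allocation of every original item and all payments; every other property (feasibility, expected objective, incentive compatibility) then follows because valuations depend only on the total units of each original item a buyer receives, and because all relevant constraints are imposed pointwise or in expectation in the same way in both markets. The central counting observation I would establish first is that a buyer in $\Mechs$ may receive at most $\NumUnits_j/\MaxDemandFrac$ units of item $j$, hence at most $\BinCount_j=\Floor{\NumUnits_j/\MaxDemandFrac}$ units integrally, while a buyer in $\Mechs^{(1)}$ may receive at most one unit from each of the $\BinCount_j$ bins, i.e.\ at most $\BinCount_j$ units of item $j$ in total---so the per-buyer demand bounds coincide exactly.

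The direction from $\Mechs^{(1)}$ to $\Mechs$ is essentially free: I would simply forget the bin labels and treat all units across the bins of item $j$ as units of item $j$. Supply of item $j$ is automatic because the bin capacities sum to $\NumUnits_j$, the $\frac{1}{\MaxDemandFrac}$-fraction per-buyer constraint holds by the counting observation above, and payments and type reports are copied verbatim, so all single-buyer feasibility conditions and IC constraints transfer.

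The direction from $\Mechs$ to $\Mechs^{(1)}$ contains the only real content. In each realization and for each item $j$, I would assign every buyer's allocated units of item $j$ to distinct bins without overflowing any bin by processing the buyers in a fixed order and placing their units into bins $1,2,\ldots,\BinCount_j$ in a round-robin manner. Since no buyer holds more than $\BinCount_j$ units, her units land in distinct bins; since the total across all buyers is at most $\NumUnits_j$, each bin accumulates at most $\lceil \NumUnits_j/\BinCount_j \rceil$ units, matching its prescribed capacity. Alternatively, one can dispatch this step non-constructively via Hall's theorem or the integrality of the bipartite $b$-matching polytope. The main obstacle is precisely this packing step: once it is in hand, the remaining verifications are bookkeeping---objective equality follows from linear separability (assumption (A2)), and incentive compatibility transfers because each buyer's utility depends only on her received quantities per original item and her payment, both of which the correspondence preserves on the nose.
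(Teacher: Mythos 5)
Your overall plan matches the paper's: the $\Mechs^{(1)}\to\Mechs$ direction is a trivial relabeling (forget the bin labels), and the $\Mechs\to\Mechs^{(1)}$ direction is a per-realization round-robin packing of each buyer's $\XPAlloc^{\Mech}_{ij}$ units of item $j$ into distinct bins, which is exactly what the paper does with its cyclic list $\List_j$. Your ``counting observation'' that the integral per-buyer demand bound is $\BinCount_j$ on both sides is the same observation the paper relies on.

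However, there is a genuine gap in your capacity argument. When $\BinCount_j \nmid \NumUnits_j$ the bins are not all the same size, and a round-robin over bins $1,\ldots,\BinCount_j$ in an \emph{arbitrary} fixed order can overflow a small bin. Concretely, take $\NumUnits_j=5$ and $\MaxDemandFrac=2$, so $\BinCount_j=2$ with bin sizes $3$ and $2$, and consider three buyers receiving $2,2,1$ units (each $\le \BinCount_j$, total $=5\le\NumUnits_j$). Round-robin puts $3$ units into whichever bin is labeled ``$1$''; if that is the size-$2$ bin, it overflows. Your sentence ``each bin accumulates at most $\lceil\NumUnits_j/\BinCount_j\rceil$ units, matching its prescribed capacity'' only bounds the load by the \emph{larger} bin size and does not show each individual bin respects its own (possibly smaller) capacity. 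The paper avoids this by initializing $\List_j$ in \emph{decreasing} order of bin size: then after any $t\le\NumUnits_j$ total pops, the bins holding the extra unit form a prefix of the largest bins, so no bin is exceeded. Your fallback to Hall's theorem or $b$-matching integrality is likewise not automatic as stated, since the natural uniform fractional witness (spreading $\XPAlloc^{\Mech}_{ij}$ evenly across bins) already places $\NumUnits_j/\BinCount_j>\lfloor\NumUnits_j/\BinCount_j\rfloor$ expected load on the small bins, so feasibility would still have to be argued (e.g.\ via a Gale--Ryser-type condition) rather than cited. Adding the non-increasing ordering of bins is the minimal repair and brings your proof in line with the paper's.
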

\begin{proof}
First, we show that any mechanism in $\Mech \in \Mechs^{(1)}$ can be interpreted as a mechanism in $\Mechs$.
That is trivially true because $\Mech$ allocates to each buyer at most one unit from each bin, which is at
most $\BinCount_j$ units of each item $j$ of the original market, which is no more than
$\frac{1}{\MaxDemandFrac}$ of all units of item $j$.

Next, we show that any mechanism $\Mech \in \Mechs$ can be interpreted as a mechanism in $\Mechs^{(1)}$. For
every $j$, we create a list $\List_j$ of all the bins of item $j$. $\List_j$ is initially sorted in
decreasing order of the size of the bins.  Let $\XPAlloc^\Mech_{ij}$ be the number of units of item $j$
allocated to buyer $i$ by $\Mech$. We specify the allocations in the transformed market as follows. For each
buyer $i$ we repeat the following, $\XPAlloc^\Mech_{ij}$ times: Allocate one unit from the bin that is first
in the list $\List_j$ and then move the bin back to the end of the list. It is easy to see that no two units
from the same bin are allocated to the same buyer, which completes the proof.
\end{proof}

Note that by \autoref{thm:trans}, any mechanism in the original market is equivalent to a mechanism in the
transformed market with the exact same allocations/payments from the perspective of buyers. Therefore, WLOG,
we can work with the transformed market and only consider mechanisms in this market. However, to use our
generic multi buyer mechanisms in the transformed market, the underlying single buyer mechanisms should be
capable of handling correlated valuations, because units of the same item, even when labeled with different
types, are perfect substitutes from the view point of a buyer. Among the single buyer mechanisms presented in
this paper, only the mechanism explained in \SRef{sec:corr} can handle correlated valuations.

\section{Conclusion}

In this paper, for Bayesian combinatorial auctions, we presented an approximate reduction from multi buyer
problem to single buyer problems, which leads to the following conclusions.

\begin{itemize}
\item \About{Market size}
As the ratio of the maximum demand to supply (i.e., $\frac{1}{\MaxDemandFrac}$) decreases, less coordination
is required on decisions made for different buyers; i.e., as $\frac{1}{\MaxDemandFrac} \to 0$, the optimal
mechanism treats each buyer almost independently of other buyers. Observe that all of the approximation
factors in this paper only depend on $\MaxDemandFrac$ (i.e., $\gamma_\MaxDemandFrac$) and not on
$\NumAgents$. It suggests that, for characterizing asymptotic properties of such markets, the right parameter
to consider is perhaps the ratio of the maximum demand to supply; in particular, notice that the number of
buyers is irrelevant.

\item \About{Computational hardness}
For mechanism design problems in a variety of settings, the difficulty of making coordinated optimal
decisions for multiple buyers can be avoided by losing a small constant factor in the objective (i.e., losing
only a $\frac{1}{\sqrt{\MaxDemandFrac+3}}$ fraction of the objective), therefore the main difficulty of
constructing constant factor approximation mechanisms in multi dimensional settings stems from the difficulty
of designing single buyer mechanisms, which ultimately stems from the IC constraints in the single buyer
problem.
\end{itemize}

\section{Acknowledgment}
I thank Jason Hartline for many helpful suggestions and comments. I also thank Azarakhsh Malekian for a fruitful discussion that
lead to the proof of the sand theorem (\autoref{thm:sand}).

\bibliography{bibs}

\begin{thebibliography}{}

\bibitem[Babaioff et~al., 2006]{BLP06}
Babaioff, M., Lavi, R., and Pavlov, E. (2006).
\newblock Single-value combinatorial auctions and implementation in undominated
  strategies.
\newblock In {\em SODA}, pages 1054--1063.

\bibitem[Bhattacharya et~al., 2010]{BGGM10}
Bhattacharya, S., Goel, G., Gollapudi, S., and Munagala, K. (2010).
\newblock Budget constrained auctions with heterogeneous items.
\newblock In {\em 44th ACM Symposium on Theory of Computing}, pages 379--388.

\bibitem[Blumrosen and Holenstein, 2008]{BH08}
Blumrosen, L. and Holenstein, T. (2008).
\newblock Posted prices vs. negotiations: an asymptotic analysis.
\newblock In {\em ACM Conference on Electronic Commerce}, page~49.

\bibitem[Bulow and Roberts, 1989]{BR89}
Bulow, J. and Roberts, J. (1989).
\newblock The simple economics of optimal auctions.
\newblock {\em Journal of Political Economy}, 97(5):1060--90.

\bibitem[Chakraborty et~al., 2010]{CEGMM10}
Chakraborty, T., Even-Dar, E., Guha, S., Mansour, Y., and Muthukrishnan, S.
  (2010).
\newblock Approximation schemes for sequential posted pricing in multi-unit
  auctions.
\newblock In {\em WINE}.

\bibitem[Chawla et~al., 2010]{CHMS10}
Chawla, S., Hartline, J.~D., Malec, D.~L., and Sivan, B. (2010).
\newblock Multi-parameter mechanism design and sequential posted pricing.
\newblock In {\em 42nd ACM Symposium on Theory of Computing}, pages 311--320.

\bibitem[Chawla et~al., 2011]{CMM11}
Chawla, S., Malec, D., and Malekian, A. (2011).
\newblock Bayesian mechanism design for budget-constrained agents.
\newblock In {\em ACM Conference on Electronic Commerce}.

\bibitem[Chekuri et~al., 2010]{CVZ10}
Chekuri, C., Vondr{\'a}k, J., and Zenklusen, R. (2010).
\newblock Dependent randomized rounding via exchange properties of
  combinatorial structures.
\newblock In {\em FOCS}, pages 575--584.

\bibitem[Dhangwatnotai et~al., 2010]{DRY10}
Dhangwatnotai, P., Roughgarden, T., and Yan, Q. (2010).
\newblock Revenue maximization with a single sample.
\newblock In {\em ACM Conference on Electronic Commerce}, pages 129--138.

\bibitem[Hajiaghayi et~al., 2007]{HKS07}
Hajiaghayi, M.~T., Kleinberg, R.~D., and Sandholm, T. (2007).
\newblock Automated online mechanism design and prophet inequalities.
\newblock In {\em AAAI}, pages 58--65.

\bibitem[Hartline et~al., 2011]{HKM11}
Hartline, J.~D., Kleinberg, R., and Malekian, A. (2011).
\newblock Bayesian incentive compatibility and matchings.
\newblock In {\em SODA}.

\bibitem[Hartline and Lucier, 2010]{HL10}
Hartline, J.~D. and Lucier, B. (2010).
\newblock Bayesian algorithmic mechanism design.
\newblock In {\em STOC}, pages 301--310.

\bibitem[Hartline and Roughgarden, 2009]{HR09}
Hartline, J.~D. and Roughgarden, T. (2009).
\newblock Simple versus optimal mechanisms.
\newblock In {\em ACM Conference on Electronic Commerce}, pages 225--234.

\bibitem[Henzinger and Vidali, 2011]{HV11}
Henzinger, M. and Vidali, A. (2011).
\newblock Multi-parameter mechanism design under budget and matroid
  constraints.
\newblock In {\em ESA}, pages 192--202.

\bibitem[Hill and Kertz, 1992]{HK92}
Hill, T.~P. and Kertz, R.~P. (1992).
\newblock A survey of prophet inequalities in optimal stopping theory.
\newblock In {\em Contemporary Mathematics}.

\bibitem[Myerson, 1981]{M81}
Myerson, R.~B. (1981).
\newblock Optimal auction design.
\newblock {\em Mathematics of Operations Research}, 6(1):pp. 58--73.

\bibitem[Schrijver, 2003]{S03}
Schrijver, A. (2003).
\newblock {\em {Combinatorial Optimization : Polyhedra and Efficiency
  (Algorithms and Combinatorics)}}.
\newblock Springer.

\bibitem[Yan, 2011]{Y11}
Yan, Q. (2011).
\newblock Mechanism design via correlation gap.
\newblock In {\em SODA}.

\end{thebibliography}

\appendix

\section{Missing Proofs}
\label{sec:proofs}%

\phantomsection
\begin{proof}[Proof of \autoref{thm:magician_hardness}]
\label{proof:thm:magician_hardness}%
Suppose we create $\N$ boxes and in each box, independently, we put $\$1$ with probability $\frac{\K}{\N}$.
If the magician opens a box containing a $\$1$, then he gets the $\$1$ but we break his wand (i.e.,
$\XBox_i=\frac{\K}{\N}$). Observe that the expected total prize is $\K$ dollars, but because we put a dollar
in each box independently, there are some instances in which there are more than $\K$ non-empty boxes but the
magician cannot win more than $\K$ dollars at any instance. Let $\XBoxRV_i$ be the indicator random variable
which is $1$ if{f} there is a dollar in box $i$. The expected total prize is $\Ex{\sum_i \XBoxRV_i}=\K$, but
the expected prize that the magician can win is at most $\Ex{\min(\sum_i \XBoxRV_i, \K)}$. It can be verified
that $\Ex{\min(\sum_i \XBoxRV_i, \K)} \approx (1-\frac{\K^\K}{e^\K \K!}) \K$ asymptotically as $\N \to
\infty$. In fact, for any positive $\epsilon$, there is a large enough $\N$ such that $\Ex{\min(\sum_i
\XBoxRV_i, \K)} < (1-\frac{\K^\K}{e^\K \K!}+\epsilon) \K$. On the other hand, if a magician can guarantee
that every box is opened with probability at least $\gamma=1-\frac{\K^\K}{e^\K \K!}+\epsilon$, then he will
be able to obtain a prize of at least $\sum_i \gamma \Ex{\XBoxRV_i} = (1-\frac{\K^\K}{e^\K \K!}+\epsilon) \K$
in expectation which is a contradiction; therefore it is not possible to make such a guarantee.
\end{proof}

\phantomsection
\begin{proof}[Proof of \autoref{thm:pre}]
\label{proof:thm:pre}%
First, we show that each $\ItemSubset_i$ includes each item $j$ with probability at least $\gamma$. Observe
that for each item $j$, a sequence of $\NumAgents$ boxes are presented to the $j^{th}$ magician with
probabilities $\XAAllocC_{1j}, \ldots, \XAAllocC_{\NumAgents j}$ written on them. Since $\sum_i
\XAAllocC_{ij} \le \NumUnits_j$ and $\gamma \in \RangeAB{0}{\gamma_\NumUnits}$, we can argue that each box is
opened with probability at least $\gamma$ (see \autoref{thm:magician} and \autoref{def:magician_param});
therefore $\ItemSubset_i$ includes each item $j$ with probability at least $\gamma$.

Next, we show that the expected objective value of $\gamma$-pre-rounding is at least $\gamma \alpha$-fraction
of the expected objective value of the optimal mechanism in $\Mechs$. Note that by
\autoref{thm:decomposition}, the expected objective value of the optimal mechanism in $\Mechs$ is upper
bounded by the optimal value of \eqref{P:OPT'} which is $\sum_i \Rev_i(\XAAllocC_i)$; therefore, it is enough
to show that $\Ex[\ItemSubset_i]{\Rev_i(\Induced{\XAAllocC_i}{\ItemSubset_i})} \ge \gamma \alpha
\Rev_i(\XAAllocC_i)$, i.e., the expected objective value that
$\Cons{\Mech_i}{\Induced{\XAAllocC_i}{\ItemSubset_i}}$ obtains from buyer $i$ is at least $\gamma \alpha
\Rev_i(\XAAllocC_i)$. Let $\xi_i$ be a budget balanced cross monotonic cost share function for
$\Rev_i(\wdot)$; then

\begin{align*}
   \Ex[\ItemSubset_i]{\Rev_i(\Induced{\XAAllocC_i}{\ItemSubset_i})}
        &= \Ex[\ItemSubset_i]{\sum_{j \in \ItemSubset_i} \xi_i(j, \Induced{\XAAllocC_i}{\ItemSubset_i})} & &\text{because $\xi_i$ is budget balanced} \\
        &\ge \Ex[\ItemSubset_i]{\sum_{j \in \ItemSubset_i} \xi_i(j, \XAAllocC_i[\{1,\ldots, \NumItems\}])} & &\text{because $\xi_i$ is cross monotonic} \\
        &= \sum_{j \in \RangeN{\NumItems}} \Prx{j \in \ItemSubset_i} \xi_i(j, \XAAllocC_i) \\
        &\ge \sum_{j \in \RangeN{\NumItems}} \gamma \xi_i(j, \XAAllocC_i) \\
        &= \gamma \Rev_i(\XAAllocC_i) & &\text{because $\xi_i$ is budget balanced}
\end{align*}

Next, we show that the multi buyer mechanism based on $\gamma$-pre-rounding is in $\Mechs$ and it is dominant
strategy incentive compatible (DSIC). The fact that this mechanism is in $\Mechs$ follows from assumption
\myref{A}{asm:decompose} and the fact that for each item $j$, the corresponding magician breaks no more than
$\NumUnits_j$ wands, which means no more than $\NumUnits_j$ units are allocated at any instance. To show that
it is DSIC, observe that the only way the reports of other buyers could affect the outcome of buyer $i$ is by
affecting $\ItemSubset_i$, yet $\Cons{\Mech_i}{\Induced{\XAAllocC_i}{\ItemSubset_i}}$ is a mechanism in
$\Mechs_i$, so it is incentive compatible mechanism for any choice of $\ItemSubset_i$; therefore the
resulting mechanism is DSIC. Observe that this mechanism also preserves all of the ex post properties of each
$\Mech_i$ (e.g., individual rationality).
\end{proof}

\phantomsection
\begin{proof}[Proof of \autoref{thm:post}]
\label{proof:thm:post}%

First, we show that each $\ItemSubset_i$ includes each item $j$ with probability exactly $\gamma$. Observe
that for each item $j$, a sequence of $\NumAgents$ boxes are presented to the $j^{th}$ magician with
probabilities $\XAAllocA_{1j}, \ldots, \XAAllocA_{\NumAgents j}$ written on them. Since $\gamma \in
\RangeAB{0}{\gamma_\NumUnits}$ and $\sum_i \XAAllocA_{ij} \le \NumUnits_j$ and because each
$\Cons{\Mech_i}{\XAAllocC_i}$ allocates each item $j$ with probability exactly $\XAAllocA_{ij}$, we can argue
that each box is opened with probability exactly $\gamma$ (see \autoref{thm:magician} and
\autoref{def:magician_param}); therefore $\ItemSubset_i$ includes each item $j$ with probability exactly
$\gamma$.

Next, we show that $\gamma$-post-rounding obtains in expectation at least $\gamma\alpha$-fraction of the
expected objective value of the optimal mechanism in $\Mechs$. Note that by \autoref{thm:decomposition} the
expected objective value of the optimal mechanism in $\Mechs$ is upper bounded by the optimal value of
\eqref{P:OPT'} which is $\sum_i \Rev_i(\XAAllocC_i)$; therefore, it is enough to show that
$\Ex[\Type_i,\XPAlloc_i,\Pay_i]{\Obj_i(\Type_i, \XPAlloc_i, \Pay_i)} \ge \gamma \alpha \Rev_i(\XAAllocC_i)$,
i.e., the expected objective value that $\gamma$-post-rounding obtains from buyer $i$ is at least $\gamma
\alpha \Rev_i(\XAAllocC_i)$. Let $\xi_i$ be a budget balanced cross monotonic cost share function for
$\Obj_i$ as required by \myref{A'}{asm:post:obj}; then

\begin{align*}
    \Ex[\Type_i,\XPAlloc_i,\Pay_i]{\Obj_i(\Type_i, \XPAlloc_i, \Pay_i)}
        &= \Ex[\Type_i,\XPAlloc_i,\Pay_i]{\Obj_i(\Type_i, \XPAlloc_i, 0)+ \PayFactor_i \Pay_i} & &\text{By \myref{A'}{asm:post:obj}}\\
        &= \Ex[\Type_i,\XPAlloc_i,\Pay_i]{\sum_{j \in \XPAlloc_i} \xi_i(j, \Type_i, \XPAlloc_i) + \PayFactor_i \Pay_i} & &\text{because $\xi_i$ is budget balanced}\\
        &\ge \Ex[\Type_i,\XPAlloc_i,\Pay_i]{\sum_{j \in \XPAlloc_i} \xi_i(j, \Type_i, \XPAlloc'_i) + \PayFactor_i \Pay_i} & &\text{because $\xi_i$ is cross monotonic} \\
        &= \Ex[\Type_i,\XPAlloc'_i,\Pay'_i, \ItemSubset_i]{\sum_{j \in \XPAlloc'_i} \Prx{j \in \ItemSubset_i} \xi_i(j, \Type_i, \XPAlloc'_i) + \PayFactor_i \gamma \Pay'_i} \\
        &= \Ex[\Type_i,\XPAlloc'_i,\Pay'_i]{\sum_{j \in \XPAlloc'_i} \gamma \xi_i(j, \Type_i, \XPAlloc'_i) + \PayFactor_i \gamma \Pay'_i} \\
        &= \gamma \Ex[\Type_i,\XPAlloc'_i,\Pay'_i]{\Obj_i(\Type_i, \XPAlloc'_i, \Pay'_i)} \\
        &\ge \gamma \alpha \Rev_i(\XAAllocC_i) \\
\end{align*}
Note that the last step follows because $\Ex[\Type_i,\XPAlloc'_i,\Pay'_i]{\Obj_i(\Type_i, \XPAlloc'_i,
\Pay'_i)}$ is exactly the expected objective value of $\Cons{\Mech_i}{\XAAllocC_i}$ which is at least $\alpha
\Rev_i(\XAAllocC_i)$.

Next, we show that $\gamma$-post-rounding is Bayesian incentive compatible (BIC) and does not over allocate
any item. Consider any arbitrary buyer $i$. Observe that each item $j \in \XPAlloc'_i$ is included in
$\XPAlloc_i$ with a probability of exactly $\gamma$; furthermore, by \myref{A'}{asm:post:valuations},
valuations of buyer $i$ can be interpreted as a weighted rank function of some matroid; WLOG, we may assume
that $\XPAlloc'_i$ is always an independent set of this matroid\footnote{Otherwise, we could replace
$\XPAlloc'_i$ by a maximum weight independent subset of $\XPAlloc'_i$.}; therefore, the valuation of the
buyer for the items in $\XPAlloc'_i$ is additive; consequently, her expected valuation for $\XPAlloc_i$ is
exactly $\gamma$ times her valuation for $\XPAlloc'_i$. Observe that both the expected valuation and the
expected payment of buyer $i$ are scaled by $\gamma$ for any outcome of $\Cons{\Mech_i}{\XAAllocC_i}$ and
$\Cons{\Mech_i}{\XAAllocC_i}$ itself was incentive compatible; therefore, the resulting mechanism is
incentive also incentive compatible. However, the final mechanism is only Bayesian incentive compatible
because $\ItemSubset_i$ depends on the typers/reports of buyers other that $i$ \footnote{I.e., $\Prx{j \in
\ItemSubset_i}$ is equal to $\gamma$ only in expectation over other buyers' reports}. Also note that the
mechanism does not over allocate any item, because for each unit of item $j$ being allocated one of the
$\NumUnits_j$ wands of the $j^{th}$ magician breaks.
\end{proof}

\phantomsection
\begin{proof}[Proof of \autoref{lem:concave}]
\label{proof:lem:concave}%
The proof is very similar to the proof of \autoref{thm:opt_concave}. To show that $\Rev(\XAAllocC)$ is
concave, it is enough to show that for any $\XAAllocC$ and $\XAAllocC'$ and any $\beta \in \RangeAB{0}{1}$,
$\Rev(\beta \XAAllocC+(1-\beta) \XAAllocC') \ge \beta \Rev(\XAAllocC)+(1-\beta)\Rev(\XAAllocC')$. Let $y$ and
$y'$ be the optimal assignments for the convex program subject to $\XAAllocC$ and $\XAAllocC'$ respectively;
then $y''= \beta y + (1-\beta)y'$ is also a feasible assignment for the convex program subject to $\beta
\XAAllocC+(1-\beta) \XAAllocC'$; therefore, $\Rev(\beta \XAAllocC+(1-\beta) \XAAllocC')$ must be at least
$\SObj(\beta y+(1-\beta) y')$; on the other hand $\SObj(\wdot)$ is concave, so $\SObj(\beta y+(1-\beta) y')
\ge \beta \SObj(y)+(1-\beta)\SObj(y') = \beta \Rev(\XAAllocC)+(1-\beta)\Rev(\XAAllocC')$. That proves the
claim.
\end{proof}

 \phantomsection
\begin{proof}[Proof of \autoref{lem:unc}]
\label{proof:lem:unc}%
Let $\mu=\sum_j \Ex{\ZRev_j}$. Define the random variables $\YRev_j = \max(\YRev_{j-1}-\ZRev_j, 0)$ and
$\YRev_0 = \Budget$. Observe that for each $j$, $\YRev_j = \max(\Budget-\sum_{r=1}^j \ZRev_r, 0)$, so
$\min(\sum_{r=1}^j \ZRev_r, \Budget)+\YRev_j= \Budget$. Therefore $\Ex{\min(\sum_{r=1}^j \ZRev_r,
\Budget)}+\Ex{\YRev_j}=\Budget$ and to prove the theorem it is enough to show that $\Ex{\YRev_\NumItems} \le
\frac{1}{e^{\mu/\Budget}} \Budget$. We first show that
\begin{align}
    \YRev_j & \le (1-\frac{\Ex{\ZRev_j}}{\Budget}) \YRev_{j-1}. \label{eq:Wj} \tag{$\YRev_j$}
\intertext{Consequently}
    \YRev_\NumItems  & \le \Budget  \prod_{j=1}^\NumItems (1-\frac{\Ex{\ZRev_j}}{\Budget}) \\
                    & \le \Budget  \frac{1}{e^{\mu/\Budget}}
\end{align}
The last inequality follows because $\prod_{j=1}^\NumItems (1-\frac{\Ex{\ZRev_j}}{\Budget})$ takes its
maximum when $\frac{\Ex{\ZRev_j}}{\Budget} = \frac{\mu}{\NumItems \Budget}$ (for all $j$) and $\NumItems \to
\infty$.

To prove the second inequality in the statement of the lemma we can use the fact that $(1-x^a) \ge (1-x)a$
for any $a \le 1$, and conclude that $(1-\frac{1}{e^{\mu/\Budget}})\Budget \ge
(1-\frac{1}{e^{\min(\mu,\Budget)/\Budget}})\Budget \ge (1-\frac{1}{e}) \frac{\min(\mu,\Budget)}{\Budget}
\Budget = (1-\frac{1}{e}) \min(\mu,\Budget)$.

To complete the proof, we prove inequality \eqref{eq:Wj} as follows.
\begin{align*}
    \Ex{\YRev_j}   & = \Ex{\max(\YRev_{j-1}-\ZRev_j, 0)} \\
                    & \le \Ex{\max(\YRev_{j-1}-\ZRev_j \frac{\YRev_{j-1}}{\Budget}, 0)} && \text{because $\frac{\YRev_{j-1}}{\Budget} \le 1$ } \\
                    & = \Ex{\YRev_{j-1}-\ZRev_j \frac{\YRev_{j-1}}{\Budget}} && \text{because $\frac{\ZRev_j}{\Budget} \le 1$ } \\
                    & = \Ex{\YRev_{j-1}}-\frac{1}{\Budget}\Ex{\ZRev_j \YRev_{j-1}} \\
                    & \le \Ex{\YRev_{j-1}}-\frac{1}{\Budget}\Ex{\ZRev_j} \Ex{\YRev_{j-1}} && \text{because $\ZRev_j$ and $\YRev_{j-1}$ are independent.} \\
                    & = (1-\frac{\Ex{\ZRev_j}}{\Budget}) \Ex{\YRev_{j-1}}
\end{align*}
\end{proof}

\phantomsection
\begin{proof}[Proof of \autoref{lem:dp}]
\label{proof:lem:dp}%
To prove the claim, it is enough to show that $\frac{\TailRev_1}{\sum_j \XAAlloc_j \Price_j} \ge
\frac{1}{2}$. WLOG, we may assume that $\sum_j \Price_j \XAAlloc_j = 1$ since we can scale
$\Price_1,\ldots,\Price_\NumItems$ by a constant $c=\frac{1}{\sum_j \XAAlloc_j \Price_j}$ and this will also
scale $\TailRev_1,\ldots, \TailRev_\NumItems$ by the same constant $c$, so their ratio is not be affected.
Consider the following LP and observe that $\XAAlloc_j$, $\Price_j$, and $\TailRev_j$, as defined in the
statement of the lemma, form a feasible assignment for this LP. If we show that the optimal objective value
of the LP is bounded below by $\frac{1}{2}$, any feasible assignment yields an objective value of at least
$\frac{1}{2}$, and therefore $\frac{\TailRev_1}{\sum_j \XAAlloc_j \Price_j} \ge \frac{1}{2}$ which proves the
lemma. In the following LP, $\Price_j$ and $\TailRev_j$ are variables and everything else is constant.

\begin{align*}
    \text{minimize} &   &\qquad & \TailRev_1 & \qquad & \notag \\
    \text{subject to}&  &       \TailRev_j &\ge \XAAlloc_j \Price_j + (1-\XAAlloc_j) \TailRev_{j+1}, &  & \forall j \in \RangeN{\NumItems} \tag{$\alpha_j$}\label{eq:alpha_j} \\
                    &   &       \TailRev_j &\ge \TailRev_{j+1}, & &  \forall j \in \RangeN{\NumItems} \tag{$\beta_j$}\label{eq:beta_j} \\
                    &   &       \sum_{j=1}^\NumItems \XAAlloc_j \Price_j &\ge 1 \tag{$\gamma$} \label{eq:gamma} \\
                    &   &       \Price_j &\ge 0, & & \forall j \in \RangeN{\NumItems} \\
                    &   &       \TailRev_j &\ge 0, & & \forall j \in \RangeN{\NumItems+1}
\end{align*}

To prove that the optimal value of the above LP is bounded below by $\frac{1}{2}$, we construct a feasible
assignment for its dual LP, obtaining a value of $\frac{1}{2}$. The dual LP is as follows.

\begin{align*}
    \text{maximize} &   &\qquad & \gamma & \qquad & \notag \\
    \text{subject to}&  &      \gamma &\le \alpha_j, &  & \forall j \in \RangeN{\NumItems} \tag{$\Price_j$} \\
                    &   &       \alpha_1 + \beta_1 &\le 1 \tag{$\TailRev_1$} \\
                    &   &       \alpha_j + \beta_j &\le (1-\XAAlloc_{j-1}) \alpha_{j-1} + \beta_{j-1}, & & \forall j \in \RangeMN{2}{\NumItems} \tag{$\TailRev_j$} \\
                    &   &       0 &\le (1-\XAAlloc_{\NumItems}) \alpha_{\NumItems} + \beta_{\NumItems} \tag{$\TailRev_{\NumItems+1}$} \\
                    &   &       \alpha_j &\ge 0, \beta_j \ge 0, \gamma \ge 0, & & \forall j \in \RangeN{\NumItems}
\end{align*}

We construct an assignment for the dual LP as follows. Set $\alpha_j = \gamma$ and set $\beta_j =
\beta_{j-1}-\XAAlloc_{j-1} \gamma$ for all $j$, except that for $j=1$ we set $\beta_1 = 1-\gamma$. From this
assignment we get $\beta_j = 1-\gamma - \gamma \sum_{\ell=1}^{j-1}\XAAlloc_\ell$. Observe that we get a
feasible assignment as long as all $\beta_j$ resulting from this assignment are non-negative. Furthermore, it
is easy to see that $\beta_j \ge 1 - \gamma -\gamma \sum_{\ell=1}^\NumItems \XAAlloc_\ell \ge 1 - 2\gamma$
because $\sum_j \XAAlloc_j \le 1$. Therefore, by setting $\gamma=\frac{1}{2}$, all $\beta_j$ are non-negative
and we always get a feasible assignment for the dual LP with an objective value of $\frac{1}{2}$, which
completes the proof.
\end{proof}

\end{document}